            \newcommand{\shortminus}{-}
        \newcommand{\colonequals}{\coloneq}
\NewDocumentCommand{\xroot}{om}{%
  \IfNoValueTF{#1}
    {\sqrt{\vphantom{#2}}#2}
    {\sqrt[#1]{\vphantom{#2}}#2}%
}
\definecolor{MyBrown}{rgb}{0.3,0,0}
\definecolor{MyBlue}{rgb}{0,0,0.3}
\definecolor{MyRed}{rgb}{0.4,0,0.1}
\definecolor{MyGreen}{rgb}{0,0.4,0}
\newcommand{\periodafter}[1]{#1{.}}
\titleformat\paragraph[runin]
\titlespacing\paragraph{\parindent}
                       {1ex minus 0.5ex}
                       {3pt}
\titlespacing*\subsubsection{0pt}{0pt}{0pt}
\definecolor{lightblue}{rgb}{.80,.95,1}
\providecommand{\U}[1]{\protect\rule{.1in}{.1in}}
\patchcmd{\thmhead}{(#3)}{#3}{}{}
\newtheorem{assumption}{Assumption}
\newtheorem{corollary}{Corollary}
\newtheorem{handnum-corollary}{Corollary}
\newtheorem{definition}{Definition}
\newtheorem{lemma}{Lemma}
\newtheorem{handnum-lemma}{Lemma}
\newtheorem{proposition}{Proposition}
\newtheorem{example}{Example}
\newtheorem{handnum-proposition}{Proposition}
\newtheorem{remark}{Remark}
\newcommand{\ac}[1]{\textscale{0.9}{#1}}
\newcommand\abs[1]{\lvert{#1}\rvert}
\renewcommand{\qedsymbol}{\rule{1.0ex}{1.2ex}} 
\DeclareRobustCommand{\qed}{%
\ifmmode \mathqed
 \else
 \leavevmode\unskip\penalty9999 \hbox{}\nobreak
 \quad\hbox{\qedsymbol}
\fi
}
\renewenvironment{proof}[1][\proofname]{\par
\pushQED{\qed}%
\normalfont \topsep6\p@\@plus6\p@\relax
\trivlist
\item\relax
{\bfseries
#1\@addpunct{.}}\hspace\labelsep\ignorespaces
}{%
\popQED\endtrivlist\@endpefalse
}
\newcommand*{\NoBreakPar}{\par\nobreak\@afterheading}
\newenvironment{figurenotes}[1][Note]{\begin{minipage}[t]{\linewidth}\footnotesize{\itshape#1: }}{\end{minipage}}
\DeclareMathOperator* \cov {cov}
\DeclareMathOperator* \var {var}
\DeclareMathOperator* \bal {bal} 
\renewcommand\thesection{\arabic{section}}
\renewcommand\thesubsubsection{\thesection.\arabic{subsubsection}}
\let\myfootnote\footnote
\renewcommand{\footnote}[1]{\myfootnote{~#1}}
\begin{document}

\thispagestyle{fandl}

\renewcommand{\thefootnote}{\fnsymbol{footnote}}
\newcommand{\acknowledgements}{\hspace{0.1em} {Author Accepted Manuscript (not copyedited or formatted) for: Francesco Lancia, Alessia Russo, and Tim Worrall. "Intergenerational Insurance.'"  Forthcoming, \textit{Journal of Political Economy}, 2024.
Version of Record: \ac{DOI}: \href{https://dx.doi.org/10.1086/730206}{\urlstyle{rm}
  \nolinkurl{10.1086/730206}}. Original version circulated with the title ``Optimal Sustainable Intergenerational Insurance''.  We thank the editor and two anonymous referees along with Spiros Bougheas, Francesco Caselli, Gabrielle Demange, Mart\'{\i}n Gonzalez-Eiras, Sergey Foss, Alexander Karaivanov, Paul Klein, Dirk Krueger, Sarolta Lacz\'{o}, Costas Milas, Espen Moen, Iacopo Morchio, Nicola Pavoni, Jos\'e-V\'{\i}ctor R\'{\i}os-Rull, Karl Schlag, Kjetil Storesletten and Aleh Tsyvinski for helpful comments. The paper has also benefited from the comments of seminar participants at Cologne, the London School of Economics, New York University Abu Dhabi, Oslo and Warwick in addition to presentations at the \ac{NBER} Summer Institute on Macro Public Finance, the \ac{SED} Meeting in Edinburgh, the \ac{SAET} Conference in Faro, the \ac{CSEF-IGIER} Symposium on Economics and Institutions at Anacapri, the \ac{CEPR} European Summer Symposium in International Macroeconomics at Tarragona, the \ac{EEA-ESEM} Congress in Manchester, the Barcelona Graduate School of Economics Summer Forum and the Vienna Macroeconomics Workshop. Sergio Cappellini provided valuable research assistance. The second author gratefully acknowledges the support of \ac{STARS@UNIPD} Consolidator grant \ac{GENESIS} and the third author gratefully acknowledges the support of the \ac{UKRI} grant \ac{ES/L009633/1} and Leverhulme Trurst Research Fellowship \ac{RF-2023-415\textbackslash 7}. For the purpose of open access, the third author has applied a Creative Commons Attribution (\ac{CC-BY}) licence to any Author Accepted Manuscript version arising from this submission. Edited by Andrew Atkeson. }}
\title{
\textbf{Intergenerational Insurance}\footnote{\acknowledgements}}

\author[1]{Francesco Lancia}
\affil[1]{{\small \emph{Ca\textquotesingle\ Foscari University of Venice and the Centre for Economic Policy Research}}}
\author[2]{Alessia Russo}
\affil[2]{{\small \emph{University of Padua and the Centre for Economic Policy Research}}}
\author[3]{Tim Worrall}
\affil[3]{{\small \emph{University of Edinburgh}}}

\date{\small [Original Version, December 2020, Revised, December 2021, June 2023 and January 2024]}
\maketitle

\renewcommand{\abstractname}{}
\begin{abstract}
\noindent How should successive generations insure each other when the young can default on previously promised transfers to the old? This paper studies intergenerational insurance that maximizes the expected discounted utility of all generations subject to participation constraints for each generation. If complete insurance is unattainable, the optimal intergenerational insurance is history-dependent even when the environment is stationary.  The risk from a generational shock is spread into the future, with periodic \emph{resetting\/}. Interpreting intergenerational insurance in terms of debt, the fiscal reaction function is nonlinear and the risk premium on debt is lower than the risk premium with complete insurance.

\medskip



\textbf{JEL CODES}: D64; E21; H55.

\end{abstract}

\renewcommand*{\thefootnote}{\arabic{footnote}}
\setcounter{footnote}{0}
\pagebreak

\section{Introduction} \label{sec:intro}

Countries face economic shocks that result in unequal exposure to risk across generations. The Financial Crisis of 2008 and the \ac{COVID}-19 pandemic are two recent and notable examples.\footnote{\citet{Heathcoateetal20a} find that the Financial Crisis of 2008 had a negative impact on the older generation, while the young benefited from the fall in asset prices. \citet{Heathcoateetal20b} find that younger workers have been impacted to a greater extent by the response to the \ac{COVID}-19 pandemic because they disproportionately work in sectors that have been particularly adversely affected, such as retail and hospitality.} Faced with such shocks, it is desirable to share risk across generations. However, full risk sharing is not sustainable if it commits future generations to transfers they would not wish to make once they are born. The issue of the sustainability of intergenerational insurance is becoming increasingly relevant in many advanced economies as the relative standard of living of the younger generation has worsened in recent decades.\footnote{Part~A of the Supplemental Material reports changes in the relative standard of living of the young and the old for six \ac{OECD} countries using data from the Luxembourg Income Study Database.\label{fn:LIS}}
If this generational shift persists, future generations may be less willing to contribute to insurance arrangements than in the past. Therefore, a natural question to ask is how an optimal intergenerational insurance arrangement should be structured when there is limited enforcement of risk-sharing transfers. 

Despite its policy relevance, the literature on intergenerational insurance does not fully address this question. The normative approach in the literature investigates the optimal design of intergenerational insurance but assumes that transfers are mandatory, ignoring the issue of limited enforcement. Meanwhile, the positive approach highlights the political limits to intergenerational insurance while considering equilibrium allocations supported by a particular voting mechanism, which are not necessarily Pareto optimal.

In this paper, we examine optimal intergenerational insurance when subsequent generations can default on risk-sharing transfers promised to previous generations. We model the limited enforcement of transfers by assuming that transfers satisfy a participation constraint for each generation. This can be interpreted as requiring that the insurance arrangement be supported by each generation if put to a vote. An arrangement of risk-sharing transfers is \emph{sustainable\/} if it satisfies the participation constraint of every generation. \emph{Optimal\/} sustainable intergenerational insurance is the risk-sharing transfers that would be chosen by a benevolent social planner that maximizes the expected discounted utility of all generations subject to the participation constraints.

The model is simple. At each date, a new generation is born and lives for two periods. Each generation comprises a constant population of homogeneous agents with the population size normalized to one. Each agent receives an endowment of a single, nonstorable consumption good, both when young and old. Endowments are stochastic. Each generation is affected by an idiosyncratic shock (common to all agents within a generation) and an aggregate growth shock. We adopt the approach of \citet{Alvarez-Jermann01} and \citet{Krueger-Lustig10} and assume that preferences exhibit a constant coefficient of relative risk aversion (for simplicity, we concentrate on the case of logarithmic preferences) and that the idiosyncratic and growth shocks are independent and identically distributed. In this setting, the underlying economy is stationary. There are only two frictions. First, risk may not be allocated efficiently, even if the economy is dynamically efficient, because there is no market in which the young can share risk with previous generations \citep[see, for example,][]{Diamond77}. Second, the amount of risk that can be shared is limited because transfers between generations cannot be enforced. In particular, the old will not make a transfer to the young (since the old have no future). Conversely, the young may make a transfer to the old. However, the young will only do so if they receive promises for their old age that at least match their expected lifetime utility from autarky, and they anticipate that these promises will be honored by the next generation.

It is well known \citep[see, for example,][]{Aiyagari-Peled91} that if endowments are such that the young wish to defer consumption to old age at a zero net interest rate, then there are stationary transfers that improve upon autarky (Proposition~\ref{prop:st}). Under this condition, and assuming that the first-best transfers cannot be sustained, there is a trade-off between efficiency and providing incentives for the young to make transfers to the old. This trade-off is resolved by linking the utility the young are promised for their old age to the promise made to the young of the previous generation. The resulting optimal sustainable intergenerational insurance arrangement is history dependent, even though the economic environment is stationary.

To understand why there is history dependence, suppose that the first-best transfers would violate the participation constraint of the young in some endowment state. To ensure that the current transfer made by the young is voluntary, either the current transfer is reduced below the first-best level, or the promised transfers for their old age are increased. Both changes are costly since a lower current transfer reduces the amount of risk shared today while increasing the transfers promised to the current young for their old age tightens the participation constraints of the next generation and reduces the risk that can be shared tomorrow. Therefore, an optimal trade-off exists between reducing the current transfer and increasing the future promise. This trade-off depends both on the current endowment and the current promise. For example, consider some current endowment and a current promise such that the future promise for the same endowment state is higher than the current promise. If the same endowment state is repeated in the subsequent period, then the young in that period are called upon to make a larger transfer, which in turn requires a higher promise of future utility to them as well. Thus, the transfer depends not only on the current endowment but also on the past promise, and hence, the history of endowment shocks.

The optimal sustainable intergenerational insurance is found by solving a functional equation derived from the planner's maximization problem. The solution is characterized by policy functions for the consumption of the young (or equivalently, the transfer made to the old) and the future promised utility for their old age in each endowment state. Both policy functions depend on the current endowment and the current promise. For a given endowment, the consumption of the young is weakly decreasing in the current promise, while the future promise is weakly increasing in the current promise (Lemmas~\ref{lem:g} and~\ref{lem:c}). When the current endowment state is repeated, the policy function for the future promise has a unique fixed point, which (ignoring a boundary condition) equals the utility at the first-best outcome. Therefore, the future promise is higher than the current promise when it is less than the corresponding fixed point and lower than the current promise when the current promise is greater than the fixed point. When the promised utility is sufficiently low, there is some endowment state in which the participation constraint of the young does not bind. In that case, the future promise is \emph{reset\/} to the largest value that maximizes the planner's payoff.

When a participation constraint binds, the risk affecting one generation is spread to future generations. The resetting property shows, however, that the effect of a shock does not last forever. Moreover, it implies strong convergence to a unique invariant distribution (Proposition~\ref{prop:convergence}). The invariant distribution exhibits history dependence, and consumption fluctuates across states and over time, even in the long run. This starkly contrasts to the situation under either full enforcement of transfers or no risk. In the former case, the promised utility is constant over time, except possibly in the initial period (Proposition~\ref{prop:fbstm}). In the latter case, the promised utility is constant in the long run, although there may be an initial phase during which the promised utility falls (Proposition~\ref{prop:determ}). In both cases, the allocation is efficient in the long run. Thus, {\itshape both\/} risk and limited enforcement are necessary for history dependence and inefficiency in the long run.

Transfers to the old can be interpreted in terms of debt. Suppose that the planner issues one-period state-contingent bonds that trade at the state price determined by the corresponding intertemporal marginal rate of substitution and balances the budget by taxing or subsidizing the young. Given these bond prices and taxes, the young buy the correct quantity of state-contingent bonds to finance their optimal old-age consumption. It is then possible to use the model to study the dynamics of debt and to address the issues of debt valuation and sustainability, following the model-based approach introduced by \citet{Bohn95, Bohn98}.

When preferences are logarithmic, it is natural to measure debt relative to the endowment of the young. With debt measured in this way, there is a maximal debt limit and a debt policy function that determines the next-period debt as a function of the current debt and the next-period endowment share. This function is constant when debt is low but is nonlinear and strictly increasing when debt is above a critical threshold (Corollary~\ref{cor:debtf}). The debt policy function and the history of endowments determine the dynamics of debt. Debt rises or falls depending on the evolution of endowments but eventually resets to a minimum level, creating cycles of debt. The difference between debt and the revenue generated from issuing state-contingent bonds defines the fiscal reaction function that measures how the tax rate depends on debt. Absent enforcement frictions, the fiscal reaction function is linearly increasing in debt. However, with enforcement frictions, the fiscal reaction function is linear when debt is low but is nonlinear when debt is high. In particular, when debt is below the threshold, the amount of debt issued is independent of the current debt, while the price of state-contingent bonds decreases linearly in debt. Thus, bond revenue falls with debt, and the tax rate rises linearly. Above the threshold, two factors affect the fiscal reaction function. The price of state-contingent bonds decreases with debt, while bond issuance increases with debt according to the nonlinear debt policy function. The combined effect of these two factors results in a nonlinear fiscal reaction function.

The model also provides implications for asset pricing and the dependence of asset prices on debt (Proposition~\ref{prop:Ross}). Since the idiosyncratic and growth shocks are independent and identically distributed, the implied conditional yields are the sum of a growth-adjusted component and a constant given by the logarithm of the average growth rate. The price of state-contingent bonds decreases with debt, which implies that the conditional yields, including the risk-free rate, increase with debt. The discount factor of the planner and the average growth rate determine the yield on the long bond. However, the long-short spread may be positive or negative. The dynamics of debt imply that the long-short spread is positive when debt is low, and the young are poor because, in this case, debt will rise, leading to higher expected future yields. Likewise, the long-short spread is negative when debt is high, and the young are rich because debt will fall, leading to lower expected future yields.

The variability of yields and their decomposition into growth-adjusted and growth-dependent components is also significant for debt valuation. There is a linear decomposition of the risk premium on debt into a growth-adjusted component and a component depending on the aggregate risk (Proposition~\ref{prop:mrp}). The return on bonds increases with the endowment of the young next period, as does the marginal utility of consumption of the old next period. Thus, the return on bonds is positively correlated with the stochastic discount factor for a given debt, resulting in a risk premium on debt lower than the risk premium on aggregate risk. In the absence of enforcement frictions, this gap is zero. When there are enforcement frictions, debt is a hedge against the endowment risk, and this reduces the risk premium on debt. Consequently, for a fixed plan of future primary surpluses, higher debt can be sustained compared to a case where the future surpluses are discounted using the risk premium on aggregate risk. This gap between the risk premiums on aggregate risk and debt offers a potential resolution to the \textit{debt valuation puzzle\/} posed by \citet{Jiangetal19}, who find that the value of \ac{US} debt exceeds the present value of future primary surpluses when discounted by the risk premium on aggregate risk.\footnote{For an overview of the debt valuation and sustainability, see, for example, \citet{Reis22}, \citet{Willems-Zettelmeyer22} and \citet{Jiangetal23}.} Moreover, the risk premium on debt varies with debt. In particular, it rises or falls depending on whether the expected return on debt increases with debt at a faster or slower rate than the risk-free interest rate.

In an example with two endowment states, we provide a closed-form solution for the bound on the variability of the implied yields and show that the invariant distribution of debt is a transformation of a geometric distribution (Proposition~\ref{prop:canonic}). Numerically, the solution can be found using a shooting algorithm without the need to solve a functional equation. In this example, the risk premium increases with debt, leading to a reduction in the gap between the risk premium on aggregate risk and the risk premium on debt.

\paragraph{Literature} The paper builds on the literature on risk sharing in models with overlapping generations. In most of this literature, transfers are mandatory, and consideration is restricted to stationary transfers \citep[see, for example,][]{Shiller99,Rangel-Zeckhauser01}, in contrast to the voluntary and history-dependent transfers considered here. Our result on history dependence is foreshadowed in a mean-variance setting by \citet{Gordon-Varian88}, who establish that any time-consistent optimal intergenerational risk-sharing agreement is nonstationary. \citet{Ball-Mankiw07} analyze risk sharing when generations can trade contingent claims before they are born. They find that idiosyncratic shocks are spread equally across generations, and consumption follows a random walk, as in \citet{Hall78}. Such an allocation is not sustainable since it violates the participation constraint of some future generation almost surely. In contrast, we show that although the effects of a shock can be prolonged, they are unevenly spread
across future generations, and resetting ensures that they cannot last forever.

By interpreting the transfer to the old as debt, we complement the extensive literature on debt sustainability and the fiscal reaction function that began with \citet{Bohn95,Bohn98}. Our result on the nonlinearity of the fiscal reaction function echoes the discussion of \textit{fiscal fatigue\/}, which argues that the primary fiscal balance responds sluggishly to rising debt when debt is high because of the adverse implications of debt, such as the risk of default \citep[see, for example,][]{Mendoza-Ostry08, Ghoshetal13}. Despite the absence of default in our model, enforcement constraints generate nonlinearity in the fiscal reaction function. \citet{Bhandarietal17} also study optimal fiscal policy and debt dynamics but in a model with infinitely-lived and heterogeneous agents where markets are incomplete because of constraints on tax policy. \citet{Brunnermeieretal22a} provide a result similar to ours that the risk premium on debt is lower than the risk premium on aggregate risk. In their model, infinitely-lived agents must retain a fixed proportion of their idiosyncratic risk. Government debt serves as a hedge against idiosyncratic risk, and consequently, debt becomes a negative beta asset. The authors emphasize that debt can command a bubble premium, which may add to the safety of government debt. In contrast to \citet{Brunnermeieretal22a}, our model has no bubble component, and the extent of risk sharing is determined endogenously, depending on the history of endowment shocks.

Methodologically, the paper relates to the literature on risk sharing and limited enforcement frictions with infinitely-lived agents. Two polar cases have been examined: one with two infinitely-lived agents \citep[see, for example,][]{Thomas-Worrall88, Chari-Kehoe90, Kocherlakota96} and the other with a continuum of infinitely-lived agents \citep[see, for example,][]{Thomas-Worrall07a, Krueger-Perri11, Broer13}. The overlapping generations model considered here has a continuum of agents, but only two agents are alive at any point in time. The model is not nested in either of the two infinitely-lived agent models but fills an essential gap in the literature by analyzing optimal intergenerational insurance with limited enforcement frictions. Here, we establish strong convergence to the invariant distribution, whereas \citet{Krueger-Perri11} and \citet{Broer13} consider the solution only at an invariant distribution and \citet{Thomas-Worrall07a} discuss convergence only in a particular case.

\paragraph{Plan of Paper} Section~\ref{sec:model} sets out the model. Section~\ref{sec:bench} considers two benchmarks: one with full enforcement of transfers from the young to the old and the other without risk. Section~\ref{sec:opt} characterizes optimal sustainable intergenerational insurance and Section~\ref{sec:conv} establishes convergence to an invariant distribution on a countable ergodic set. Section~\ref{sec:debt} provides an interpretation of the optimum in terms of debt and derives the fiscal reaction function. Section~\ref{sec:risk} discusses the implications for asset pricing and Section~\ref{sec:debtval} considers the valuation of debt. Section~\ref{sec:example} presents an example with two endowment states. Section~\ref{sec:conc} concludes. The Appendix contains the proofs of the main results.\footnote{Additional proofs and further details can be found in the Supplemental Material.} 

\section{The Model}\label{sec:model}
        
Time is discrete and indexed by~$t=0,1,2,\ldots,\infty$. The model consists of a pure exchange economy with an overlapping generations demographic structure. At each time~$t$, a new generation is born and lives for two periods. The generation born at date~$t$ has a population of $N_t$ homogeneous agents. We assume that there is no population growth and normalize $N_t=1$, so it is as if each generation has a single agent.\footnote{The assumption that agents of the same generation are homogeneous makes it possible to focus on intergenerational risk sharing. However, it does mean that we ignore questions about inequality within generations and its evolution over time. Although we maintain the assumption of a constant population, the qualitative properties of the model are unchanged if there is a constant rate of population growth. Part~D of the Supplemental Material examines the impact of a demographic shock and shows how the effect of this shock can be amplified and prolonged.} Each agent is young in the first period of life and old in the second. The economy starts at $t=0$ with an initial old agent and an initial young agent. Since time is infinite, the initial old agent is the only agent that lives for just one period.

At each time~$t$, agents receive an endowment of a perishable consumption good. Endowments are finite and strictly positive. The endowment of the young and the old at time~$t$ are $e^y_t$ and $e^o_t$ with an aggregate endowment of $e_t=e^y_t+e^o_t$. The endowment \emph{share\/} of the young is $s_t\colonequals e^y_t/e_t$ (the endowment share of the old is $1-s_t$), and the gross \emph{growth\/} rate of the aggregate endowment is $\gamma_t\colonequals e_t/e_{t-1}$. There is both idiosyncratic (share of the generation's endowment) risk and aggregate (growth) risk. The sequences of random variables $(s_t\mathpunct{:} t\geq0)$ and $(\gamma_t\mathpunct{:}t\geq0)$ take values in finite sets $\mathcal{I}$ and $\mathcal{J}$, respectively, where $\abs{\mathcal{I}}=I\geq2$ and $\abs{\mathcal{J}}=J\geq1$. The pair $\rho_t\colonequals (s_t,\gamma_t)$ taking values in $\mathcal{P}\subseteq \mathcal{I}\times \mathcal{J}$ follows a finite-state, aperiodic, time-homogeneous Markov process with the probability of transiting from~$\rho_t$ to state~$\rho_{t+1}$ next period given by $\varpi(\rho_t, \rho_{t+1})$.

Denote the history of endowment shares and growth rates up to and including time~$t$ by $s^{t}:=(s_{0},s_{1},...,s_{t})  \in\mathcal{I}^{t}$ and $\gamma^{t}:=(\gamma_{0},\gamma_{1},...,\gamma_{t})  \in\mathcal{J}^{t}$ and let $\rho^{t}:=(  \rho_{0},\rho_{1},...,\rho_{t})  \in\mathcal{P}^{t}$. The distribution of $\rho_0$ is given by the function $\varpi(\rho_0)$ and the probability of reaching the history $\rho^{t}$ is $\varpi(\rho^{t})=\varpi(\rho^{t-1})\varpi(\rho_{t-1}, \rho_{t})$. Hence, the aggregate endowment at time~$t$ is the random variable $e_t=\prod_{k=0}^{t}\gamma_k$ with $\gamma_0=e_0$.

There is complete information. Endowments depend only on the current state, whereas consumption can, in principle, depend on the history of states. Denote the per-period consumption of the young by $C(\rho^{t})$ and the corresponding consumption share by $c(\rho^{t})=C(\rho^t)/e_t$. There is no technology to store the endowment from one period to the next, and hence, the aggregate endowment is consumed each period. Consequently, the per-period consumption of the old is $e_t-C(\rho^{t})$ and the corresponding consumption share is $1-c(\rho^{t})$. In autarky, agents consume only their own endowments, that is, the consumption share of the young is $s_t$, and the consumption share of the old is $1-s_t$ for all $t$ and $(\rho^{t-1},\rho_{t})$.

Each generation is born after that period's uncertainty is resolved when the growth rate of the economy and the endowment shares of the young and the old are known. Therefore, after birth, a generation only faces uncertainty in old age, and there is no insurance market in which the young can insure against their endowment risk. Let $\{C\} = \{C(\rho^t)\mathpunct{:} t\geq0, \rho^t \in \mathcal{P}^t\}$ denote a history-contingent consumption stream of the young. Then, the lifetime utility gain over autarky for a generation born after the history $\rho^t$ is:
\begin{gather*}
U\left(\{C\}\mathpunct{;} \rho^t\right) \colonequals u(C(\rho^t)) \shortminus u(e^y_t) + \beta\sum\nolimits_{\rho_{t+1}}\!\!\!\!\varpi
(\rho_t, \rho_{t+1})\left(u(e_{t+1}\shortminus C(\rho^t, \rho_{t+1}))\shortminus u(e^o_{t+1})\right),
\end{gather*}
where $u(\cdot)$ is the per-period utility function, common to both the young and the old, and $\beta\in(0,1]$ is the generational discount factor. We assume the per-period utility function is logarithmic, $u(\cdot)=\log(\cdot)$. Hence, the preferences of an agent can be expressed in terms of consumption and endowment shares. In particular, since $e^y_t=s_te_t$ and $C(\rho^t)=c(\rho^t)e_t$, it follows that $u(C(\rho^t)) - u(e^y_t)=\log(c(\rho^t))-\log(s_t)$ and $U(\{C\}\mathpunct{;} \rho^t) = U(\{c\}\mathpunct{;} \rho^t)$ where
\begin{gather*}
U\left(\{c\}\mathpunct{;} \rho^t\right) \colonequals \log(c(\rho^t)) \shortminus \log(s_t) + \beta\sum\nolimits_{\rho_{t+1}}\!\!\!\!\varpi
(\rho_t, \rho_{t+1})\left(\log(1\!\shortminus\! c(\rho^t, \rho_{t+1}))\shortminus\log(1\shortminus s_{t+1})\right).
\end{gather*}
We call the history-contingent stream of consumption shares $\{c\}= \{c(\rho^t)\mathpunct{:} t\geq0, \rho^t \in \mathcal{P}^t\}$ an \emph{intergenerational insurance rule\/} since it determines how consumption is allocated between the young and the old for any history $\rho^t$. Since storage is not possible and because the young are born after uncertainty is resolved, the only means of achieving intergenerational insurance is through transfers between the young and the old. We assume that there is a benevolent social planner who chooses an intergenerational insurance rule of history-contingent transfers to maximize a discounted sum of the expected utilities of all generations. Let the planner's expected discounted utility gain over autarky, conditional on the history~$\rho^t$, be
\begin{gather*}\label{eqn:Vc}
V\left(\{c\}\mathpunct{;} \rho^t\right) \colonequals \frac{\beta}{\delta}\left(\log(1\shortminus c(\rho^t))\shortminus\log(1\shortminus s_{t})\right) \!+\!\mathbb{E}_{t}\left[\sum\nolimits_{j=t}^{\infty}\delta^{t-j}U\left(\{c\}\mathpunct{;} \rho^{j}\right)\right]
\end{gather*}
where $\mathbb{E}_{t}$ is the expectation over future histories at time~$t$. The planner's discount factor is $\delta\in(0,1)$, and the weight on the utility of the initial old is $\beta/\delta$.\footnote{%
The assumption of geometric discounting for the planner is common \citep[see, for example,][]{Fahri-Werning07}. Using a weight of $\beta/\delta$ for the initial old preserves the same relative weights on the young and the old, including the initial old, in every period.}

To maximize the discounted sum of expected lifetime utilities, the planner must respect the constraint that transfers are voluntary.\footnote{The assumption that the transfer is voluntary can be interpreted as requiring that the intergenerational insurance rule is supported by each generation if put to a vote.} That is, the planner must respect the constraint that neither the old nor the young would be better off in autarky than adhering to the specified transfers for any history of shocks. For the old, this means they will not make a positive transfer to the young because there is no future benefit to offset such a transfer. Hence, the consumption of the young cannot exceed their endowment, or equivalently,
\begin{gather}\label{eqn:nonneg1}
c(\rho^t) \leq s_t \quad\text{for all $t\geq0$ and $\rho^t\in \mathcal{P}^t$.}
\end{gather}
The analogous participation constraint for the young requires that the conditional transfers promised for their old age sufficiently compensate for the transfer made when young so that they are no worse off than reneging on the transfer today and receiving the corresponding autarkic lifetime utility.  That is,
\begin{gather}
U\left(\{c\}; \rho^t\right) \geq 0 \quad \text{for all $t\geq0$ and $\rho^t\in \mathcal{P}^t$.} \label{IC}%
\end{gather}
For expositional simplicity, let the initial state $\rho_0$ be given.\footnote{The analysis is easily generalized to any given initial distribution $\varpi(\rho_0)$.} Hence, at $t=0$, the planner chooses $\{c\}$ to maximize: 
\begin{gather}
V\left(\{c\}; \rho_0\right)\mathpunct{,}\label{eqn:obj}
\end{gather}
subject to the constraint set $\Lambda\colonequals\{  \{c\} \mid \text{\eqref{eqn:nonneg1}  and  {\eqref{IC}}}\}$. Since utility is strictly concave, and the constraints in~\eqref{IC} are linear in utility, the planner's objective in equation~\eqref{eqn:obj} is concave and the constraint set $\Lambda$ is convex and compact.
\begin{definition}\label{def:sustain} An intergenerational insurance rule is \emph{sustainable\/} if $\{c\}\in\Lambda$.
\end{definition}

\begin{definition}\label{def:opt} An intergenerational insurance rule is \emph{optimal\/} if it is sustainable and it maximizes the objective in equation~\eqref{eqn:obj} subject to the constraint that the initial old receive a utility from their consumption share of at least $\bar{\omega}_0$:%
\begin{gather}
\log(1-c(\rho_{0}))\geq\bar{\omega}_0\mathpunct{.} \label{PK0}%
\end{gather}
\end{definition}
We introduce constraint~\eqref{PK0} with an exogenous initial target utility of $\bar{\omega}_0$ because it is useful when considering the evolution of the optimal sustainable intergenerational insurance rule in Section~\ref{sec:opt}.\footnote{The initial target utility may also depend on the initial state. Varying $\bar{\omega}_0$ traces out the Pareto frontiers that trade-off the utility of the old against the planner's valuation of the expected discounted utility of all future generations.} However, we will return to the case where the planner chooses the initial $\bar{\omega}_0$.

Since  $U(\{C\}\mathpunct{;} \rho^t) = U(\{c\}\mathpunct{;} \rho^t)$ and utility is logarithmic, the objectives and constraints are equivalent whether consumption is expressed in levels or shares. That is, the economy with stochastic growth is equivalent to an economy with a constant endowment and consumption expressed as shares of the aggregate endowment. The growth rate of the consumption levels is simply the growth rate of the consumption shares multiplied by the growth rate of the aggregate endowment.

\begin{remark}\label{rem:g}
For preferences that exhibit constant relative risk aversion, this equivalence property is well known to hold in models of idiosyncratic and aggregate risk with infinitely-lived agents \citep[see, for example,][]{Alvarez-Jermann01, Krueger-Lustig10}. An analogous extension can be shown to hold here by defining growth-adjusted transition probabilities and discount factors to satisfy the following:
\begin{gather*}
\hat{\varpi}(\rho_t, \rho_{t+1}) \colonequals \frac{\varpi(\rho_t,\rho_{t+1})(\gamma_{t+1})^{1\shortminus\alpha}}{\sum_{\rho_{t+1}}\varpi(\rho_t,\rho_{t+1})(\gamma_{t+1})^{1\shortminus\alpha}}
\enspace\mbox{and}\enspace \frac{\hat{\beta}(\rho_t)}{\beta}\! = \!\frac{\hat{\delta}(\rho_t)}{\delta}  \colonequals \sum_{\rho_{t+1}}\varpi(\rho_t,\rho_{t+1})(\gamma_{t+1})^{1\shortminus\alpha},
\end{gather*}
where $\alpha$ is the coefficient of relative risk aversion.
\end{remark}

In what follows, we assume that the shocks to endowment shares and growth rates are independent and are identically and independently distributed (hereafter, i.i.d.).
\begin{assumption}\label{ass:iid} \begin{enumerate*}[label=(\roman*)]
\item The state~$\rho$ is i.i.d.\ with the probability $\varpi(\rho)$.
\item The endowment share and the growth rate are independent, that is, $\varpi(\rho)=\pi(s)\varsigma(\gamma)$ where $\pi(s)$ and $\varsigma(\gamma)$ are the marginal distributions of the endowment shares and the growth rates.
\end{enumerate*}
\end{assumption}

By Part~(i) of Assumption~\ref{ass:iid}, the economy is stationary. We make this assumption to emphasize that the history dependence we derive below follows from the participation constraints rather than any feature of the economic environment itself.\footnote{The assumption of i.i.d.\ shocks is standard in OLG models where a generation may cover 20-30 years.} Since the terms $U(\{c\}\mathpunct{;} \rho^t)$ and $V(\{c\}\mathpunct{;} \rho^t)$ depend on the growth rates $\gamma_t$ and $\gamma_{t+1}$ only via the transition function $\varpi(\rho_t,\rho_{t+1})$, it follows that under Assumption~\ref{ass:iid} the consumption shares in any optimal sustainable intergenerational insurance rule depend only on the history of endowment shares $s^t$.

\begin{proposition}\label{prop:iid}
Under Assumption~\ref{ass:iid}, the consumption shares in any optimal sustainable intergenerational insurance rule depend only on the history $s^t$ and are independent of the history of growth shocks $\gamma^t$.
\end{proposition}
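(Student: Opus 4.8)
The plan is to combine a Jensen-type averaging argument with the uniqueness of the optimum. First I would record that, because the per-period utility is logarithmic, the objective $V(\{c\};\rho_0)$ is a probability-weighted sum of terms $\log c(\rho^t)$ and $\log(1-c(\rho^t))$, each a strictly concave function of the single coordinate $c(\rho^t)$; since every history in $\mathcal{P}^t$ carries strictly positive probability, each such coordinate enters with positive weight and $V$ is strictly concave on the convex, compact set $\Lambda$. Hence the maximizer of \eqref{eqn:obj} subject to \eqref{PK0} is unique, so it suffices to exhibit one optimal rule that depends only on $s^t$: any optimal rule must then coincide with it.

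So let $\{c^\ast\}$ be an optimal rule and define the averaged rule by integrating out the growth history, $\tilde c(s^t):=\mathbb{E}_{\gamma^t}[c^\ast(s^t,\gamma^t)]$, where by Part~(ii) of Assumption~\ref{ass:iid} the law of $\gamma^t$ is independent of $s^t$, so this is the conditional expectation given the share history. By construction $\tilde c$ depends only on $s^t$, and it is a convex combination of values each bounded above by $s_t$ (which is pinned down by $s^t$), so \eqref{eqn:nonneg1} is preserved; the leading term of $V$ and constraint \eqref{PK0} are untouched because $\rho_0$ is fixed. Using i.i.d.\ and independence, $\varpi(\rho_t,\rho_{t+1})=\pi(s_{t+1})\varsigma(\gamma_{t+1})$, so the growth rates are summed out of $U(\{\tilde c\};\rho^t)$, which therefore depends on $\rho^t$ only through $s^t$.

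The crux is the following estimate. Applying Jensen's inequality to $\log\tilde c(s^t)$ (averaging over $\gamma^t$) and to each $\log(1-\tilde c(s^{t+1}))$ (averaging over $\gamma^{t+1}$), and then factoring the expectation over $\gamma^{t+1}$ as $\mathbb{E}_{\gamma^t}\mathbb{E}_{\gamma_{t+1}}$ with the inner $\gamma_{t+1}$-expectation exactly matching the one appearing inside $U$, I would obtain
\begin{gather*}
U(\{\tilde c\};s^t)\ \geq\ \mathbb{E}_{\gamma^t}\!\left[U(\{c^\ast\};s^t,\gamma^t)\right]\ \geq\ 0,
\end{gather*}
where the second inequality uses feasibility of $\{c^\ast\}$ in \eqref{IC}. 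This shows $\{\tilde c\}\in\Lambda$. Substituting the same bound into $V(\{\tilde c\};\rho_0)$ and using independence to write $\Pr(\rho^j\mid\rho_0)=\Pr(s^j\mid s_0)\Pr(\gamma^j\mid\gamma_0)$ gives $V(\{\tilde c\};\rho_0)\geq V(\{c^\ast\};\rho_0)$, so $\{\tilde c\}$ is also optimal.

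Finally, by uniqueness of the maximizer, $\{\tilde c\}=\{c^\ast\}$, and since $\tilde c$ depends only on $s^t$, so does $c^\ast$. I expect the main obstacle to be the reconciliation of the two averaging levels in the crux estimate: the current consumption is averaged over $\gamma^t$ while the promised continuation enters averaged over $\gamma^{t+1}$, and making these line up so that the right-hand side is precisely the conditional expectation of $U(\{c^\ast\};\cdot)$ is where the nested structure $\mathbb{E}_{\gamma^{t+1}}=\mathbb{E}_{\gamma^t}\mathbb{E}_{\gamma_{t+1}}$ and the independence assumption do the real work. A secondary point to verify carefully is strict concavity, and hence uniqueness, which is what upgrades the conclusion from ``there exists an optimal rule depending only on $s^t$'' to ``\emph{any} optimal rule does.''
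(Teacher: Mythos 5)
Your argument is correct and is essentially the formalization of the paper's own (omitted) justification: the paper asserts the result from the observation that $U$ and $V$ depend on the growth shocks only through the factorized transition probabilities $\varpi(\rho_t,\rho_{t+1})=\pi(s_{t+1})\varsigma(\gamma_{t+1})$, and your averaging of $c^\ast$ over $\gamma^t$ combined with Jensen's inequality and strict concavity is the standard way to turn that symmetry observation into a proof. The key steps---feasibility of the averaged rule, the nested expectation $\mathbb{E}_{\gamma^{t+1}}=\mathbb{E}_{\gamma^t}\mathbb{E}_{\gamma_{t+1}}$, and uniqueness of the maximizer---all check out.
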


A similar result is well known from models with infinitely-lived agents \citep[see, again,][]{Alvarez-Jermann01, Krueger-Lustig10}.\footnote{Under Assumption~\ref{ass:iid} and preferences exhibiting constant relative risk aversion, the discount factors defined in Remark~\ref{rem:g} satisfy $\hat{\beta}/\beta= \hat{\delta}/\delta = \sum_{\gamma}\varsigma(\gamma)\gamma^{1-\alpha}$. If $\alpha\not=1$, then the planner's objective is finite provided $\delta\sum_{\gamma}\varsigma(\gamma)\gamma^{1-\alpha}<1$.} 

\paragraph{Preliminaries} Since there are $I\geq2$ states for the endowment share, order states such that $s(i)<s(i+1)$ for $i=1,\ldots, I-1$, so that, a higher state corresponds to a larger endowment share for the young. For convenience, we will refer to states $1,2,\ldots,I$ corresponding to the shares $s(1),s(2),\ldots, s(I)$ and to simplify notation will sometimes express variables as a function of~$i$ rather than~$s$.

Under Assumption~\ref{ass:iid}, the existence of a nonautarkic sustainable allocation can be addressed by considering small stationary transfers that depend only on the current endowment state. Denote the intertemporal marginal rate of substitution between the consumption share when young in state~$s$ and the consumption share when old in state~$r$ next period, evaluated at autarky, by $\hat{m}(s,r)\colonequals\beta s/(1-r)$ and let $\hat{q}(s,r)\colonequals \pi(r) \hat{m}(s,r)$. The terms $\hat{m}(s,r)$ and $\hat{q}(s,r)$ correspond to the stochastic discount factor and the state prices in an equilibrium model. Denote the $I \times I$~matrix of terms $\hat{q}(s,r)$ by $\hat{Q}$. A nonautarkic sustainable allocation exhausting the aggregate endowment and satisfying the participation constraints in~\eqref{eqn:nonneg1} and~\eqref{IC} exists whenever the Perron root of $\hat{Q}$ is greater than one \citep[see, for example,][]{Aiyagari-Peled91,Chattopadhyay-Gottardi99}. In this case, there exists a vector of strictly positive stationary transfers that improves the lifetime utility of the young in each state. Since the endowment states are i.i.d., the matrix $\hat{Q}$ has rank one, and the Perron root is its trace. We assume that the trace of $\hat{Q}$ is larger than the harmonic mean of the growth factors, $\bar{\gamma}\colonequals(\sum_\gamma \varsigma(\gamma)\gamma^{-1})^{-1}$.
\begin{assumption}\label{ass:sustain}%
$\sum\nolimits_{s\in\mathcal{I}}\hat{q}(s,s) > \bar{\gamma}$. 
\end{assumption}

If there is just one state with the young receiving a share $s$ of the aggregate endowment and no growth, then Assumption~\ref{ass:sustain} reduces to the standard Samuelson condition: $s>1/(1+\beta)$. In this case, it is well known that there are Pareto-improving transfers from the young to the old. Assumption~\ref{ass:sustain} is the generalization to the stochastic case and a natural assumption given that our focus is on transfers to the old.\footnote{%
A sufficient condition for Assumption~\ref{ass:sustain} to be satisfied is that the Frobenius lower bound, given by the minimum row sum of $\hat{Q}$, is greater than $\bar{\gamma}$. A row sum greater than $\bar{\gamma}$ implies that in autarky, the young would wish to save for their old age in each endowment state even if the net interest rate were zero.}
Given Assumption~\ref{ass:sustain}, it follows that the constraint set $\Lambda$ is nonempty.
\begin{proposition}
\label{prop:st} Under Assumption~\ref{ass:sustain}, there exists a nonautarkic and stationary sustainable intergenerational insurance rule.
\end{proposition}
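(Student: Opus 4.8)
The plan is to exhibit an explicit stationary rule built from the Perron eigenvector of $\hat{Q}$ and to verify the two participation constraints by a single exact computation. Because the consumption shares in any stationary rule depend only on the current endowment state (Proposition~\ref{prop:iid}), such a rule is a vector $(c(s))_{s\in\mathcal{I}}$, and it is convenient to parametrize it by the transfer the young make, $\tau(s):=s-c(s)$. Under Assumption~\ref{ass:iid} the expectation over next period collapses to $\sum_r\pi(r)(\cdot)$, constraint~\eqref{eqn:nonneg1} becomes $\tau(s)\geq0$, and the young's constraint~\eqref{IC} reads
\[
U(s)=\log\!\big(s-\tau(s)\big)-\log(s)+\beta\sum_{r}\pi(r)\Big(\log\!\big(1-r+\tau(r)\big)-\log(1-r)\Big)\geq0 .
\]
The growth rates have cancelled, as they must, so the problem is governed entirely by $\pi$ and by the share matrix $\hat{Q}$.

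First I would identify the correct direction for the transfers. Since the states are i.i.d., $\hat{Q}$ has rank one: writing $\hat{q}(s,r)=s\cdot\big(\beta\pi(r)/(1-r)\big)$ displays it as an outer product whose unique nonzero eigenvalue is its trace $\sum_s\hat{q}(s,s)$ and whose right Perron eigenvector is $\big(s(1),\dots,s(I)\big)^{\!\top}>0$. This points to transfers proportional to the endowment share, $\tau(s)=\epsilon\,s$ with $\epsilon\in(0,1)$ small, i.e.\ $c(s)=s(1-\epsilon)$.

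The key step is then an exact identity rather than a linearization. Substituting $c(s)=s(1-\epsilon)$ gives, for \emph{every} state $s$,
\[
U(s)=\log(1-\epsilon)+\beta\sum_{r}\pi(r)\log\!\Big(1+\tfrac{\epsilon\,r}{1-r}\Big)=:\Phi(\epsilon),
\]
which is independent of $s$ precisely because the transfers lie along the Perron eigenvector. Hence it suffices to find one $\epsilon\in(0,1)$ with $\Phi(\epsilon)>0$. Since $\Phi(0)=0$ and
\[
\Phi'(0)=-1+\beta\sum_{r}\pi(r)\tfrac{r}{1-r}=\Big(\sum_{r}\hat{q}(r,r)\Big)-1,
\]
positivity of $\Phi$ just to the right of $0$ is equivalent to $\sum_s\hat{q}(s,s)>1$, i.e.\ to the Perron root of $\hat{Q}$ exceeding one. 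Assumption~\ref{ass:sustain} gives $\sum_s\hat{q}(s,s)>\bar\gamma$, and with $\bar\gamma\geq1$ this yields $\Phi'(0)>0$; so there is $\epsilon^\ast\in(0,1)$ with $\Phi(\epsilon)>0$ on $(0,\epsilon^\ast)$. Fixing any such $\epsilon$, the rule $c(s)=s(1-\epsilon)$ satisfies $0<c(s)<s$, so~\eqref{eqn:nonneg1} holds strictly and~\eqref{IC} holds with slack; it is stationary by construction and nonautarkic because $c(s)<s$ in every state.

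I expect the only genuinely delicate point to be the reduction of Assumption~\ref{ass:sustain} to the clean inequality $\sum_s\hat{q}(s,s)>1$: one must use that the relevant threshold for the share economy is one (the growth factor having cancelled from $U$) together with $\bar\gamma\geq1$, consistent with the statement preceding the assumption that a nonautarkic sustainable allocation exists whenever the Perron root of $\hat{Q}$ is greater than one. Everything else is an exact equality, so no second-order control of the nonlinear constraint is required—this is the payoff from choosing the transfers along the Perron eigenvector, which makes all $I$ participation constraints slacken simultaneously and identically.
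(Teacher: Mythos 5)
Your proof is correct and takes essentially the same route as the paper's: the paper also works with small stationary transfers and reduces sustainability to \((\hat{Q}-I)\,\mathrm{d}\tau\geq 0\), invoking Perron--Frobenius to get a positive solution when the Perron root of \(\hat{Q}\) exceeds one, whereas you simply write down that Perron direction explicitly (\(\tau(s)=\epsilon s\), exploiting the rank-one structure under i.i.d.\ shocks) and verify the exact identity \(U(s)=\Phi(\epsilon)\) with \(\Phi'(0)=\sum_s\hat{q}(s,s)-1\), which makes the ``small \(\epsilon\)'' step fully rigorous rather than first-order. Your flagged reliance on \(\bar{\gamma}\geq 1\) to pass from Assumption~\ref{ass:sustain} to \(\sum_s\hat{q}(s,s)>1\) is not a defect relative to the paper, whose own proof explicitly restricts to the no-growth case \(\bar{\gamma}=1\).
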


Furthermore, we assume:
\begin{assumption}
\label{ass:LS} $s(1) \leq {\delta}/(\beta+\delta)$.
\end{assumption}
Assumption~\ref{ass:LS} provides a sufficient condition for the strong convergence result of Section~\ref{sec:conv}.
Since $\delta<1$, Assumption~\ref{ass:LS} implies that $s(1)<1/(1+\beta)$. That is, in the absence of growth, the state-wise Samuelson condition does not hold in every state, showing that our results do not depend on this property. In the terminology of \citet{Gale73}, the economy can be viewed as a mix of Samuelson and classic cases.

\section{Two Benchmarks}\label{sec:bench}

Before turning to the characterization of the optimal sustainable intergenerational insurance, it is helpful to consider two benchmark cases that illustrate the inefficiencies generated by the presence of limited enforcement and uncertainty. The first benchmark ignores the participation constraints of the young but not the participation constraints of the old. The second benchmark considers an economy without risk but requires that the planner respects the participation constraints of both the young and the old.

\paragraph{First Best}
Suppose the planner ignores the participation constraints of the young and let $\Lambda^{\ast}\colonequals\{  \{c\}\mid \text{\eqref{eqn:nonneg1}}\}$ denote the set of transfers without the constraints in~\eqref{IC}.\footnote{Hereafter, the asterisk designates the first-best outcome. Note that the first best could be defined by assuming that the planner ignores the participation constraints of both the young and the old. The reason for presenting the first best as we do is to show that this allocation is stationary. Hence, any history dependence of the optimal sustainable intergenerational insurance rule derives from the imposition of the participation constraints of the young.}

\begin{definition}An intergenerational insurance rule $\{c\}\in\Lambda^{\ast}$ is \emph{first best\/} if it maximizes the objective function~\eqref{eqn:obj} subject to constraint~\eqref{PK0}.
\end{definition}

It is easy to verify that at the first-best optimum:%
\begin{gather}
c^{\ast}(s^{t})=\min\left\{
\frac{\delta}{\beta+\delta}, s_t\right\} \quad\text{for all $t>0$ and $s^t\in \mathcal{S}^t$.}
\label{tau_fbm}%
\end{gather}
 Condition~\eqref{tau_fbm} shows that the consumption shares of the young are kept constant unless doing so would involve a transfer from the old to the young, in which case the consumption share is the autarky value.\footnote{Condition~\eqref{tau_fbm} is a special case of the familiar Arrow-Borch condition for optimal risk sharing modified to account for the constraint that transfers are only from the young to the old.} That is, at the first best, the consumption share is independent of the history $s^{t-1}$ and depends only on the current endowment share $s_t$ when the nonnegativity constraint on the transfer binds. Under Assumption~\ref{ass:LS}, there is always one state in which the participation constraint of the old holds with equality.

 It can be seen from condition~\eqref{tau_fbm} that for states in which transfers are positive, the first-best consumption share of the young is independent of~$s$. It is decreasing in $\beta$ since a higher $\beta$ puts more weight on the utility of the old who receive the transfer, and it is increasing in $\delta$ since a higher $\delta$ puts more weight on the utility of the young who make the transfer.

Let $\omega_{\min}(s) \colonequals \log(1-s)$ be the utility of the old at autarky and $\omega^{\ast}\colonequals \log(\beta/(\beta+\delta))$ be the utility of the old when the consumption share of the young is $\delta/(\beta+\delta)$. Then, $\omega^{\ast}(s)\colonequals \max\{\omega_{\min}(s), \omega^\ast\}$ is the utility of the old at the first-best solution when the endowment share of the young is~$s$. Since $s_0$ is the endowment share of the young at the initial date, it follows from Definition~\ref{def:opt} that constraint~\eqref{PK0} does not bind when $\bar{\omega}_0\leq\omega^{\ast}(s_0)$. In this case, the first-best consumption at $t=0$ is $c^{\ast}(s_{0})$, determined by condition~\eqref{tau_fbm} as in every other time~$t>0$. On the other hand, for $\bar{\omega}_0>\omega^{\ast}(s_0)$, constraint~\eqref{PK0} binds and $c^{\ast}(s_0) = 1- \exp(\bar{\omega}_0)$. In this case, the initial transfer to the old is correspondingly higher than implied by condition~\eqref{tau_fbm}.

Let $v^\ast(s)= \log(c^{\ast}(s))+(\beta/\delta)\log(1-c^\ast(s))$ denote the per-period payoff to the planner with the first-best allocation let $V^{\ast}(s_0,\omega)$ denote the expected discounted payoff to the planner when the initial endowment share is~$s_0$ and the initial utility of the old is~$\omega$. The maximum utility the old can get occurs if they consume all of the endowment, so that $\omega_{\max}=\log(1)=0$. Let $\Omega(s_0)=[\omega_{\min}(s_0), 0]$ be the set of possible utilities for the old at the initial state, $\bar{v}^{\ast}\colonequals\sum_{s}\pi(s)v^\ast(s)$ be the planner's expected per-period payoff at the first-best solution and $\bar{V}^\ast\colonequals\bar{v}^\ast/(1-\delta)$ be the corresponding continuation payoff. The first-best outcome is summarized in the following proposition.\footnote{The proof of Proposition~\ref{prop:fbstm} is omitted because it follows from standard arguments. Nonetheless, the properties of the function $V^\ast(s_0, \omega)$ are mirrored in Proposition~\ref{prop:determ} and Lemma~\ref{lemma:V_omega}, given below, which do respect the participation constraints of the young.}

\begin{proposition}
\label{prop:fbstm} \begin{enumerate*}[label=(\roman*)]
\item The consumption share $c^{\ast}(s^{t})$ is stationary and satisfies condition~\eqref{tau_fbm} for $t>0$. For $t=0$, $c^{\ast}(s_0)$ satisfies condition~\eqref{tau_fbm} for $\omega\leq \omega^\ast(s_0)$ and $c^\ast(s_0)=1-\exp(\omega)$ for $\omega>\omega^\ast(s_0)$. \item The value function  $V^{\ast}(s_0, \cdot)\mathpunct{:} \Omega(s_0) \to \mathbb{R}$ has $V^{\ast}(s_0, \omega)=v^\ast(s_0)+\delta\bar{V}^{\ast}$ for $\omega\leq \omega^\ast(s_0)$ and $V^{\ast}(s_0, \omega)=(\beta/\delta)\omega + \log(1-\exp(\omega)) + \delta \bar{V}^{\ast}$ for $\omega> \omega^\ast(s_0)$, where the derivative $V_{\omega}^{\ast}(s_0,\omega^\ast(s_0))=\min\{0,(\beta/\delta)-((1-s_0)/s_0)\}$ with $\lim_{\omega\to0}V_{\omega}^{\ast}(s_0,\omega)=-\infty$.
\end{enumerate*}
\end{proposition}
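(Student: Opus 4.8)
The plan is to solve the first-best problem directly, exploiting that the constraints in $\Lambda^\ast$ (namely the participation constraints of the old, $c(\rho^t)\le s_t$) decouple across states, so the problem separates period by period and state by state. First I would observe that under Assumption~\ref{ass:iid} and logarithmic utility, the objective in~\eqref{eqn:obj} is additively separable once we write it in terms of per-period payoffs: every consumption share $c(s^t)$ enters only the generation born at $\rho^t$ (through $\log(c(s^t))$ and the weight on its old-age utility carried by the planner) and the generation born one period earlier (through $\log(1-c(s^t))$). Since the constraint $c\le s_t$ is a pointwise upper bound with no intertemporal linkage, the planner's problem for $t>0$ reduces to maximizing, state by state, the static objective $\log(c)+(\beta/\delta)\log(1-c)$ subject to $c\le s$. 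The unconstrained maximizer of this concave function is $c=\delta/(\beta+\delta)$, giving condition~\eqref{tau_fbm}: the interior optimum is taken whenever it is feasible, and the boundary value $c=s$ is taken otherwise. This establishes the stationarity claim in Part~(i) for $t>0$.

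Next I would handle the initial period, where constraint~\eqref{PK0} is active. The initial old's utility is $\log(1-c(\rho_0))$, so~\eqref{PK0} is equivalent to the lower bound $c(\rho_0)\le 1-\exp(\bar\omega_0)$. I would compare this bound with the unconstrained first-best value: when $\bar\omega_0\le\omega^\ast(s_0)$ the bound is slack (the first-best $c^\ast(s_0)$ already delivers the old at least $\bar\omega_0$), so $c^\ast(s_0)$ continues to satisfy~\eqref{tau_fbm}; when $\bar\omega_0>\omega^\ast(s_0)$ the constraint binds and forces $c^\ast(s_0)=1-\exp(\bar\omega_0)$, as stated. This is a routine comparison once the definitions of $\omega^\ast$, $\omega_{\min}$, and $\omega^\ast(s)$ are unpacked.

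For Part~(ii) I would compute the value function by substituting the optimal policy into the objective and separating the $t=0$ term from the continuation. For $\omega\le\omega^\ast(s_0)$ the initial consumption is the stationary first-best value, so the planner's payoff is $v^\ast(s_0)$ for the current period plus the discounted continuation $\delta\bar V^\ast$, where $\bar V^\ast=\bar v^\ast/(1-\delta)$ collects the stationary per-period payoffs from $t\ge1$ using the i.i.d.\ distribution $\pi(s)$. For $\omega>\omega^\ast(s_0)$, substituting $c^\ast(s_0)=1-\exp(\omega)$ gives the current old a utility of exactly $\omega$ (contributing $(\beta/\delta)\omega$) and the young a payoff $\log(1-\exp(\omega))$, with the same continuation $\delta\bar V^\ast$; this yields the two stated expressions. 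The derivative formula then follows by differentiating: on the flat branch $V^\ast_\omega=0$, while on the binding branch differentiating $(\beta/\delta)\omega+\log(1-\exp(\omega))$ gives $(\beta/\delta)-\exp(\omega)/(1-\exp(\omega))$, which at $\omega=\omega^\ast(s_0)$ evaluates to $(\beta/\delta)-((1-s_0)/s_0)$ when the interior policy binds; taking the minimum with $0$ accounts for the kink where the two branches meet, and the limit $\lim_{\omega\to0}V^\ast_\omega=-\infty$ follows because $\log(1-\exp(\omega))\to-\infty$ as $\omega\to 0$.

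The only genuinely delicate step is verifying the derivative of $V^\ast(s_0,\cdot)$ at the kink $\omega=\omega^\ast(s_0)$, where the flat and binding branches meet: I would check that the left derivative ($0$) and the right derivative ($(\beta/\delta)-((1-s_0)/s_0)$) are consistent with the $\min$ in the stated formula, distinguishing the two subcases $\omega^\ast(s_0)=\omega^\ast$ (interior first best, so the right derivative is negative and the $\min$ selects it) versus $\omega^\ast(s_0)=\omega_{\min}(s_0)$ (boundary first best, so the right derivative is nonnegative and the $\min$ selects $0$). Everything else is the substitution and separability argument sketched above, which is why the paper defers the full proof to standard arguments.
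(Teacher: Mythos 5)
The paper omits a proof of this proposition, describing it as following from standard arguments, and your plan is exactly that standard argument: the objective is additively separable across histories with per-period weight $\delta^t$ on $\log(c)+(\beta/\delta)\log(1-c)$, the only constraints in $\Lambda^\ast$ are pointwise, so the problem decouples into identical static problems whose solution is condition~\eqref{tau_fbm}, and the value function follows by substitution. Parts~(i) and the two branches of $V^\ast$ in Part~(ii) are handled correctly (modulo the wording slip where you call $c(\rho_0)\le 1-\exp(\bar\omega_0)$ a \emph{lower} bound).

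However, the one step you flag as genuinely delicate — the case analysis for $V^\ast_\omega(s_0,\omega^\ast(s_0))$ — is carried out backwards. On the binding branch $V^\ast_\omega=(\beta/\delta)-h(\omega)$ with $h(\omega)=\exp(\omega)/(1-\exp(\omega))$ increasing. In the interior case $\omega^\ast(s_0)=\omega^\ast=\log(\beta/(\beta+\delta))$ (i.e.\ $s_0>\delta/(\beta+\delta)$), the right derivative at the kink is $(\beta/\delta)-h(\omega^\ast)=0$, not $(\beta/\delta)-(1-s_0)/s_0$, and the latter quantity is \emph{positive} here since $(1-s_0)/s_0<\beta/\delta$; so the $\min$ selects $0$ and the function is in fact differentiable at the kink. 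In the boundary case $\omega^\ast(s_0)=\omega_{\min}(s_0)$ (i.e.\ $s_0\le\delta/(\beta+\delta)$), the flat branch degenerates to the left endpoint, the one-sided derivative is $(\beta/\delta)-h(\omega_{\min}(s_0))=(\beta/\delta)-(1-s_0)/s_0\le 0$, and the $\min$ selects this \emph{negative} value. You assert the opposite assignment in both subcases, so executing your check as written would lead you to reject the stated formula. A smaller point: the claim $\lim_{\omega\to 0}V^\ast_\omega=-\infty$ does not follow merely from $\log(1-\exp(\omega))\to-\infty$; it follows directly from the explicit derivative, since $h(\omega)\to\infty$ as $\omega\to 0^-$.
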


The value function $V^\ast(s_0,\omega)$ is decreasing and concave in $\omega$ (strictly decreasing and strictly concave in~$\omega$ for $\omega>\omega^\ast(s_0)$). The function ``extends to the left'' when the endowment share~$s_0$ is higher.\footnote{That is, for $s>r$ where $\omega_{\min}(s)<\omega_{\min}(r)$, $V^\ast(s,\omega)=V^\ast(r, \omega)$ for $\omega\in\Omega(r)$.} If $\omega^\ast(s_0)>\omega_{\min}(s_0)$ (or equivalently, $s_0>\delta/(\beta+\delta)$), then $V^\ast(s_0, \omega)$ is independent of $\omega$ for $\omega\leq \omega^\ast(s_0)$.  Hence, in the absence of constraint~\eqref{PK0}, the planner would choose $\omega(s_0)=\omega^\ast(s_0)$ because this gives the highest utility to the initial old while maximizing the payoff to the planner. In this case, the allocation given by condition~\eqref{tau_fbm} holds in every period. In contrast, when $\bar{\omega}_0>\omega^{\ast}(s_0)$, the consumption share of the young is lower than implied by condition~\eqref{tau_fbm}, but only in the initial period. There is immediate convergence to the stationary first-best distribution in one period.

Since the payoff to the planner depends on both~$s$ and~$\omega$, the stationary distribution is a pair $(s, \omega^{\ast}(s))$, the endowment share and the corresponding utility promised to the old. We note for future reference that this stationary distribution has $I$ values, one for each endowment state, with the probability of each pair given by $\pi(s)$.

\paragraph{Deterministic Economy}
We now consider a deterministic economy with a constant growth rate $\gamma$ and endowment share $s$. Unlike the previous benchmark, we assume that the planner respects the participation constraint of both the young and the old. Let $\hat{\upsilon}\colonequals\log(s)+\beta\log(1-s)$ be the lifetime endowment utility. Assumption~\ref{ass:sustain}, together with the strict concavity of the utility function, implies that there is a unique ${c}_{\min}<s$, which is the lowest \emph{stationary\/} consumption share of the young that satisfies the participation constraint with equality. The corresponding maximum utility of the old is $\omega_{\max}\colonequals \log(1-c_{\min})$.\footnote{The maximum utility of the old can be found by solving $\log(1-\exp(\omega_{\max})) +\beta \omega_{\max}=\hat{\upsilon}$. Equivalently, the minimum consumption is found by solving $\log({c}_{\min})+\beta \log(1-{c}_{\min})=\hat{\upsilon}$. 
} Analogously to condition~\eqref{tau_fbm}, the first-best consumption share is $c^{\ast}=\delta/(\beta+\delta)$ and the corresponding utility of the old is $\omega^{\ast}\colonequals \log(\beta/(\beta+\delta))$. If $\delta$ is above a critical value, then $c^{\ast}>c_{\min}$ (or equivalently, $\omega^{\ast}<\omega_{\max}$) and the first-best consumption share is sustainable. Otherwise, the first-best consumption share is not sustainable.

Denote the consumption share of the young at time~$t$ by ${c}_{t}$ and the corresponding utility of the old by $\omega_{t}=\log(1-c_t)$. Consider the maximization problem in~\eqref{eqn:obj} with the participation constraints of the young given by $\log({c}_{t})+\beta \log(1-{c}_{t+1})\geq \hat{\upsilon}$. For $\bar{\omega}_0\leq\omega^{\ast}$, constraint~\eqref{PK0} does not bind and it is optimal to set $c_{t}=\max\{c^{\ast},c_{\min}\}$ (or equivalently, $\omega_t=\min\{\omega^{\ast},\omega_{\max}\}$) for all $t\geq0$. On the other hand, consider the case where $\omega^\ast<\omega_{\max}$ and $\bar{\omega}_{0}>\omega^{\ast}$. Then, at $t=0$, $c_{0}$ must satisfy $\log(1-c_{0})\geq\bar{\omega}_0$, which requires that $c_{0}<c^{\ast}$. Clearly, it is desirable to set $c_{0}$ such that $\log(1-c_{0})=\bar{\omega}_0$ and $c_{1}=c^{\ast}$.
However, setting $c_{1}=c^{\ast}$ may violate the participation constraint of the young. In this case, $c_{1}$ has to be chosen to satisfy $\log(c_{0})+\beta \log(1-c_{1})=\hat{\upsilon}$, which implies that $c_{1}<c^{\ast}$. Repeating this argument for $t>1$ shows that given $c_{t}$, the consumption share of the young at time~$t+1$ either satisfies $\log(c_{t})+\beta \log(1-c_{t+1})=\hat{\upsilon}$ or $c_{t+1}=c^{\ast}$ if $\log({c}_{t})+\beta \log(1-{c}^{{\ast}})\geq\hat{\upsilon}$. Intuitively, if the consumption share of the young is low (or equivalently, the utility of the old is large), then the planner would like to raise the consumption share of the young to $c^{\ast}$ (or equivalently, reduce $\omega$ to $\omega^\ast$) as fast as possible to improve welfare. However, if the consumption share of the next-period young is raised too much, then the lifetime utility of the current young falls, and their participation constraint is violated. That is, in the presence of limited enforcement, the consumption share of the young has to be raised gradually. It is useful to express this rule in terms of a policy function $g\mathpunct{:}\Omega\to\Omega$ for the promised utility next period:
\begin{gather}\label{eq:detpol}
g(\omega)\colonequals%
\begin{cases}
\omega^{\ast} & \text{for $\omega\in[{\omega}_{\min},\omega^{c}]$,}\\
\frac{1}{\beta}\left(  \hat{\upsilon}-\log\left(  1-\exp(\omega)\right)  \right)
& \text{for $\omega\in(\omega^{c},{\omega}_{\max}]$,}
\end{cases}
\end{gather}
where $\Omega\colonequals[\omega_{\min}, \omega_{\max}]$, ${\omega}_{\min}=\log(1-s)$ and $\omega^{c}\colonequals \log(1-\exp(\hat{\upsilon}-\beta\omega^{\ast}))$. It follows from the strict concavity of the utility function that $\omega^{c}>\omega^{\ast}$. The function $g(\omega)$ is increasing and convex in $\omega$, as illustrated in Figure~\ref{fig:pol}. The dynamic evolution of $\omega_{t}$ is straightforwardly derived from $g(\omega)$: for $\omega_{t}\in[{\omega}_{\min},\omega^{c}]$, $\omega_{t+1}=\omega^{\ast}$ for all~$t$; for $\omega_{t}\in(\omega^{c},{\omega}_{\max}]$, $\omega_{t+1}$ declines monotonically. Since $\omega^{c}>\omega^{\ast}$, the process for $\omega_{t}$ converges to $\omega^\ast$, attaining its long-run value in finite time.

\begin{figure}[ht]
\begin{center}
\includegraphics{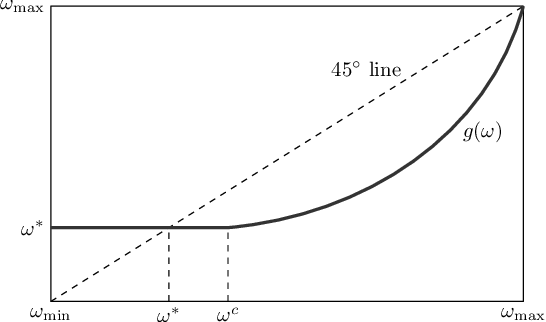}
\end{center}
\caption{Deterministic Policy Function.}\label{fig:pol}%
\begin{figurenotes}The solid line is the deterministic policy function $g\mathpunct{:}\Omega\to\Omega$ that determines the future promised utility as a function of the current promise. The case illustrated has $\omega_{\max}>\omega^{\ast}$. For any initial $\omega\in[\omega_{\min}, \omega_{\max})$, $\omega_t$ converges to $\omega^\ast$ in finite time. 
\end{figurenotes}
\end{figure}

Denote the per-period payoff to the planner with the first-best allocation in the absence of uncertainty by $v^{\ast}\colonequals \log(\delta/(\beta+\delta))+(\beta/\delta)\omega^{\ast}$ and the expected discounted payoff to the planner for $\omega\in\Omega$ by $V(\omega)$. The optimal solution for the deterministic case with sustainable $\omega^{\ast}$ is summarized in the following proposition.

\begin{proposition}
\label{prop:determ} \begin{enumerate*}[label=(\roman*)]
\item If $\omega\in[\omega_{\min},\omega^{\ast}]$, then the consumption share ${c}_{t}=\delta/(\beta+\delta)$ for $t\geq0$.
\item If $\omega\in(\omega^{\ast},{\omega}_{\max}]$, then $\omega_{t+1}$ satisfies equation~\eqref{eq:detpol}. There exists a finite $T$ such that $\omega_t$ is monotonically decreasing for $t<T$ and  $\omega_t=\omega^\ast$ for~$t\geq T$. Likewise, $c_t$ is monotonically increasing for~$t<T$ and ${c}_{t}=c^\ast$ for~$t\geq T$.
\item The value function $V\mathpunct{:}\Omega \to\mathbb{R}$ is equal to $V(\omega)=v^{\ast}/(1-\delta)$ for $\omega\in[\omega_{\min},\omega^{\ast}]$ and is strictly decreasing and strictly concave for $\omega\in(\omega^{\ast},{\omega}_{\max}]$ with $\lim_{\omega\to {\omega}_{\max}}V_{\omega}(\omega)=-\infty$.
\end{enumerate*}
\end{proposition}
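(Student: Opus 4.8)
The plan is to recast the deterministic planner's problem recursively, taking the old's promised utility $\omega$ (with consumption share $c=1-\exp(\omega)$) as the single state. Writing the per-period planner payoff as $p(\omega)\colonequals\log(1-\exp(\omega))+(\beta/\delta)\omega$, note that $p$ is strictly concave and, by its first-order condition, maximized at $\omega^\ast$ with $v^\ast=p(\omega^\ast)$. The young's participation constraint links consecutive promises through $\log(1-\exp(\omega_t))+\beta\omega_{t+1}\geq\hat\upsilon$. I would first treat the problem with an \emph{exact} initial promise $\omega_0$, defining $\tilde V(\omega_0)\colonequals\max\sum_{t\geq0}\delta^t p(\omega_t)$ over feasible sequences, and show that $\tilde V$ is the unique bounded continuous solution of the Bellman equation $\tilde V(\omega)=p(\omega)+\delta\max\{\tilde V(\omega'):\omega'\geq\frac{1}{\beta}(\hat\upsilon-\log(1-\exp(\omega))),\ \omega'\in\Omega\}$. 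Existence and uniqueness follow from Blackwell's conditions ($p$ is bounded on the compact set $\Omega$ because $\omega_{\max}<0$, and $\delta<1$), and $\tilde V$ is concave because the operator maps concave functions to concave functions ($p$ concave, constraint set convex). The relation to the stated object is $V(\omega)=\tilde V(\max\{\omega,\omega^\ast\})$, since \eqref{PK0} imposes only a lower bound on the old's utility and a lower promise never tightens the young's constraint.

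The observation that organizes everything is that $p(\omega)\leq v^\ast$ with equality only at $\omega^\ast$, so $\tilde V(\omega)\leq v^\ast/(1-\delta)$, and the stationary path $\omega_t\equiv\omega^\ast$ — feasible because $c^\ast>c_{\min}$ under the maintained sustainability hypothesis, which makes the young's constraint slack — attains this bound. Hence $\tilde V$ is maximized at $\omega^\ast$ with $\tilde V(\omega^\ast)=v^\ast/(1-\delta)$, and by concavity $\tilde V$ is nondecreasing up to $\omega^\ast$ and nonincreasing thereafter. The optimal next promise therefore minimizes $\omega'$ subject to the constraint but never drops below $\omega^\ast$, which is exactly the policy $g$ of \eqref{eq:detpol} once one checks that the constraint lower bound $\frac{1}{\beta}(\hat\upsilon-\log(1-\exp(\omega)))$ equals $\omega^\ast$ precisely at $\omega=\omega^c$. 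Part (i) and the flat half of (iii) are then immediate: for $\omega\leq\omega^\ast$ the planner sets $\omega_0=\omega^\ast$, delivering $c_t=c^\ast$ for all $t$ and $V(\omega)=v^\ast/(1-\delta)$.

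For part (ii) and the decreasing half of (iii) I work on $(\omega^\ast,\omega_{\max}]$, where $V=\tilde V$ and the constraint binds so $\omega_{t+1}=g(\omega_t)$. Strict monotonicity is read off the orbit: since $g$ is increasing and the orbit stays in $[\omega^\ast,\omega_{\max}]$ where $p$ is strictly decreasing, a higher $\omega_0$ raises every $\omega_t$ and lowers every flow. Strict concavity follows from the strictly concave $t=0$ term $p$ together with the established concavity of $\tilde V$ and the convexity of the increasing map $g$, which make $\tilde V\circ g$ concave. Finite-time convergence is the key dynamic step: one checks $g(\omega^c)=\omega^\ast<\omega^c$ and $g(\omega_{\max})=\omega_{\max}$, so $h(\omega)\colonequals g(\omega)-\omega$ is convex with $h(\omega^c)<0$ and $h(\omega_{\max})=0$; lying below its chord, $h$ is strictly negative on $(\omega^c,\omega_{\max})$. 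The orbit is thus strictly decreasing, cannot accumulate at an interior point (there is no fixed point there), and once it enters a neighborhood of $\omega^c$ the continuity $g(\omega^c)=\omega^\ast$ forces the next iterate strictly below $\omega^c$, after which $g$ pins it at $\omega^\ast$; this yields the finite $T$ for every $\omega_0<\omega_{\max}$ (the single point $\omega_{\max}$ being the unstable stationary promise).

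The hard part is the boundary slope $\lim_{\omega\to\omega_{\max}}V_\omega=-\infty$. The static mechanism used for the first best in Proposition~\ref{prop:fbstm}, where the slope blows up because $c\to0$, is unavailable, since $c_{\min}>0$ keeps $p'(\omega_{\max})$ finite; the divergence here is genuinely dynamic. Differentiating $V(\omega)=\sum_{t<T(\omega)}\delta^t p(g^t(\omega))+\delta^{T}v^\ast/(1-\delta)$ gives $V_\omega=\sum_t\delta^t p'(\omega_t)\prod_{k<t}g'(\omega_k)$, the terms created by $T(\omega)$ changing cancelling by the envelope argument because $p(\omega^\ast)=v^\ast$. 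Since $\omega_{\max}$ is a fixed point of $g$ with $g'(\omega_{\max})=\frac{1}{\beta}\cdot\frac{1-c_{\min}}{c_{\min}}$, and $c_{\min}<c^\ast$ forces $\delta g'(\omega_{\max})>1$, the summands behave like $p'(\omega_{\max})(\delta g'(\omega_{\max}))^t$ with ratio exceeding one while $T(\omega)\to\infty$; as $p'(\omega_{\max})<0$, the sum diverges to $-\infty$. Making this rigorous — bounding the products $\prod_{k<t}g'(\omega_k)$ uniformly along orbits that linger near the unstable fixed point for an unboundedly growing number of periods — is the main obstacle, and I expect it to require comparing the orbit with the linearization of $g$ at $\omega_{\max}$.
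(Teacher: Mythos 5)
Your architecture is essentially the paper's: recast the problem recursively in $\omega$, get concavity of the value function from the concavity of $p(\omega)=\log(1-\exp(\omega))+(\beta/\delta)\omega$ and the convexity of the constraint set, read off the policy \eqref{eq:detpol} from the fact that the continuation value is flat up to $\omega^{\ast}$ and decreasing beyond it, and obtain finite-time convergence from the monotone dynamics of $g$. The paper organizes the last two steps through the increasing sequence $\omega^{c}_{0}=\omega^{\ast}$, $\log(1-\exp(\omega^{c}_{n}))+\beta\omega^{c}_{n-1}=\hat{\upsilon}$, which converges to $\omega_{\max}$ and partitions $(\omega^{\ast},\omega_{\max}]$ into intervals on which exactly $n$ steps are needed; your chord argument for $g(\omega)-\omega$ is an equivalent route to the same conclusion, and your caveat that $T$ fails to be finite at the single point $\omega_{0}=\omega_{\max}$ is correct (a harmless imprecision in the statement as written).

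The step you leave open --- $\lim_{\omega\to\omega_{\max}}V_{\omega}(\omega)=-\infty$ --- is exactly where the $\omega^{c}_{n}$ construction pays off, and it needs neither a linearization of $g$ at $\omega_{\max}$ nor uniform control of the products along arbitrary orbits. The paper evaluates the derivative only at the junction points $\omega^{c}_{n}$, where the backward orbit is exactly $\omega^{c}_{n}\mapsto\omega^{c}_{n-1}\mapsto\cdots\mapsto\omega^{c}_{1}\mapsto\omega^{\ast}$ and the derivative recursion telescopes to $V_{\omega}(\omega^{c}_{n})=\beta/\delta-(\delta/\beta)^{n-1}\prod_{j=1}^{n}h(\omega^{c}_{j})$ with $h(\omega)=\exp(\omega)/(1-\exp(\omega))$; the product equals $(\beta/\delta)\prod_{j=1}^{n}\delta g^{\prime}(\omega^{c}_{j})$. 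Every factor $\delta g^{\prime}(\omega^{c}_{j})=(\delta/\beta)h(\omega^{c}_{j})$ is at least one, because $h$ is increasing and $h(\omega^{\ast})=\beta/\delta$, and the factors converge to $\delta g^{\prime}(\omega_{\max})>1$ (your observation that $c_{\min}<c^{\ast}$), so the product diverges with no rate estimate required. Concavity makes $V_{\omega}$ monotone, so the limit along $\omega^{c}_{n}\to\omega_{\max}$ gives the limit along all $\omega\to\omega_{\max}$. If you prefer to stay with your sum $V_{\omega}(\omega_{0})=\sum_{t}\delta^{t}p^{\prime}(\omega_{t})\prod_{k<t}g^{\prime}(\omega_{k})$, the same observations close it: every summand is nonpositive on $[\omega^{\ast},\omega_{\max}]$, so it suffices to lower-bound in absolute value a single term $t=N(\omega_{0})$ with $N\to\infty$ as $\omega_{0}\to\omega_{\max}$, for which $\prod_{k<t}\delta g^{\prime}(\omega_{k})\geq(1+\epsilon)^{t}$ and $p^{\prime}(\omega_{t})$ is bounded away from zero.
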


The optimal solution is either stationary or converges monotonically to a stationary point within finite time, with $c_{T}=c^\ast$ for $T$ large enough. Hence, the long-run distribution of $\omega$ is degenerate, and for the case where $c^\ast>c_{\min}$, it has a single mass point at $\{\omega^{\ast}\}$.

In the following sections, we show that when the first-best allocation violates a participation constraint of the young, and there is endowment risk, the optimal sustainable intergenerational insurance is history dependent even in the long run, and the ergodic set of utilities has more than $I$ values. The benchmarks highlight that both limited enforcement of transfers and risk are necessary for this result.

\section{Optimal Sustainable Intergenerational Insurance}\label{sec:opt}

In this section, we characterize the optimal intergenerational insurance rule under uncertainty when the planner respects the participation constraints of both the young and the old. Recall that the shocks to growth rates and endowment shares are i.i.d.\ (Assumption~\ref{ass:iid}) and that the optimal sustainable consumption shares depend only on the history of endowment share $s^t$ (Proposition~\ref{prop:iid}). Proposition~\ref{prop:fbstm} describes the solution when the first-best outcome is sustainable. Therefore, in this section, we assume that the first-best allocation violates the participation constraint of the young in at least one state. Since the lifetime endowment utility of an agent is increasing in~$s$, we assume that:
\begin{assumption}\label{ass:nofb}
$ \log(c^\ast(I)) + \beta \sum_{r} \pi(r)\log(1-c^\ast(r))< \log(s(I)) + \beta \sum_{r} \pi(r)\log(1-r)$.
\end{assumption}

We reformulate the optimization problem described in Definition~\ref{def:opt} recursively using the utility $\omega$ promised to the old as a state variable. Let $\omega_r$ denote the state-contingent utility promised to the current young for their old age when the endowment share of the young next period is~$r$. Then, the planner's optimization problem is:
\begin{gather}
V(s,\omega) = \max_{\{c, \left(\omega_{r}\right)_{r\in \mathcal{I}}\}\in\Phi(s,\omega)} \enspace  \frac{\beta}{\delta} \log(1-c) + \log(c) + \delta \sum\nolimits_{r} \pi(r) V(r,\omega_{r}), \tag{P1} \label{eqn:p1}
\end{gather}
where $\Phi(s,\omega)$ is the constraint set given by the following inequalities:
\begin{align}
\log(1-c) &\geq \omega, \label{eqn:pk}\\
c &\leq s, \label{eqn:nn}\\
\omega_r &\leq \omega_{\max}(r) \quad \text{for each $r\in\mathcal{I}$,} \label{eqn:ub} \\
\omega_r &\geq \omega_{\min}(r) \quad \text{for each $r\in\mathcal{I}$,} \label{eqn:lb} \\
\log(c) + \beta \sum\nolimits_{r} \pi(r) \omega_{r} & \geq \log(s)+\beta \sum\nolimits_r\pi(r)\log(1-r).\label{eqn:icr}
\end{align}
The recursive formulation is similar to the promised-utility approach used in models with infinitely-lived agents \citep[see, for example,][]{Green87, Spear-Srivastava87, Thomas-Worrall88, Atkeson-Lucas92}. At each period, the planner chooses the consumption share of the young, $c$, and the state-contingent promise of utility, $\omega_r$. The state variable $\omega$ embodies information about the history of shocks. Constraint~\eqref{eqn:pk} is the promise-keeping constraint, which requires the current old to receive at least what they were promised previously. It is analogous to constraint~\eqref{PK0} that specifies a target utility for the initial old, but it now specifies a target utility in every period. Constraint~\eqref{eqn:nn} is the participation constraint of the old, which stipulates that the old do not transfer to the young. Constraints~\eqref{eqn:ub} and~\eqref{eqn:lb} require that the promise is feasible: $\omega_r\in\Omega(r)\colonequals[\omega_{\min}(r), \omega_{\max}(r)]$. Finally, constraint~\eqref{eqn:icr} requires that the consumption share of the young and the promises made to them for their old age at least match the expected lifetime utility they would receive in autarky.

It is easy to check that the constraint set $\Phi(s,\omega)$ is convex and compact.
Denote the state vector by $x\colonequals (s,\omega)$ and let $f(x)$ and $g_r(x)$ for $r\in\mathcal{I}$ be the optimal consumption share of the young and the state-contingent utility promised to the old next period. The compactness of the constraint set guarantees the existence of the optimal policies, and the strict concavity of the utility function guarantees uniqueness. The optimal allocation is solved recursively. Starting at date $t=0$ with a given state $s_0$ and given $\omega_{0}\in\Omega(s_0)$, solve the optimization problem~\ref{eqn:p1} to obtain the policy functions $f(s_0,\omega_0)$ and $g_r(s_0,\omega_0)$ for $r\in\mathcal{I}$. For the second period, solve the maximization problem again using the endowment share realized at date~$t=1$, say $\hat{r}$, together with the utility promise from the first period, $g_{\hat{r}}(s_0,\omega_0)$, in equation~\eqref{eqn:pk}. The process is then repeated for subsequent periods.

The function $V(s,\omega)$ cannot be found by standard contraction mapping arguments starting from an arbitrary value function because the value function associated with the autarkic allocation also satisfies the functional equation of problem~\ref{eqn:p1}. However, a similar iterative approach can be used to find the value function, starting from the first-best value functions $V^{\ast}(s,\omega)$ derived in Proposition~\ref{prop:fbstm}. Following the arguments of \citet{Thomas-Worrall94}, the limit of this iterative mapping is the optimal value function $V(s,\omega)$. Proposition~\ref{prop:fbstm} established that the first-best value function is nonincreasing, differentiable, and concave in~$\omega$, and the limit value function inherits these properties.
\begin{lemma}
\label{lemma:V_omega}
\begin{enumerate*}[label=(\roman*)]
\item The value function $V(s, \cdot) \mathpunct{:} \Omega(s) \to \mathbb{R}$ is nonincreasing, concave and continuously differentiable in $\omega$, with ${\omega}_{\min}(s)<{\omega}_{\max}(s)$.
\item For each $s\in\mathcal{I}$, there exists an ${\omega}^{0}(s)\in [{\omega}_{\min}(s), \omega^\ast(s)]$ such that $V(s,\omega)$ is strictly decreasing and strictly concave for $\omega>{\omega}^0(s)$. If $\omega^{\ast}(s)>\omega_{\min}(s)$, then $\omega^{0}(s)>\omega_{\min}(s)$ and for at least one such state $\omega^{0}(s)< \omega^{\ast}(s)$. For $\omega\in[\omega_{\min}(s),\omega^0(s)]$, $V_{\omega}(s,\omega)=0$. If $\omega^{\ast}(s)=\omega_{\min}(s)$, then $\omega^{0}(s)=\omega^{\ast}(s)$ and $V_{\omega}(s,{\omega}^{0}(s))\leq({\beta}/{\delta}) -((1-s)/s)\leq 0$. In either case, the limit, $\lim_{\omega\to{\omega}_{\max}(s)}V_{\omega}(s,\omega) = - ({\beta}/{\delta})\lambda_{\max}(s)$, where $\lambda_{\max}(s)\in\mathbb{R}_+ \cup \{\infty\}$.
\item The upper bounds satisfy $\omega_{\max}(s(i))<\omega_{\max}(s(i\shortminus 1))<0$. Similarly, $\omega^{0}(s(i))\leq \omega^{0}(s(i\shortminus 1))$ with strict inequality for at least one $i=2,\ldots,I$.
\end{enumerate*}
\end{lemma}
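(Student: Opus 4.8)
The plan is to obtain $V(s,\cdot)$ as the limit of iterating the Bellman operator $T$ associated with problem~\eqref{eqn:p1}, started from the first-best value function $V^\ast(s,\cdot)$ of Proposition~\ref{prop:fbstm}, following \citet{Thomas-Worrall94}. Writing $V_n=T^nV^\ast$, the point of starting at $V^\ast$ rather than an arbitrary function is that $V^\ast$ ignores the young's participation constraint~\eqref{eqn:icr} and so lies above the true value, while $T$ reinstates that constraint; hence $V_1=TV^\ast\le V^\ast$, and monotonicity of $T$ propagates this to a nonincreasing sequence $V_{n+1}\le V_n$ that is bounded below by the autarkic value and therefore converges to a fixed point, which \citet{Thomas-Worrall94} identify as the optimal $V$. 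The strategy is then to verify that monotonicity, concavity, differentiability, and the stated boundary behaviour are each preserved by $T$ and so inherited by the limit.

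For Part~(i), monotonicity in $\omega$ is immediate, since raising $\omega$ only tightens the promise-keeping constraint~\eqref{eqn:pk} and shrinks $\Phi(s,\omega)$; concavity is preserved because the per-period objective in~\eqref{eqn:p1} is concave in $(c,(\omega_r))$, $\Phi(s,\omega)$ is jointly convex in $(\omega,c,(\omega_r))$, and averaging the optimal policies at two promises $\omega^1,\omega^2$ gives a feasible policy at the midpoint with value at least the average. Continuous differentiability is the delicate point: I would use a Benveniste--Scheinkman argument, constructing from the optimal consumption a locally differentiable lower support for $V$ and invoking the strict concavity of $\log$ to force the one-sided derivatives to agree, with the transition point $\omega^0(s)$ treated separately to confirm the derivative is continuous where $V$ passes from flat to strictly decreasing.

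For Part~(ii), I would attach a multiplier $\mu\ge0$ to~\eqref{eqn:pk} and $\lambda\ge0$ to~\eqref{eqn:icr}. The envelope condition then reads $V_\omega(s,\omega)=-\mu$, while the first-order condition in $\omega_r$ gives $V_\omega(r,\omega_r)=-(\beta/\delta)\lambda$ whenever the bounds~\eqref{eqn:ub}--\eqref{eqn:lb} are slack. The threshold $\omega^0(s)$ is the largest $\omega$ at which~\eqref{eqn:pk} is slack, so $\mu=0$ and $V_\omega=0$ below it; for $\omega>\omega^0(s)$ the constraint binds, $\mu>0$, and strict concavity of the objective delivers strict monotonicity and strict concavity. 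The two cases are separated by whether $c^\ast(s)$ is interior: if $\omega^\ast(s)>\omega_{\min}(s)$ then the flat region is nondegenerate ($\omega^0(s)>\omega_{\min}(s)$), and Assumption~\ref{ass:nofb} forces the young's constraint to bind strictly before $c^\ast(s)$ is reached, giving $\omega^0(s)<\omega^\ast(s)$ for at least one state; if $\omega^\ast(s)=\omega_{\min}(s)$ then $c\le s$ binds at $c=s$ and evaluating the first-order condition in $c$ there yields $V_\omega(s,\omega^0(s))\le(\beta/\delta)-((1-s)/s)$, with the inequality coming from the nonnegative multipliers. Letting $\omega\to\omega_{\max}(s)$ degenerates the feasible set and sends $\lambda$ to its limit $\lambda_{\max}(s)\in\mathbb{R}_+\cup\{\infty\}$, which reads off $\lim_{\omega\to\omega_{\max}(s)}V_\omega(s,\omega)=-(\beta/\delta)\lambda_{\max}(s)$.

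For Part~(iii), the upper bound is characterized by setting $c$ to its largest feasible value $1-e^{\omega}$ and each $\omega_r$ to $\omega_{\max}(r)$ in~\eqref{eqn:icr}, which at the maximal promise gives $\log(1-e^{\omega_{\max}(s)})=\log(s)+\beta\sum_r\pi(r)[\log(1-r)-\omega_{\max}(r)]$; since the right-hand side is strictly increasing in $s$ through $\log(s)$ while the summed term is common across states, $\omega_{\max}(s)$ is strictly decreasing in $s$, and it is negative because the minimum consumption share of the young is strictly positive. The ordering of $\omega^0(s)=\log(1-c^0(s))$, with $c^0(s)$ the promise-keeping-unconstrained optimal consumption, follows because a larger $s$ raises the young's autarky utility on the right of~\eqref{eqn:icr}, weakly raising the binding $c^0(s)$ and hence weakly lowering $\omega^0(s)$, strictly for at least one state by Assumption~\ref{ass:nofb}. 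I expect the main obstacle to be establishing continuous differentiability uniformly along the iteration while the constraint set degenerates at $\omega_{\max}(s)$, and in particular reconciling the one-sided derivatives at the kink $\omega^0(s)$, where the strict concavity that yields differentiability only switches on above the threshold.
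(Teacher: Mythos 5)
Your overall architecture matches the paper's: iterate the Bellman operator from the first-best frontier $V^\ast$ following Thomas--Worrall, get a monotone decreasing sequence whose concave limit is the optimal $V$; derive Part~(ii) from the envelope/first-order conditions; characterize $\omega_{\max}(s)$ by the system $\log(1-\exp(\omega_{\max}(s)))+\beta\sum_r\pi(r)\omega_{\max}(r)=\log(s)+\beta\sum_r\pi(r)\log(1-r)$ and read off the ordering and negativity; and get the ordering of $\omega^0(s)$ from the monotonicity of $c^0(s)$ in~$s$. Parts~(ii) and~(iii) are essentially the paper's arguments (modulo your swapping of the names $\lambda$ and $\mu$ relative to the paper's convention).

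The genuine gap is continuous differentiability in Part~(i), and you have correctly sensed where the trouble is but chosen a route that does not close it. A Benveniste--Scheinkman lower support built ``from the optimal consumption'' fails here for a structural reason: $\omega$ enters only through the promise-keeping constraint~\eqref{eqn:pk}, so the natural supporting policy fixes the continuation promises $(\omega_r)$ and moves only $c(\omega)=1-\exp(\omega)$. But when the young's participation constraint~\eqref{eqn:icr} binds at $\omega_0$ --- which is exactly the region $\omega>\omega^0(s)$ where differentiability is at issue --- lowering $c$ violates~\eqref{eqn:icr}, so the constructed function is feasible only for $\omega\leq\omega_0$ and delivers only the left derivative. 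Repairing it by co-moving the $\omega_r$'s to preserve~\eqref{eqn:icr} makes the support function's value depend on $V(r,\cdot)$ at varying arguments, which presupposes the differentiability you are trying to prove (or forces an induction along the $T$-iterates with a nontrivial interchange of limits, the ``uniformity'' problem you flag but do not resolve). The paper takes a different and cleaner route: since $V(s,\cdot)$ is concave, it is differentiable at $\omega$ iff its superdifferential there is a singleton, and the superdifferential is $-(\beta/\delta)$ times the set of Lagrange multipliers on~\eqref{eqn:pk}; uniqueness of the multiplier vector is then established by verifying the linear independence constraint qualification directly --- counting the $I+1$ choice variables against the constraints that can simultaneously be active (the bounds~\eqref{eqn:ub} and~\eqref{eqn:lb} for a given $r$ cannot both bind since $\omega_{\min}(r)<\omega_{\max}(r)$; not all lower bounds bind by Proposition~\ref{prop:st}; not all upper bounds bind for $\omega<\omega_{\max}(s)$; and~\eqref{eqn:nn} and~\eqref{eqn:icr} cannot both be active in the relevant range) and checking that the resulting active-constraint matrix has full rank. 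You need either this multiplier-uniqueness argument or a substantially more elaborate perturbation than the one you describe; as written, your differentiability step would not go through.
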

The strict concavity of the objective function and the convexity of the constraint set guarantee the concavity of $V(s,\omega)$ in $\omega$ with $\omega^{0}(s)=\sup\{\omega\mid V_\omega(s,\omega)=0\}$ if $V_\omega(s,\omega_{\min}(s))=0$ and $\omega^0(s) = \omega_{\min}(s)$ otherwise. Since the old will not transfer to the young voluntarily, ${\omega}_{\min}(s)=\log(1-s)$, the autarkic utility of the old. The upper endpoints $\omega_{\max}(s)$ are determined by the system of equations $\log(1-\exp(\omega_{\max}(s))) + \beta \sum_r\pi(r)\omega_{\max}(r) = \log(s)+\beta \sum_r\pi(r)\log(1-r)$. It can be checked that there is a unique nontrivial solution with $\omega_{\max}(s)$ decreasing with~$s$ and $\omega_{\min}(s)<\omega_{\max}(s)<0$. Analogous to $\omega_{\min}(s)$ and $\omega_{\max}(s)$, $\omega^{0}(s)$ is also decreasing in~$s$. Differentiability of $V(s, \omega)$ with respect to~$\omega$ follows because the constraint set satisfies a linear independence constraint qualification when $\omega\in[{\omega}_{\min}(s), {\omega}_{\max}(s))$. The left-hand derivative of $V(s,\omega)$ with respect to~$\omega$, evaluated at ${\omega}_{\max}(s)$, is finite if ${\omega}_{\max}(s)$ is part of the ergodic set and is infinite otherwise. 

\begin{remark}\label{rem:initialold}
Recall that $\bar{\omega}_0$ is the exogenous target utility given in constraint~\eqref{PK0}. Given the definition of $\omega^{0}(s)$, the planner chooses the initial utility of the old such that $\omega_0 = \max\{\omega^0(s_0), \bar{\omega}_0\}$. If the planner is not subject to constraint~\eqref{PK0} and can freely choose the initial utility, then the planner sets $\omega_0=\omega^{0}(s_0)$. Note that unlike $\bar{\omega}_0$, $\omega^0(s)$ is endogenous and depends on all of the model's primitives.
\end{remark}

\begin{remark}
The optimal sustainable intergenerational insurance is not renegotiation-proof because, in the case of default, it would be possible to promise a utility of $\omega^0(r)$, instead of $\omega_{\min}(r)$, without diminishing the planner's payoff. A renegotiation-proof outcome can be derived by replacing constraint~\eqref{eqn:icr} with $\log(c)  +\beta\sum_r\pi(r)\omega_{r} \geq \log(s)  +\beta\sum_r\pi(r)\omega^{0}(r)$. Since $\omega^0(r)$ is endogenous and appears in the constraint, a fixed-point argument similar to that used by \citet{Thomas-Worrall94} is required to find the solution. Although imposing this tighter constraint restricts risk sharing, the structure of the optimization problem is not affected. Therefore, we expect that the qualitative properties of the optimal solution are substantially unchanged.
\end{remark}

\paragraph{Optimal Policy Functions} We now turn to the properties of the policy functions $f(x)$ and $g_r(x)$. Given the differentiability of the value function, the first-order conditions for the programming problem~\ref{eqn:p1} are:
\begin{align}
f(x)  &= \min\left\{\frac{\delta(1+\mu(x))}{\beta(1+\lambda(x))+\delta(1+\mu(x))},s\right\}, \label{rmu1} \\
V_\omega(r, g_r(x)) & = - \frac{\beta}{\delta}\left(\mu(x)-\xi_r(x)+\eta_r(x)\right) \quad \text{for each $r\in\mathcal{I}$,} \label{fp1}
\end{align}
where $(\beta/\delta)\lambda(x)$ is the Lagrangian multiplier corresponding to the promise-keeping constraint~\eqref{eqn:pk}, 
$\beta\pi(r)\xi_r(x)$ are the multipliers corresponding to the upper bound on the promised utility~\eqref{eqn:ub}, $\beta\pi(r)\eta_r(x)$ are the multipliers corresponding to the lower bound on the promised utility~\eqref{eqn:lb}, and $\mu(x)$ is the multiplier corresponding to the participation constraints of the young~\eqref{eqn:icr}. Given the concavity of the programming problem, conditions~\eqref{rmu1} and~\eqref{fp1} are both necessary and sufficient. There is also an envelope condition:
\begin{gather}
V_\omega(x) = - \frac{\beta}{\delta}\lambda(x). \label{env1}
\end{gather}
Taken together, equations~\eqref{fp1} and~\eqref{env1} imply the following updating property:
\begin{gather}\label{eqn:update}
\lambda(x^\prime)=\mu(x)-\xi_r(x) +\eta_r(x),
\end{gather}
where $x^\prime = (r, g_r(x))$ is the next-period state variable.
To interpret equation~\eqref{eqn:update} suppose, for simplicity, that the boundary constraints on the promised utility do not bind. In this case, $\eta_r(x)=\xi_r(x)=0$ and the updating property reduces to $\lambda(x^\prime)=\mu(x)$. From equation~\eqref{fp1}, it follows that $\delta(1+\mu(x))$ is the relative weight placed on the utility of the young and $\beta(1+\lambda(x))$ is the relative weight placed on the utility of the old. Therefore, in this case, the updating property shows that the relative weight placed on the utility of the old corresponds to the tightness of the participation constraint they faced when they were young.

The following two Lemmas describe the properties of the policy functions.\footnote{To avoid the clumsy terminology of nondecreasing or weakly increasing, we describe a function as increasing if it is weakly increasing and highlight cases where a function is strictly increasing.}

\begin{lemma}\label{lem:g}
\begin{enumerate*}[label=(\roman*)]
\item The policy function $g_r(s,\cdot)\mathpunct{:}\Omega(s)\to[\omega^{0}(r), \omega_{\max}(r)]$ is continuous and increasing in~$\omega$ and strictly increasing for $g_r(s,\omega)\in(\omega^{0}(r), \omega_{\max}(r))$.
\item For each $r\in\mathcal{I}$ and $\omega\in(\omega_{\min}(s(i\shortminus 1)), \omega_{\max}(s(i)))$, $g_r(s(i), \omega)\geq g_r(s(i\shortminus 1), \omega)$ with strict inequality for at least one $i=2,\ldots,I$. For each $s\in\mathcal{I}$, $g_{r(i)}(s, \omega) \leq g_{r(i\shortminus 1)}(s,\omega)$ with strict inequality for at least one $i=2,\ldots,I$.
\item For endowment state~1, there is a critical value $\omega^c>\omega^{0}(1)$ such that $g_r(1,\omega)=\omega^{0}(r)$ for $\omega\in[\omega^{0}(1), \omega^c]$ and~$r\in\mathcal{I}$.
\item For each~$s\in\mathcal{I}$, there is a unique fixed point $\omega^f(s)=\min\{\omega^{\ast}(s), \omega_{\max}(s)\}$ of the mapping $g_s(s,\omega)$ with $g_s(s,\omega)>\omega$ for $\omega<\omega^f(s)$ and $g_s(s,\omega)<\omega$ for $\omega>\omega^f(s)$. For endowment state~$I$, $\omega^f(I)>\omega^{0}(I)$.
\end{enumerate*}
\end{lemma}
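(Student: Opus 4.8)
The plan is to base everything on the first-order conditions~\eqref{rmu1} and~\eqref{fp1}, the envelope condition~\eqref{env1}, the resulting updating property~\eqref{eqn:update}, and the concavity/endpoint properties of $V(s,\cdot)$ recorded in Lemma~\ref{lemma:V_omega}. Continuity of $f$ and of the $g_r$ I would obtain from Berge's maximum theorem: the objective in~\eqref{eqn:p1} is continuous, the constraint correspondence $\Phi(s,\cdot)$ is continuous with convex compact values, and the maximizing $c$ is unique by strict concavity. The "reset to the largest maximizer'' selection makes the $g_r$ single-valued where $V(r,\cdot)$ is flat, and their continuity then follows from~\eqref{fp1} together with the continuity of the multiplier $\mu$ (guaranteed by the linear-independence constraint qualification noted after Lemma~\ref{lemma:V_omega}).

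For the monotonicity in $\omega$ in part~(i), I would first use~\eqref{env1} and Lemma~\ref{lemma:V_omega}(ii) to note that promise-keeping~\eqref{eqn:pk} binds exactly when $\omega>\omega^{0}(s)$, so $c=1-\exp(\omega)$ is nonincreasing in $\omega$ and constant for $\omega\le\omega^{0}(s)$. Holding $c$ fixed, the inner choice of $(\omega_r)$ maximizes $\sum_r\pi(r)V(r,\omega_r)$ subject to the single linear constraint $\sum_r\pi(r)\omega_r\ge B(c)$ from~\eqref{eqn:icr} plus the box constraints, where $B(c)$ is decreasing in $c$, hence nondecreasing in $\omega$. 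A short multiplier argument then gives monotonicity: each interior $\omega_r$ satisfies $V_\omega(r,\omega_r)=-(\beta/\delta)\mu$ by~\eqref{fp1}, so is nondecreasing in $\mu$, while the binding constraint $\sum_r\pi(r)\omega_r=B$ forces $\mu$ to be nondecreasing in $B$. This yields weak monotonicity of each $g_r$ in $\omega$; strictness on $(\omega^{0}(r),\omega_{\max}(r))$ follows because there $V_\omega(r,\cdot)$ is strictly decreasing, so a strict increase in $\mu$ produces a strict increase in $g_r$.

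Parts~(iii) and~(iv) I would read off the same relations. For~(iii), at the lowest state $s(1)$ the young's autarky value $\log s(1)+\beta\sum_r\pi(r)\log(1-r)$ is smallest, so at $\omega=\omega^{0}(1)$ the constraint~\eqref{eqn:icr} is slack when the promises take the reset values $\omega_r=\omega^{0}(r)$ (the points where $V_\omega(r,\cdot)=0$, consistent with $\mu=0$ and slack box multipliers in~\eqref{fp1}); continuity of the slack in $\omega$ then produces a threshold $\omega^{c}>\omega^{0}(1)$ below which $\mu=0$ and $g_r(1,\omega)=\omega^{0}(r)$. For~(iv), the key computation is that at any interior fixed point of $g_s(s,\cdot)$ the updating property~\eqref{eqn:update} gives $\lambda=\mu$, and substituting into~\eqref{rmu1} forces $c=\delta/(\beta+\delta)$, hence $\omega=\log(\beta/(\beta+\delta))=\omega^{\ast}$; clamping to the feasible range with upper endpoint $\omega_{\max}(s)$ yields $\omega^{f}(s)=\min\{\omega^{\ast}(s),\omega_{\max}(s)\}$. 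The crossing property comes from rewriting~\eqref{rmu1} as $\beta(1+\lambda)/(\delta(1+\mu))=\exp(\omega)/(1-\exp(\omega))$, which gives $\lambda<\mu$ for $\omega<\omega^{\ast}$ and $\lambda>\mu$ for $\omega>\omega^{\ast}$; combining this with~\eqref{eqn:update} and the fact that $\lambda(s,\cdot)=-(\delta/\beta)V_\omega(s,\cdot)$ is increasing shows $g_s(s,\omega)>\omega$ below $\omega^{\ast}$ and $g_s(s,\omega)<\omega$ above it, which simultaneously delivers uniqueness. That $\omega^{f}(I)>\omega^{0}(I)$ follows from Assumption~\ref{ass:nofb} and Lemma~\ref{lemma:V_omega}(ii), which guarantee $\omega^{0}(I)<\omega^{\ast}(I)$.

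The part I expect to be the main obstacle is the cross-state monotonicity in~(ii), especially $g_{r(i)}(s,\omega)\le g_{r(i-1)}(s,\omega)$. Since all $g_r$ share the common multiplier through $V_\omega(r,g_r)=-(\beta/\delta)\mu$, this ordering is equivalent to showing that $V_\omega(r,\cdot)$ is ordered across the states $r$ (smaller for larger $r$) on the overlap of their domains $\Omega(r)$, which differ. This comparison is not immediate from Lemma~\ref{lemma:V_omega}, which mainly records within-state concavity and the orderings of $\omega^{0}$ and $\omega_{\max}$; I would therefore establish it by induction along the iterative construction of $V$ from the first-best value functions $V^{\ast}(s,\cdot)$ following \citet{Thomas-Worrall94}, verifying that the operator in~\eqref{eqn:p1} preserves the cross-state ordering of the marginal value, with strictness for at least one $i$ inherited from the strict endpoint ordering in Lemma~\ref{lemma:V_omega}(iii). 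The monotonicity in the current state $s$ is easier: where promise-keeping binds, $c=1-\exp(\omega)$ is independent of $s$ while the right-hand side of~\eqref{eqn:icr} increases in $s$, forcing a higher common $\mu$ and hence higher $g_r$, with strictness for at least one $i$ wherever the constraint binds at both $s(i)$ and $s(i\shortminus 1)$.
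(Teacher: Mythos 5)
Your treatment of parts (i), (iii) and (iv) follows essentially the same route as the paper: single-valuedness and continuity from strict concavity and convexity of $\Phi(s,\omega)$, monotonicity of each $g_r$ in $\omega$ via the common multiplier $\mu$ and the relation $V_\omega(r,g_r)=-(\beta/\delta)\mu$ with $h_r^{-1}$ increasing, the resetting threshold $\omega^c$ from strict slackness of~\eqref{eqn:icr} at $(1,\omega^0(1))$, and the fixed-point identification from $\lambda=\mu$ forcing $c=\delta/(\beta+\delta)$ in~\eqref{rmu1}, with the single-crossing property read off the sign of $\mu-\lambda$. The current-state half of part~(ii) is also in the paper's spirit, although the paper runs it as a contradiction argument precisely because on the stated range $(\omega_{\min}(s(i\shortminus1)),\omega_{\max}(s(i)))$ the promise-keeping constraint need not bind for both states, so "$c$ is the same and the autarky value is higher'' is not available everywhere; the contradiction handles the case where $f(i,\omega)\neq f(i\shortminus1,\omega)$.

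The genuine divergence is exactly where you flagged it: the ordering $g_{r(i)}(s,\omega)\leq g_{r(i\shortminus1)}(s,\omega)$, which reduces to ordering $V_\omega(r,\cdot)$ across next-period states on the overlap of their domains. You propose to establish this by induction along the value-function iteration from $V^\ast$. That route is heavier than necessary and has its own wrinkles (the first-best marginal values are \emph{equal}, not strictly ordered, on the overlap, so strictness must be generated inside the induction, and the induction step would itself require the very multiplier comparison you are trying to avoid). The paper instead gets the ordering directly at the optimum from ingredients you already have: on $(\omega^0(i\shortminus1),\omega_{\max}(i))$ both promise-keeping constraints bind, so $f(i,\omega)=f(i\shortminus1,\omega)=1-\exp(\omega)$; the first half of (ii) gives $\mu(i,\omega)\geq\mu(i\shortminus1,\omega)$; and then~\eqref{rmu1}, with $f$ held equal, forces $\lambda(i,\omega)\geq\lambda(i\shortminus1,\omega)$, i.e.\ $V_\omega(i,\omega)\leq V_\omega(i\shortminus1,\omega)$ by the envelope condition~\eqref{env1}. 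Combining this pointwise ordering with $V_\omega(i,g_i(x))=V_\omega(i\shortminus1,g_{i\shortminus1}(x))$ from~\eqref{fp1} and concavity delivers $g_{i\shortminus1}(x)\geq g_i(x)$, with strictness inherited from $\mu(I,\omega)>\mu(1,\omega)$. So the obstacle you anticipated dissolves into a two-line consequence of the first-order and envelope conditions; as written, your proof of this step is a plan rather than an argument, and that is the one substantive gap.
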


\begin{lemma}\label{lem:c}
\begin{enumerate*}[label=(\roman*)]
\item The policy function $f(s, \cdot)\mathpunct{:}\Omega(s) \to (0,s]$ where $f(s, \omega) =1-\exp(w)$ for $\omega\geq\omega^0(s)$ and $f(s, \omega) =1-\exp(\omega^0(s))$ for $\omega<\omega^0(s)$.
\item $c^0(s) \colonequals f(s, \omega^{0}(s))$ where $c^{0}(s(i))\geq c^{0}(s(i\shortminus 1))$ with strict inequality for at least one~$i=2,\ldots,I$.
\item At the fixed point $\omega^f(s)$, $f(s,\omega^f(s)) \leq c^{\ast}(s)$ with equality for $\omega^f(s)< \omega_{\max}(s)$.
\end{enumerate*}
\end{lemma}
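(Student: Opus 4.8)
The plan is to obtain all three parts from the first-order and envelope conditions together with the shape of $V(s,\cdot)$ recorded in Lemma~\ref{lemma:V_omega}. The organizing fact is that $\omega$ enters the constraint set $\Phi(s,\omega)$ only through the promise-keeping constraint~\eqref{eqn:pk}, $\log(1-c)\geq\omega$, so everything turns on when that constraint binds. The envelope condition~\eqref{env1} gives $V_\omega(s,\omega)=-(\beta/\delta)\lambda(s,\omega)$, and Lemma~\ref{lemma:V_omega}(ii) states that $V_\omega=0$ exactly on $[\omega_{\min}(s),\omega^0(s)]$ and $V_\omega<0$ for $\omega>\omega^0(s)$; hence $\lambda(s,\omega)>0$ precisely when $\omega>\omega^0(s)$. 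For such $\omega$, complementary slackness forces~\eqref{eqn:pk} to hold with equality, so $\log(1-f(s,\omega))=\omega$ and $f(s,\omega)=1-\exp(\omega)$, the upper branch of part~(i).

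For $\omega\leq\omega^0(s)$ the multiplier $\lambda$ vanishes and~\eqref{eqn:pk} is slack, so I would first argue that the optimal consumption is independent of $\omega$ there. Because $\omega$ appears only in~\eqref{eqn:pk}, the feasible correspondence $\Phi(s,\cdot)$ shrinks as $\omega$ rises; combined with the flatness of $V(s,\cdot)$ on $[\omega_{\min}(s),\omega^0(s)]$ and the uniqueness of the optimal $c$ (from strict concavity of $(\beta/\delta)\log(1-c)+\log(c)$), the maximiser at the right endpoint $\omega^0(s)$ is feasible and optimal at every smaller $\omega$, so $f(s,\cdot)$ is constant on this interval. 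To identify the constant with $1-\exp(\omega^0(s))$ I would invoke continuity of $f$ (from the maximum theorem and uniqueness of the $c$-component, as already used for Lemma~\ref{lem:g}(i)) and match the upper branch at the kink, $f(s,\omega)=\lim_{\omega'\downarrow\omega^0(s)}(1-\exp(\omega'))=1-\exp(\omega^0(s))$. The range $(0,s]$ then follows because every $\omega\in\Omega(s)$ satisfies $\omega\leq\omega_{\max}(s)<0$, giving $f=1-\exp(\omega)>0$, while $\omega\geq\omega_{\min}(s)=\log(1-s)$ (and $\omega^0(s)\geq\omega_{\min}(s)$) gives $f\leq 1-\exp(\omega_{\min}(s))=s$, so $f(s,\cdot)$ indeed maps into $(0,s]$ and~\eqref{eqn:nn} is respected.

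Part~(ii) is then immediate: $c^0(s)=f(s,\omega^0(s))=1-\exp(\omega^0(s))$, and since $1-\exp(\cdot)$ is strictly decreasing, the ordering $\omega^0(s(i))\leq\omega^0(s(i-1))$ from Lemma~\ref{lemma:V_omega}(iii), strict for at least one $i$, transfers to $c^0(s(i))\geq c^0(s(i-1))$, strict for the same $i$. For part~(iii), note first that $\omega^0(s)\leq\omega^*(s)$ (Lemma~\ref{lemma:V_omega}(ii)) and $\omega^0(s)\leq\omega_{\max}(s)$, so $\omega^0(s)\leq\min\{\omega^*(s),\omega_{\max}(s)\}=\omega^f(s)$ by Lemma~\ref{lem:g}(iv); part~(i) therefore gives $f(s,\omega^f(s))=1-\exp(\omega^f(s))$. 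Using $\omega^*(s)=\log(1-c^*(s))$ (immediate from the definitions of $c^*(s)$ and $\omega^*(s)$), the comparison of $f(s,\omega^f(s))$ with $c^*(s)$ reduces to comparing $\omega^f(s)$ with $\omega^*(s)$. If the fixed point is interior, $\omega^f(s)<\omega_{\max}(s)$, then $\omega^f(s)=\omega^*(s)$ and $f(s,\omega^f(s))=c^*(s)$, the asserted equality; if instead the upper bound binds, $\omega^f(s)=\omega_{\max}(s)\leq\omega^*(s)$, so $f(s,\omega^f(s))=1-\exp(\omega_{\max}(s))$ is the smallest sustainable consumption share, and the ordering $\omega_{\max}(s)\leq\omega^*(s)$ together with the monotonicity of $1-\exp(\cdot)$ pins down the comparison with $c^*(s)$ (this boundary case is exactly the top state under Assumption~\ref{ass:nofb}).

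The only real work is in part~(i) on the slack region: showing that the optimal $c$ is genuinely constant on $[\omega_{\min}(s),\omega^0(s)]$ and equals $1-\exp(\omega^0(s))$, rather than being left underdetermined where $V$ is flat. The monotone-feasible-set comparison, uniqueness of the $c$-component, and continuity at $\omega^0(s)$ are what close this gap; once the explicit formula in part~(i) is in hand, parts~(ii) and~(iii) are short corollaries.
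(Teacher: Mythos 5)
Your parts~(i) and~(iii) are correct and track the paper's own argument: for $\omega>\omega^0(s)$ the envelope condition plus Lemma~\ref{lemma:V_omega}(ii) gives $\lambda(s,\omega)>0$, so~\eqref{eqn:pk} binds and $f=1-\exp(\omega)$; on the flat region your monotone-feasible-set-plus-continuity argument is a valid (and clean) substitute for the paper's route, which instead pins down the constant through the first-order condition~\eqref{rmu1} with $\lambda=0$ and the unique multiplier $\mu^0(s)$ solving~\eqref{eqn:icr}, yielding $\omega^0(s)=\log(1-\phi(s,\mu^0(s)))$. Part~(iii) is essentially identical to the paper's one-line deduction from $\omega^f(s)=\min\{\omega^\ast(s),\omega_{\max}(s)\}$, though you should commit to the direction of the inequality in the boundary case rather than saying it is ``pinned down.''

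The genuine gap is in part~(ii): your argument is circular relative to the paper's logical structure. You derive $c^0(s(i))\geq c^0(s(i-1))$ from the ordering $\omega^0(s(i))\leq\omega^0(s(i-1))$ asserted in Lemma~\ref{lemma:V_omega}(iii). But the paper proves precisely that ordering \emph{from} Lemma~\ref{lem:c}(ii): its proof of Lemma~\ref{lemma:V_omega}(iii) reads ``Since $\omega^{0}(s)=\log(1-c^{0}(s))$, it follows from Lemma~\ref{lem:c}(ii) that $\omega^{0}(s(i))\leq\omega^{0}(s(i\shortminus 1))$\ldots''. No independent proof of the monotonicity of $\omega^0(\cdot)$ is available at the point where you invoke it, so your part~(ii) assumes what it must establish. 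The correct direction of the argument runs through the participation constraint: since the autarky utility $\hat{\upsilon}(s)=\log(s)+\beta\sum_r\pi(r)\log(1-r)$ is strictly increasing in~$s$, the multiplier $\mu^0(s)$ that makes~\eqref{eqn:icr} hold at the resetting point satisfies $\mu^0(s(i))\geq\mu^0(s(i-1))$ (strictly if $\mu^0(s(i))>0$), and then $c^0(s)=\phi(s,\mu^0(s))$ with $\phi$ increasing in~$\mu$ delivers the ordering of the consumption shares; the strict inequality for at least one~$i$ comes from $\mu^0(1)=0$ (Assumption~\ref{ass:LS}) and $\mu^0(I)>0$ (Assumption~\ref{ass:nofb}). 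This is also why the paper's proof of part~(i) bothers to construct $\mu^0(s)$ explicitly: that object is the input to part~(ii), and your version of part~(i), while correct, does not produce it.
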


The main properties of Lemmas~\ref{lem:g} and~\ref{lem:c} follow straightforwardly from the objective to share risk subject to the participation constraints. The policy function $g_r(s, \omega)$ is increasing in~$\omega$ (Lemma~\ref{lem:g}(i)), whereas $f(s,\omega)$ is decreasing in~$\omega$ (Lemma~\ref{lem:c}(i)). A higher promise to the current old means a lower consumption share for the current young and, for endowment states in which the participation constraint binds, this requires a higher future promise of utility for their old age as compensation. The consumption share of the young does not directly depend on~$s$ and depends only indirectly on~$s$ when $\omega=\omega^{0}(s)$ or $\omega=\omega_{\max}(s)$ (Lemma~\ref{lem:c}(ii)), whereas $g_r(\omega, s)$ is increasing in~$s$ and decreasing in~$r$ (Lemma~\ref{lem:g}(ii)). The policy function $g_r(\omega, s)$ is increasing in~$s$ because a higher endowment share of the young today is associated with a larger risk-sharing transfer, which, if the participation constraint is binding, has to be compensated by a higher promise for tomorrow. Likewise, the future promise is decreasing in~$r$ because a higher endowment share of the young tomorrow is associated with a higher consumption share when the participation constraint binds and, hence, a lower consumption share of the old tomorrow. Since the optimum is nontrivial and differs from the first best, there is at least one strict inequality in the relations of Lemma~\ref{lem:g}(ii), so that, $g_r(s(I), \omega)> g_r(s(1), \omega)$ and $g_{r(I)}(s,\omega)<g_{r(1)}(s, \omega)$.

Lemma~\ref{lem:g}(iii) shows that there is a range of $\omega$ above $\omega^{0}(1)$ such that the participation constraint of the young does not bind and hence, $g_r(1, \omega)=\omega^{0}(r)$ in this range. This is analogous to the deterministic case discussed in Section~\ref{sec:bench} where the policy function has an initial flat section (see, Figure~\ref{fig:pol}). More generally, when the participation constraint of the young does not bind, it follows from equation~\eqref{env1} that $g_r(x)=\omega^{0}(r)$ and $x^\prime=(r, \omega^{0}(r))$. In this case, we say that the promise is \emph{reset\/}. The promise is reset to the value that gives the most to the current old while maximizing the payoff to the planner. Lemma~\ref{lem:g}(iii) shows that resetting, in particular, occurs in state~1 for any $\omega\in[\omega^{0}(1), \omega^c]$.

Lemmas~\ref{lem:g}(iv) and~\ref{lem:c}({iii) describe what happens when the same endowment share repeats in successive periods. Suppose for simplicity that  $\eta_s(x)=\xi_s(x)=0$ and $f(x)>s$. From equations~\eqref{fp1} and~\eqref{env1}, $\mu(s, \omega^f(s))=\lambda(s, \omega^f(s))$ where $\omega^f(s)$ is the fixed point of $g_s(s,\omega)$. Using equation~\eqref{rmu1}, this implies that the consumption share is first best and hence, $\omega^f(s)=\omega^{\ast}(s)$. Furthermore, $g_s(s,\omega)>\omega$ for $\omega<\omega^f(s)$ and $g_s(s,\omega)<\omega$ for $\omega>\omega^f(s)$. That is, when the same endowment share repeats, the promise falls if the previous promise was above the first best and rises if the previous promise was below the first best. It follows that the policy function $g_s(s,\omega)>\omega$ cuts the 45$^\circ$ line once from above. To understand this, consider some $\omega>\omega^f(s)$ and suppose, to the contrary, that $g_s(s,\omega)\geq \omega$. In this case, equations~\eqref{fp1} and~\eqref{env1} imply that $\mu(s, \omega^f(s))>\lambda(s, \omega^f(s))$, which in turn implies $\omega<\omega^{\ast}(s)=\omega^f(s)$ from equation~\eqref{rmu1}, a contradiction. A similar argument shows that $g_s(s, \omega)>\omega$ for $\omega<\omega^f(s)$.\footnote{%
The argument can be extended to the case where the nonnegativity and upper bound constraints bind, and a complete proof of Lemma~\ref{lem:g} is provided in the Appendix.}

The implications of Lemmas~\ref{lem:g} and~\ref{lem:c} can be illustrated by considering a particular \emph{sample path\/} of the consumption share generated for a given history of endowment shares $s^T=(s_0, s_1,\ldots, s_T)$. The sample path of the consumption share is constructed iteratively from the policy functions $f(s,\omega)$ and $g_r(s,\omega)$ starting with $x_0=(s_0,\omega_0)$ as follows: $c_t=f^t(s^t,x_0)\colonequals f(s_t,g^t(s^t,x_0))$, where $g^t(s^t,x_0)  \colonequals g_{s_{t}}(s_{t-1},g^{t-1}(s^{t-1},x_0))$ and $g^0(s_0,x_0)=\omega_0$.

\begin{figure}[ht]
\begin{center}
\includegraphics{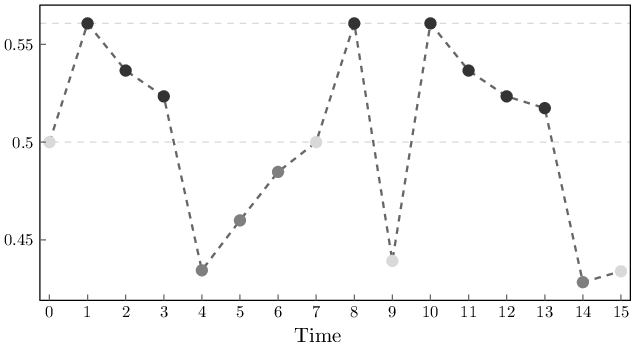}
\end{center}
\caption{Sample Path of the Young Consumption Share.} \label{fig:sp2}
\begin{figurenotes}
The illustration is for a case where $I=3$ and $\beta=\delta$ (where the first-best consumption share is $\sfrac{1}{2}$). The shade of the dots indicates the state~$s_t$: light gray for~$s_t=s(1)$, mid gray for~$s_t=s(2)$ and dark gray for~$s_t=s(3)$. The case illustrated has $s_0=s(1)$ and $\omega_0=\omega^{0}(1)=-\log(2)$.
\end{figurenotes}
\end{figure}

Figure~\ref{fig:sp2} depicts such a sample path in a three-state example and illustrates three important properties.\footnote{The example has $\beta=\delta=\exp(-\sfrac{1}{75})$ (corresponding to an interest rate of $\sfrac{1}{75}$), $s(1)=0.5$, $s(2)=0.625$ and $s(3)=0.8125$, with probabilities $\pi(1)=0.5$, $\pi(2)=0.25$ and $\pi(3)=0.25$.} First, the optimal sustainable consumption share fluctuates above and below the first-best level of $c^{\ast}(s)=0.5$.\footnote{By Lemma~\ref{lemma:V_omega}(ii), $\omega^{0}(s)\leq \omega^{\ast}(s)$. By Assumption~\ref{ass:LS}, $\omega^{\ast}(1)=\omega_{\min}(1)$. Hence, $\omega^{0}(1)=\omega^{\ast}(1)$. Since $g_1(s,\omega)$ is increasing in~$\omega$, the promise is above the first-best level (or equivalently, the consumption share is below the first-best level) in state~1. From Lemma~\ref{lem:g}(iii), $\omega^{0}(I)<\omega^{\ast}(I)$ and therefore, for low values of~$\omega$, the promise is below the first-best level (or equivalently, the consumption share is above the first-best level) in state~$I$.} Second, the path is history dependent. That is, the consumption share varies both with the current endowment state and the history of shocks. For example, the endowment share $s_t=s(3)$ occurs at $t=8$ and $t=13$, but the consumption share differs at the two dates. When state~3 occurs, the participation constraint of the young binds, and hence, a higher future utility must be promised to ensure that they are willing to share more of their current endowment. Subsequent realizations of state~3 exacerbate the situation because the young of the next generation must also deliver on past promises, meaning that the consumption share of the young falls when state 3 repeats. This property is evident in Figure~\ref{fig:sp2} where $c_t$ falls when state~3 repeats ($t=2,3$ and $t=11,12,13$). This implies that the consumption share is not necessarily monotonic in the endowment. For example, the consumption share at $t=4$, when the endowment share is $s_4=s(2)$, is lower than the consumption share at $t=9$, when the endowment share is $s_9=s(1)<s(2)$. This nonmonotonicity occurs because the promise made to the old for $t=4$ is higher than that made for $t=9$. Third, there are points in time when the consumption share returns to the same value in the same state. For example, this happens at $t=7$, which has the same state (state~1) and same consumption share as at $t=0$. In this case, there is \textit{resetting\/}. The path of the consumption share is the same following resetting if the same sequence of endowment shares occurs. Note that the definition of the resetting points is not unique. For example, there is resetting also at $t=1, 8, 10$, when state~3 occurs after state~1. Before resetting occurs, the effect of a shock persists. However, once resetting occurs, the history of shocks is forgotten, and the subsequent sample path is identical when the same sequence of states occurs. That is, the sample paths between resettings are probabilistically identical. This property is used in the next section to establish convergence to a unique invariant distribution. 

\section{Convergence to the Invariant Distribution}\label{sec:conv}

This section considers the long-run distribution of the pair $x=(s, \omega)$. It shows that there is a unique and countable ergodic set $E$ with cardinality $\abs{E}>I$ and strong convergence to the corresponding invariant distribution. Let $\Omega=\cup_{r\in \mathcal{I}}\Omega(r)$ and $\mathcal{X}=\mathcal{I}\times \Omega$. The future evolution of~$x$ is a Markov chain defined by the transition function $P(x,A\times B) \colonequals \Pr \{x_{t+1}\in A \times B \mid x_t = x\} = \sum\nolimits_{r\in A} \pi(r) \mathbb{1}_B g_r(x)$ where $A\subseteq \mathcal{I}$, $B\subseteq\Omega$ and $\mathbb{1}_B g_r(x)=1$ if $g_r(x) \in B$ and zero otherwise. The chain starts from $x_0=(s_0, \omega_0)$ with an initial promise $\omega_0=\max\{\omega^0(s_0), \bar{\omega}_0\}$. 

The monotonicity and resetting properties of Lemma~\ref{lem:g} imply that starting from any $x_t$, a sequence of $k$ consecutive state~1s (where the endowment share is~$s(1)$) leads to $x_{t+k} = (1, \omega^0(1))$ for a finite~$k$. This is because $g_{1}(1, \omega) < \omega$, so that repetition of state~1 leads to a decrease in $\omega$ and since $g_{1}(1, \omega)=\omega^{0}(1)$ for some $\omega>\omega^{0}(1)$, $\omega$ falls to $\omega^{0}(1)$ in finite time. In this case, we say that $x$ is \textit{reset\/} to $(1, \omega^{0}(1))$ at time~$t+k$. Since the probability of state~1 is $\pi(1)>0$, the probability of a history of $k$ consecutive state~1s is $\pi(1)^k>0$. An immediate consequence is that Condition~\textbf{M} of \citet[][page~348]{Stokey-Lucas89} is satisfied, and hence, there is strong convergence in the uniform metric to a unique invariant probability measure $\phi(X)$ for $X\in\mathcal{X}$.\footnote{%
Condition~{\bf M} is satisfied because there is a $k\geq1$ and an $\epsilon>0$ such that the $k$-step transition function $P^k(x, \{(1,{\omega}^{0}(1))\})>\epsilon$ for any $x\in\mathcal{X}$. In this case, $(1,\omega^0(1))$ is an atom of the Markov chain. \citet{Acikgoz18}, \citet{Fossetal18}, and \citet{Zhu17} use similar arguments to establish strong convergence in the Aiyagari precautionary savings model with heterogeneous agents.}

Since there is a positive probability that $x$ is reset to $(1,{\omega}^{0}(1))$ in finite time, the Markov chain for~$x$ is \textit{regenerative\/} and $(1,{\omega}^{0}(1))$ is a regeneration point \citep[see, for example,][]{Fossetal18}. For simplicity, suppose first that the process starts at $x_0=(1, \omega^{0}(1))$. Recall that $g^t(s^t,x_0)= g_{s_{t}}(s_{t-1},g^{t-1}(s^{t-1},x_0))$ where $g^0(1,x_0)=\omega^{0}(1)$. Let $r_x\colonequals \min \{k\geq 1 \mid (s, g^k((s^{k-1},s), x_0))=x\}$ denote the first time that the process is equal to~$x$ starting from $x_0$. Then, $r_{x_{0}}$ is the first regeneration time, the first time after the initial period at which $x_0$ reoccurs. Any sample path of promises can be divided into different blocks, with each block starting at a regeneration time. This can be seen in Figure~\ref{fig:sp2} where the first regeneration time occurs at~$t=7$. Although the blocks between regeneration points are not identical, the strong Markov property ensures that they are i.i.d. At each regeneration time, past shocks are forgotten, and the future evolution of~$x$ is probabilistically identical. The regeneration times are also i.i.d.\ and the expected regeneration time is $\varphi\colonequals \mathbb{E}_0[r_{x_{0}}]$, the same for any block. Moreover, $\varphi$ is finite since all positive probability paths must have a sequence of endowment states leading to~$x_0=(1, \omega^{0}(1))$ as described above.

Now consider a starting point $x_0=(i, \omega^0(i))$ for some initial state~$s_0=s(i)$. Given that $g_i(1, \omega^{0}(1))=\omega^{0}(i)$ by Lemma~\ref{lem:g}(iii), a positive probability path that leads back to $x_0$ is constructed by a sequence of consecutive state~1s, as outlined above, followed by state~$i$. Since the transition from state~1 to state~$i$ occurs with positive probability, $(i, \omega^{0}(i))$ is a regeneration point, and the blocks between these regeneration points are also probabilistically identical. As discussed in Remark~\ref{rem:initialold}, in the absence of constraint~\eqref{PK0}, the planner sets $\omega_0=\omega^{0}(i)$ and the process starts in the ergodic set. However, if constraint~\eqref{PK0} must be respected and $\bar{\omega}_0>\omega^{0}(i)$, then $x_0=(i, \bar{\omega}_0)$ and the process may start outside of the ergodic set. In this case, there is still a positive probability path back to a resetting point $(i, \omega^{0}(i))$. The only difference is that the first block in the regenerative process is different from subsequent blocks (which all start from $(i, \omega^{0}(i))$). However, this does not change the convergence properties of the process. 

Let $R_{x}\colonequals \Pr(r_x <\infty)$ be the probability of attaining the pair $x=(s,\omega)$ in finite time starting from $x_0$. If $R_x>0$, then $x$ is said to be \textit{accessible\/} from $x_0$. Since $x_0= (i, \omega^{0}(i))$ has a positive probability mass and the set of endowment states $\mathcal{I}$ is finite and time is discrete, the associated set $E\colonequals \{x \mid R_{x} >0\}$ is countable. Moreover, the set $E$ is an equivalence class because every $x\in E$ is accessible from $x_0$, and there is a path from every accessible $x$ back to $x_0$. Therefore, $E$ is an absorbing set, that is, $P(x, E)=1$ for all $x \in E$, and since no proper subset of $E$ has this property, it is \textit{ergodic\/} \citep[see, for example,][chapter 11]{Stokey-Lucas89}. Let $\varphi_{x}$ denote the \textit{expected return time\/} to~$x$ where $\varphi_{x_{0}}\equiv\varphi$. With $\varphi$ finite, it follows that $R_{x}=1$ and each $\varphi_{x}$ is finite, that is, each $x\in E$ is positive recurrent.

Since the ergodic set~$E$ is countable, standard results on the convergence of positive recurrent Markov chains apply. To state these results, let $P$ denote the transition matrix with elements $P(x, x^\prime)=\pi(r)\mathbb{1}_{\omega_{r}} g_r(x)$ where $x=(s,\omega)$ and $x^\prime=(r,g_r(x))$. Similarly, let $P^{k}(x, x^\prime)$ be the elements of the corresponding $k$-period transition matrix.

\begin{proposition}
\label{prop:convergence}
\begin{enumerate*}[label=(\roman*)]
\item There is pointwise convergence to a unique and nondegenerate invariant distribution $\phi=\phi P$ where for each $x\in E$, $\phi(x)=\lim\nolimits_{k\to\infty} P^{k}(\cdot, x)= \varphi_{x}^{-1}$.
\item The invariant distribution is the limit of the iteration $\phi_{t+1}(x^{\prime}) = \sum\nolimits_{x \in E}  P(x,x^\prime) \phi_t(x)$ for any given $\phi_0(x)$. \item The cardinality $\abs{E}>I$.
\end{enumerate*}
\end{proposition}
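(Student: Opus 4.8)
The plan is to treat parts~(i) and~(ii) as consequences of the standard convergence theory for positive recurrent Markov chains on a countable state space, since the discussion preceding the statement has already done the substantive work, and to prove part~(iii) by exhibiting an explicit extra element of the ergodic set beyond the $I$ resetting points.

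\textbf{Parts (i) and (ii).} The paragraphs preceding the proposition establish that $E$ is a countable, ergodic (hence irreducible) set on which every $x$ is positive recurrent with finite expected return time $\varphi_x$. The one additional ingredient I would record explicitly is aperiodicity: by Lemma~\ref{lem:g}(iii), $g_1(1,\omega^{0}(1))=\omega^{0}(1)$, so the chain has a self-loop at the regeneration point $(1,\omega^{0}(1))$ occurring with probability $\pi(1)>0$; that state therefore has period one, and by irreducibility the whole chain is aperiodic. With irreducibility, aperiodicity, and positive recurrence in hand, the standard ergodic theorem for countable Markov chains \citep{Stokey-Lucas89} yields a unique invariant distribution $\phi$ with $\phi=\phi P$, the identity $\phi(x)=\varphi_x^{-1}$ relating stationary mass to the mean return time, and pointwise convergence $P^{k}(y,x)\to\phi(x)$ for every $y,x\in E$; this is part~(i). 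Part~(ii) follows at once: for any initial distribution $\phi_0$ on $E$ we have $\phi_t(x')=\sum_{x\in E}\phi_0(x)P^{t}(x,x')$, and since $P^{t}(x,x')\leq1$ and $\sum_{x}\phi_0(x)=1$, bounded convergence lets me pass the limit through the countable sum to get $\phi_t(x')\to\sum_{x}\phi_0(x)\phi(x')=\phi(x')$.

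\textbf{Part (iii).} The goal is to show the ergodic set is strictly larger than the first-best support, which consists of the $I$ points $(s,\omega^{\ast}(s))$ identified in Proposition~\ref{prop:fbstm}. First I note that the $I$ resetting points $(i,\omega^{0}(i))$, $i=1,\ldots,I$, all lie in $E$ (each is a regeneration point, as shown in the text) and are mutually distinct because they have distinct first coordinates. To produce an $(I+1)$-st element, I start a positive-probability path at $x_0=(1,\omega^{0}(1))$, draw state~$I$ once to reach $(I,\omega^{0}(I))$ using $g_I(1,\omega^{0}(1))=\omega^{0}(I)$ from Lemma~\ref{lem:g}(iii), then draw state~$I$ a second time to reach $\bigl(I,\,g_I(I,\omega^{0}(I))\bigr)$. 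By Lemma~\ref{lem:g}(iv), $\omega^{0}(I)<\omega^{f}(I)$ and $g_I(I,\cdot)$ lies strictly above the $45^{\circ}$ line below its fixed point, so $g_I(I,\omega^{0}(I))>\omega^{0}(I)$. This point has first coordinate $I$ and second coordinate strictly exceeding $\omega^{0}(I)$, so it differs from every resetting point (the only one it could match is $(I,\omega^{0}(I))$, which it does not). As both transitions occur with probability $\pi(I)>0$, the point is accessible from $x_0$ and hence lies in $E$, giving $\abs{E}\geq I+1>I$. Nondegeneracy of $\phi$ asserted in part~(i) is then immediate, since $\phi(x)=\varphi_x^{-1}>0$ on a set with at least two points.

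The main obstacle is really only part~(iii): parts~(i) and~(ii) are bookkeeping layered on top of the regeneration and recurrence structure already built, whereas part~(iii) needs the specific fixed-point geometry of $g_I(I,\cdot)$ from Lemma~\ref{lem:g}(iv) — namely that repetition of the top endowment state drives $\omega$ strictly upward, away from its reset value — to certify that history dependence genuinely enlarges the ergodic set beyond the stationary first best. A minor point to handle with care is confirming that the extra point cannot collapse onto one of the $I$ reset points, which the distinct-coordinate comparison settles.
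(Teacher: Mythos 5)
Your proposal is correct and rests on the same structure as the paper's proof, namely the regenerative/positive-recurrence setup built in the text preceding the proposition, but it packages the argument slightly differently in two places. For parts (i) and (ii) the paper verifies Condition~\textbf{M} of \citet{Stokey-Lucas89} (a uniform $k$-step minorization toward the reset atom) to obtain strong convergence and then cites \citet{Meyn-Tweedie09} for the identity $\phi(x)=\varphi_{x}^{-1}$ and the pointwise limit; you instead invoke the countable-chain ergodic theorem directly, which obliges you to supply aperiodicity explicitly, and your observation that $g_{1}(1,\omega^{0}(1))=\omega^{0}(1)$ gives a self-loop of probability $\pi(1)>0$ at the atom is exactly the right way to do so (the paper's minorization argument makes aperiodicity automatic, so this step is implicit there). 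For part (iii) the paper offers only a one-line appeal to Assumption~\ref{ass:nofb}; your explicit witness $(I,\,g_{I}(I,\omega^{0}(I)))$, reached with probability $\pi(I)^{2}>0$ from $(1,\omega^{0}(1))$ and separated from the $I$ reset points by the strict inequality $g_{I}(I,\omega)>\omega$ for $\omega<\omega^{f}(I)$ together with $\omega^{0}(I)<\omega^{f}(I)$ from Lemma~\ref{lem:g}(iv), is precisely the concrete content behind that appeal, since Lemma~\ref{lem:g}(iv) is where Assumption~\ref{ass:nofb} enters. The net effect is a correct and somewhat more self-contained rendering of the paper's argument; nothing in your write-up fails or is missing.
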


Parts (i) and (ii) of Proposition~\ref{prop:convergence} are standard and show convergence to a unique invariant distribution where the probability of each $x\in E$ is the inverse of the expected return time. The invariant distribution can be computed iteratively, given knowledge of the policy functions. In particular, for $s_0=s(i)$, the invariant distribution can be computed starting from an initial distribution $\phi_0(x)=1$ if $x=(i, \omega^{0}(i))$ and $\phi_0(x)=0$ otherwise.\footnote{%
The convergence results hold for any initial distribution $\phi_0(A)$ even if $A\not\subseteq E$ since eventually, once regeneration occurs, all subsequent promises belong to the ergodic set.} Part~(iii) shows that the cardinality of the ergodic set is greater than~$I$. That is, at the invariant distribution, there are multiple promised utilities associated with particular states. Hence, the history of endowment share affects the consumption allocation even in the long run. This result stands in contrast to the two benchmarks considered in Section~\ref{sec:bench}. If transfers are enforced, or if there is no risk, then convergence is to an ergodic set with a cardinality equal to the cardinality of the set of endowment states. 

Since Lemma~\ref{lem:g} shows that $g_r(s,\omega)$ is increasing in~$s$ and~$\omega$, $g_r(I, \omega^{\ast}(I))$ is the largest promise that can be reached in state~$r$ starting with $x_0=(i, \omega^{0}(i))$. If $g_r(I, \omega^{\ast}(I))<\omega_{\max}(r)$, then any $x=(r,\omega)$ with $\omega \in (g_r(I, \omega^{\ast}(I)), \omega_{\max}(r))$ is not accessible from~$x_0$. Therefore, such an~$x$ is transitory and is not part of the ergodic set. In Section~\ref{sec:example}, we compute the ergodic set and the invariant distribution in examples with $g_r(I, \omega^{\ast}(I))<\omega_{\max}(r)$.\footnote{%
The ergodic set and invariant distribution are difficult to characterize. In some cases, however, the invariant distribution is a transformation of a geometric distribution with a denumerable ergodic set, that is, $\abs{E}=\aleph_{0}$.}

\begin{remark}
The convergence result and all the results of Section~\ref{sec:opt} apply when preferences exhibit constant relative risk aversion. They also hold for any concave utility function if the aggregate endowment is constant. If the aggregate endowment is state-dependent, but there is no growth, then Lemmas~\ref{lem:g} and~\ref{lem:c}, and Proposition~\ref{prop:convergence} remain valid, except that the policy functions are not necessarily monotonic in the endowment state \citep[see,][for details]{LRW20}.  
\end{remark} 

\section{Debt}\label{sec:debt}

In this section, we reinterpret the optimal transfer to the old as debt. Suppose the planner issues one-period state-contingent bonds, which trade at the corresponding state prices. The planner uses the revenue generated by bond sales to fund the transfer to the old, balancing the budget by taxing or subsidizing the young. Given bond prices and taxes, the young buy the correct quantity of state-contingent bonds to finance their optimal old-age consumption. With this interpretation, the dynamics of debt and the fiscal reaction function can be examined. 

\paragraph{The Debt Policy Function} It is convenient to measure debt relative to the endowment share of the current young. Then, the optimal debt $d(x)$ satisfies $\omega=\log(1-s+sd(x))$, so $d(x)$ is increasing in~$\omega$.\footnote{For brevity, in what follows, we often refer to $d(x)$ simply as outstanding debt without the caveat that it is measured relative to the endowment share of the young.} Let $d^{0}(s) \colonequals d(s, \omega^{0}(s))\geq0$ denote the minimum debt at the optimal solution when the endowment share of the young is~$s$. Debt $d\in\mathcal{D}=[d_{\min}, d_{\max}]$ where the minimum debt $d_{\min}\colonequals\min_r d^{0}(r)$ and the maximum debt $d_{\max}$ is determined as the nontrivial solution of $\log(1-d_{\max}) +\beta \sum_{r}\pi(r)(\log(1-r+r d_{\max})-\log(1-r))=0$. We refer to $d_{\max}$ as the \textit{debt limit\/} and $d_{\max}-d$ as the \textit{fiscal space\/} \citep[see, for example,][]{Ghoshetal13}.\footnote{The debt limit is different from the maximum sustainable debt \citep[see, for example,][]{Collardetal15}. The maximal sustainable debt focuses on the limit that external investors are willing to lend to a government, taking into account the probability of default. Typically, it is calculated using a fixed rule for government taxes and expenditure and a constant interest rate.} It follows straightforwardly that $d_{\max}<1$, analogously to the result of Lemma~\ref{lemma:V_omega} that $\omega_{\max}(s)<0$.} The \textit{debt policy function\/} ${b_{r}}\mathpunct{:}\mathcal{{D}}\to \mathcal{{D}}$ determines the optimal debt next period when the current debt is $d$ and the endowment share of the young next period is~$r$. The properties of the debt policy functions are summarized in the following corollary to Lemmas~\ref{lem:g} and~\ref{lem:c}.

\begin{corollary}\label{cor:debtf}
\begin{enumerate*}[label=(\roman*)]
\item The debt policy function $b_{r}\mathpunct{:}\mathcal{{D}}\to \mathcal{{D}}$ is continuous in~$d$. For $d\leq d^c$,  $b_r(d) = d^{0}(r)$, and for $d> d^c$, $b_r(d)$ is strictly increasing in~$d$. The threshold $d^c$ satisfies
$
d^c = 1-\exp(-\beta\sum_{r}\pi(r)(\log(1-r+r d^{0}(r)) - \log(1-r))) \in(d_{\min}, d_{\max})
$ with $d_{\min}=0$ and $d_{\max}<1$.
\item For $d\in\mathcal{D}$, $b_{r(i)}(d) \geq b_{r(i\shortminus 1)}(d)$ with strict inequality for at least one~$i=2,\ldots,I$.
\item For each~$r\in\mathcal{I}$, there is a unique fixed point $d^f(r)$ of the mapping $b_r(d)$ where $d^f(r)=\min\{d^{\ast}(r),d_{\max}\}$ and $d^\ast(r)=1-c^{\ast}(r)/r$ is the first-best debt. For state~$I$, $d^f(r(I))>d^c$.
\end{enumerate*}
\end{corollary}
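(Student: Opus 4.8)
The plan is to obtain the corollary by transporting Lemmas~\ref{lem:g} and~\ref{lem:c} through the change of variables between the promised utility~$\omega$ and debt~$d$. For each state~$s$ let $\Theta_s(\omega)\colonequals 1-(1-\exp(\omega))/s$, a continuous strictly increasing homeomorphism of $\Omega(s)$ with inverse $\omega=\log(1-s+s\,d)$, so that $d(s,\omega)=\Theta_s(\omega)$, $d^{0}(s)=\Theta_s(\omega^{0}(s))$ and $b_r=\Theta_r(g_r(s,\omega))$. The single observation driving everything is that along the optimum $c=f(s,\omega)=1-\exp(\omega)$ for $\omega\geq\omega^{0}(s)$ (Lemma~\ref{lem:c}(i)), hence $c=s(1-d)$ and $\log(c)=\log(s)+\log(1-d)$; substituting this together with $\omega_r=\log(1-r+r\,b_r)$ into the young's participation constraint~\eqref{eqn:icr} cancels the $\log(s)$ terms and reduces it to $\log(1-d)+\beta\sum_r\pi(r)[\log(1-r+r\,b_r)-\log(1-r)]\geq 0$, an inequality in $d$ and $(b_r)$ alone. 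This $s$-independence is exactly what makes $b_r$ a function of $d$ only, and it makes every threshold below state-free.

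For part~(i), continuity of $b_r$ is inherited from continuity of $g_r$ in~$\omega$ (Lemma~\ref{lem:g}(i)) composed with $\Theta_r$ and $\Theta_s^{-1}$. The young's constraint is slack iff $\log(1-d)\geq-\beta\sum_r\pi(r)[\omega^{0}(r)-\log(1-r)]$, again free of~$s$; the threshold $d^c$ is the value of $d$ where this holds with equality at the reset promises $\omega_r=\omega^{0}(r)$, i.e.\ $b_r=d^{0}(r)$, and rearranging yields the stated formula. For $d>d^c$ the constraint binds and $b_r$ is strictly increasing because $g_r$ is strictly increasing in~$\omega$ there (Lemma~\ref{lem:g}(i)) and $\Theta$ is strictly increasing. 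That $d_{\min}=0$ follows since Assumption~\ref{ass:LS} gives $\omega^{0}(1)=\omega_{\min}(1)$ (the old at autarky, zero transfer), so $d^{0}(1)=0$, while $\omega^{0}(r)\geq\omega_{\min}(r)$ gives $d^{0}(r)\geq 0$. For $d_{\max}<1$ I would transport $\omega_{\max}(r)<0$ from Lemma~\ref{lemma:V_omega}: the substitution $\exp(\omega_{\max}(r))=1-r+r\,d_{\max}$ turns the defining system for $\omega_{\max}$ into the stated equation for $d_{\max}$, so $\Theta_r(\omega_{\max}(r))=d_{\max}$ is the same for every~$r$, and $\omega_{\max}(r)<0$ gives $d_{\max}<1$. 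Finally $d^c\in(0,d_{\max})$: $d^c>0$ because $d^{0}(I)>0$ (Assumption~\ref{ass:nofb} forces $s(I)>\delta/(\beta+\delta)$, hence $\omega^{0}(I)>\omega_{\min}(I)$ by Lemma~\ref{lemma:V_omega}(ii)); and $d^c<d_{\max}$ by comparing the equation for $d^c$ with that for $d_{\max}$, using $d^{0}(r)\leq d_{\max}$ with strict inequality at $r=1$.

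For part~(iii), transporting Lemma~\ref{lem:g}(iv) through $\Theta_s$ (taking the current state equal to~$s$, which is legitimate since $b_s$ is state-free) shows $b_s(d)>d$ for $d<d^f(s)$ and $b_s(d)<d$ for $d>d^f(s)$, with $d^f(s)=\Theta_s(\omega^f(s))=\Theta_s(\min\{\omega^\ast(s),\omega_{\max}(s)\})$. A short computation gives $\Theta_s(\omega^\ast(s))=1-c^\ast(s)/s=d^\ast(s)$ and $\Theta_s(\omega_{\max}(s))=d_{\max}$, so $d^f(s)=\min\{d^\ast(s),d_{\max}\}$ and the fixed point is unique. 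For state~$I$ I would first show $d^c<d^\ast(I)$: writing Assumption~\ref{ass:nofb} in debt form and subtracting the equality defining $d^c$, and using $d^{0}(r)\leq d^\ast(r)$, gives $\log(1-d^c)>\log(1-d^\ast(I))$; hence $d^f(I)=\min\{d^\ast(I),d_{\max}\}>d^c$ (using $d^c<d_{\max}$ from part~(i)). The same inequality excludes a spurious fixed point on the flat branch: were $d^{0}(I)\leq d^c$, then $b_I(d^{0}(I))=d^{0}(I)$ while $d^{0}(I)<d^f(I)$ forces $b_I(d^{0}(I))>d^{0}(I)$, a contradiction; so $d^{0}(I)>d^c$ and the unique fixed point lies on the strictly increasing branch.

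The main obstacle is part~(ii). Since $b_r=1-(1-\exp(\omega_r))/r$ divides by~$r$, the ordering $\omega_{r(i)}\leq\omega_{r(i-1)}$ from Lemma~\ref{lem:g}(ii) (numerator decreasing in~$r$) and the growth of the denominator pull $b_r$ in opposite directions, so the claim $b_{r(i)}\geq b_{r(i-1)}$ cannot be read off from the promise ordering alone. The claim is equivalent to $f(r,\omega_r)/r$ being decreasing in~$r$, i.e.\ next-period young consumption responds less than proportionally to the young's endowment share---a pure risk-sharing property. I would prove this from the first-order condition~\eqref{fp1}: when the bounds~\eqref{eqn:ub}--\eqref{eqn:lb} are slack it equalizes $V_\omega(r,\omega_r)=-(\beta/\delta)\mu$ across all~$r$, and with the envelope~\eqref{env1} and the young's condition~\eqref{rmu1} this gives $(1-\exp(\omega_r))/\exp(\omega_r)=\delta(1+\mu_r)/(\beta(1+\mu))$ with a common weight on the old. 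The crux---which I expect to be the genuinely hard step---is to bound the rate at which $\mu_r=\mu(r,\omega_r)$ can rise with~$r$, so that the young's consumption share grows sub-proportionally; I would control this using the concavity of $V(r,\cdot)$ (Lemma~\ref{lemma:V_omega}) together with the binding next-period constraint, and handle the boundary cases (where~\eqref{eqn:ub} or~\eqref{eqn:lb} binds) separately, as in the proof of Lemma~\ref{lem:g}. The existence of at least one strict inequality then transfers directly from the corresponding strict inequality in Lemma~\ref{lem:g}(ii).
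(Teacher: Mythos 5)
Your treatment of parts~(i) and~(iii) is essentially the paper's own route and is correct: the change of variables $\exp(\omega)=1-s+sd$, the observation that the young's constraint~\eqref{eqn:icr} reduces to $\log(1-d)+\beta\sum_r\pi(r)(\log(1-r+rb_r)-\log(1-r))\geq0$ so that the multiplier depends only on~$d$, the identification of $d^c$ as the point where this binds at the reset values $d^0(r)$, strict monotonicity above $d^c$ from the strict monotonicity of $g_r$ in~$\omega$, and the transport of Lemma~\ref{lem:g}(iv) for the fixed points, all match the argument in the Supplemental Material. Your exclusion of a spurious fixed point on the flat branch in part~(iii) is a reasonable way to unpack the paper's terse citation of Lemma~\ref{lem:g}(iv) and Assumption~\ref{ass:nofb}.

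Part~(ii), however, is a genuine gap: you correctly diagnose that the ordering $b_{r(i)}(d)\geq b_{r(i\shortminus1)}(d)$ does \emph{not} follow from transporting Lemma~\ref{lem:g}(ii) (the promise ordering and the division by~$r$ pull in opposite directions), but you then only announce a plan to ``bound the rate at which $\mu_r$ rises with~$r$'' without supplying the bound, and you flag this yourself as the hard step. The paper closes this with a different and much cleaner device that avoids any such bound. Rewrite the value function directly in debt, $\hat{V}(s,d)=(\beta/\delta)\log(1-s+sd)+\log(s(1-d))+\delta\sum_r\pi(r)\hat{V}(r,b_r)$ subject to the debt-form constraint above. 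By the envelope theorem $\hat{V}_d(s,d)=(\beta/\delta)\,s/(1-s+sd)-(1+\mu(d))/(1-d)$, and since $\mu(d)$ is independent of~$s$ while $s/(1-s+sd)$ is strictly increasing in~$s$, one gets the single-crossing property $\hat{V}_d(s^\prime,d)>\hat{V}_d(s,d)$ for $s^\prime>s$. Then, supposing $b_{r^\prime}(d)<b_r(d)$ for $r^\prime>r$, the first-order conditions give $\hat{V}_d(r^\prime,b_{r^\prime}(d))\leq\hat{V}_d(r,b_r(d))$; combining with $\hat{V}_d(r,b_{r^\prime}(d))<\hat{V}_d(r^\prime,b_{r^\prime}(d))$ and the concavity of $\hat{V}(r,\cdot)$ in~$d$ yields $b_{r^\prime}(d)>b_r(d)$, a contradiction. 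This is the step your proposal is missing; without it (or an actual execution of your multiplier bound), part~(ii) is unproved.
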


Corollary~\ref{cor:debtf} reveals the benefits of measuring debt relative to the endowment share of the young. First, the debt policy functions depend on the current debt $d$ but are independent of the current endowment share~$s$. Second, there is a common threshold $d^c$, below which the debt policy function is flat and above which it is strictly increasing. For $d\leq d^c$, the debt policy function $b_r(d)=d^{0}(r)$.  Lemmas~\ref{lem:g} and~\ref{lem:c} show why the debt policy function is independent of~$s$. When the participation constraint of the young binds, that is, when constraints~\eqref{eqn:pk} and~\eqref{eqn:icr} hold as equalities, the policy function for the promised utility $g_r(s,\omega)$ is an increasing function of $\log(s)-\log(1-\exp(\omega))$. With $\exp(\omega)=1-s+sd$, $\log(s)-\log(1-\exp(\omega))=-\log(1-d)$ and $g_r(s,\omega)$ is an increasing function of~$d$. Hence, the debt policy function depends on the current debt and endowment state next period.\footnote{For CRRA preferences with a coefficient of risk aversion greater than one, the same property applies with a different normalization of debt that depends on the coefficient of risk aversion.} 

Part~(i) of Corollary~\ref{cor:debtf} shows that the threshold $d^c$ is determined by setting $b_r(d)=d^0(r)$ for each~$r$. By Assumption~\ref{ass:LS}, $d_{\min}=0$ and by Assumption~\ref{ass:nofb}, $d^c<d^\ast(I)$. Part~(ii) shows that $b_r(d)$, and consequently, $rb_r(d)$, are increasing in~$r$. Since the consumption share of the old decreases with~$r$, the transfer to the old, $rb_r(d)$, is positively correlated with the marginal utility of consumption of the old. This positive correlation occurs because debt provides partial insurance. Note that the consumption share of the old decreases with~$r$ for a given debt~$d$, while it increases with~$d$ for a fixed~$r$. Therefore, in comparing two endowment states, the consumption share of the old may be higher when the young have a higher endowment share if the debt is sufficiently high. Part~(iii) follows directly from Lemma~\ref{lem:g}(iv) and the fixed point of the mapping $b_r(d)$ corresponds to the first-best debt.

\begin{figure}[ht]
\begin{center}
\includegraphics{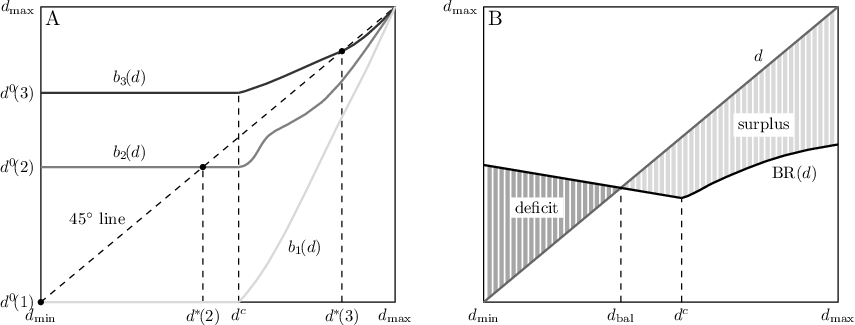}
\end{center}
\caption{Debt Dynamics (Panel A) \& Bond Revenue Function (Panel B).} \label{fig:Debt}
\begin{figurenotes}
The illustration is for the case $I=3$ corresponding to the example in Figure~\ref{fig:sp2}. Panel~A plots the debt policy functions $b_r\mathpunct{:}\mathcal{D}\to\mathcal{D}$ for $r=1,2,3$. The light-gray line is $b_{1}(d)$, the mid-gray line is $b_{2}(d)$, and the dark-gray line is $b_{3}(d)$. The level $d^{\ast}(3)$ is the largest sustainable debt, and $d_{\min}=d^{\ast}(1)=0$ is the lowest sustainable debt within the ergodic set. Panel~B plots the bond revenue function $\mbox{BR}\mathpunct{:}\mathcal{D}\to\mathcal{D}$. The  fiscal reaction function is the difference $d-\mbox{BR}(d)$. For $d<d_{\bal}$, the primary fiscal balance is in deficit, and for $d>d_{\bal}$, it is in surplus.
\end{figurenotes}
\end{figure}

\paragraph{The Dynamics of Debt} The dynamics of debt are derived from the debt policy functions described in Corollary~\ref{cor:debtf} and the history of endowment shares. Panel A of Figure~\ref{fig:Debt} plots the debt policy functions corresponding to the three-state example illustrated in Figure~\ref{fig:sp2}. For $d\leq d^c$, the debt policy function is independent of the current debt and depends only on the endowment share of the young next period. In particular, $d^0(1)=d^\ast(1)$ and $d^0(2)=d^\ast(2)$, so that the consumption share is first best in states~1 and~2, whereas in state~3, $d^0(3)<d^\ast(3)$ because the corresponding participation constraint binds, limiting the transfer from the young. For $d>d^c$, debt falls when the endowment share of the young next period is $r(1)$ or $r(2)$. If, for example, there are enough consecutive occurrences of the endowment state~1, then debt falls to zero. Since such sequences occur with positive probability, debt is reset to zero periodically. If, on the other hand, the endowment share of the young next period is $r(3)$, then the debt rises for $d<d^{\ast}(3)$ but falls for $d>d^{\ast}(3)$. Thus, any debt $d>d^{\ast}(3)$ is transitory and cannot occur in the long run.\footnote{In general, if $d^{\ast}(I)<d_{\max}$, then any $d\in[d^{\ast}(I), d_{\max})$ is transitory.}  In summary, the current debt encapsulates the history of endowment shares, and debt will rise or fall depending on the endowment share of the young next period. 

\paragraph{Fiscal Reaction Function} The fiscal reaction function shows how the tax rate depends on debt. Since the promised utility and debt are monotonically related, we abuse notation and rewrite the state space as $x=(s,d)$. With logarithmic preferences, the intertemporal marginal rate of substitution is $m(x, x^\prime)=\beta s(1-d)/(1-r(1-b_r(d)))$, where $x=(s,d)$ is the current state and $x^\prime=(r, b_r(d))$ is the next-period state. Since the endowment shares are i.i.d., the transition probability $\pi(x,x^\prime)=\pi(r)$ and given debt $d$, the current young can be thought as buying $rb_r(d)$ bonds contingent on a next-period endowment share of $r$ at the state price of $q(x, x^\prime)=\pi(r)m(x,x^\prime)$. This generates a bond revenue for the planner, measured relative to the endowment share of the young, of:
\begin{gather*}\label{eqn:br}
\mbox{BR}(d) \colonequals \left(\frac1s\right)\sum\nolimits_{r}q(x,x^\prime)r b_r(d)=
\beta\sum\nolimits_{r\in\mathcal{I}}\pi(r)\left(\frac{1-d}{1-r(1-b_r(d))}\right)r b_r(d).
\end{gather*}
Note that $\mbox{BR}(d)$ is independent of~$s$. The planner finances the current debt~$d$ by a combination of taxes (or subsidies) on the young and bond revenue $\mbox{BR}(d)$. Hence, the budget constraint of the planner is:
\begin{gather}\label{eqn:budget}
  \tau(d)=d-\mbox{BR}(d),
\end{gather}
  where $\tau(d)$ is the tax rate on the young, measured as a share of their endowment. We refer to $\tau(d)$ as the \textit{fiscal reaction function\/} and $s\tau(d)$ as the \textit{primary fiscal balance\/}.
  A positive value of $s\tau(d)$ corresponds to a primary fiscal surplus, whereas a negative value of $s\tau(d)$ corresponds to a primary fiscal deficit.

Panel~B of Figure~\ref{fig:Debt} plots the outstanding debt $d$ and the bond revenue $\mbox{BR}(d)$ with the fiscal reaction function $\tau(d)$ given by the difference between the two lines. The properties of $\mbox{BR}(d)$ are complex because $b_r(d)$ is increasing in~$d$, whereas the state price $q((s,d), (r,b_r(d)))$ is decreasing in both~$d$ and~$b_r$. By Proposition~\ref{prop:st}, there are transfers next-period for any debt $d<d_{\max}$, and hence, $\mbox{BR}(0)$ is strictly positive. Moreover, since $b_r(d)$ is constant for $d\leq d^c$, $\mbox{BR}(d)$ decreases linearly in this range. Hence, the fiscal reaction function $\tau(d)$ increases linearly in~$d$ for $d\leq d^c$. There is an intersection point $d_{\bal}$ where the bond revenue is equal to the current debt, $\tau(d_{\bal})=0$. For $d<d_{\bal}$, bond revenue exceeds the current debt, and the planner subsidizes the young, that is, there is a primary fiscal deficit. For $d>d_{\bal}$, bond revenue is insufficient to cover the current debt, and the planner taxes the young, that is, there is a primary fiscal surplus. For $d>d^c$, a rise in~$d$, that is, a reduction in the fiscal space, leads to more bond issuance but the price of bonds decreases. Thus, the net effect of a change in~$d$ on bond revenue is generally ambiguous. For the example illustrated in Panel~B, the fiscal reaction function $\tau(d)$ is increasing in~$d$ but initially at a slower rate for debt above the threshold level and then at a higher rate when debt is sufficiently large.

The situation depicted in Figure~\ref{fig:Debt} contrasts with the two benchmarks discussed in Section~\ref{sec:bench}. At the first best, the debt policy function is $b_r(d)=d^\ast(r)$, independent of~$d$. Hence, the debt policy functions in Panel~A of Figure~\ref{fig:Debt} are horizontal lines with fixed points at $d^\ast(r)$. There are no dynamics of debt except in the initial period, although debt varies with the endowment share. Ignoring the nonnegativity constraint on transfers, the first-best bond revenue function is linearly decreasing in debt, resulting in a fiscal reaction function that is linearly increasing.\footnote{It can be shown that $\mbox{BR}^\ast(d)=(a-1)(1-d)$ where $a=(1-\delta)+(\beta+\delta)\mathbb{E}_ss$ and $\mathbb{E}_ss$ is the expected endowment share. Hence, the fiscal reaction function is $\tau^\ast(d)=(1-a)+ad$. Since $\mathbb{E}_ss>\delta/(\beta+\delta)$, $a>1$.} In the deterministic case, the debt policy function is a transformation of the policy function in Figure~\ref{fig:pol} with a critical debt $d^c=(\exp(\omega^c)-(1-s))/s$. If the initial debt is above $d^c$, debt falls, and once it reaches or falls below $d^c$, the debt next period equals the first-best level $d^\ast$. The dynamics of debt are transitory, with debt reaching the fixed point $d^\ast$ in finite time. Along the transition path, debt falls, and the price of debt rises. These two offsetting effects mean it is possible that bond revenue rises or falls during the transition.

The two benchmarks show that enforcement frictions lead to the nonlinearity of the fiscal reaction function. By showing how this arises within an optimizing framework, the paper contributes to the literature that examines and provides evidence of this nonlinearity \citep[see, for example][among others]{Mendoza-Ostry08, Ghoshetal13}.

\section{Asset Pricing Implications}
\label{sec:risk}

In this section, we examine the asset pricing implications of the model.\footnote{%
We follow several authors in addressing asset pricing in overlapping generations models \citep[see,
for example,][]{Huberman84, Huffman86, Labadie86} and \citet{Garleanu-Panageas23} for a recent contribution.} In an overlapping generations model, the growth-adjusted stochastic discount factor is given by the intertemporal marginal rate of substitution $m(x, x^\prime) \colonequals \beta {u_c(1-c(x^\prime))}/{u_c(c(x))}$ where $x$ is the current state, $x^\prime$ is the state next period, $u_c(c(x))$ is the marginal utility of the current young and $u_c(1-c(x^\prime))$ is their marginal utility when old. This stochastic discount factor can be decomposed into two components:
\begin{gather}
m(x, x^\prime) = \underset{m_A(x,x^\prime)}{\underbrace{{\delta \left( \frac{u_c(c(x^\prime))}{u_c(c(x))}\right)}}}\underset{m_B(x,x^\prime)}
{\underbrace{{\left(\frac{\beta}{\delta}\frac{u_c(1-c(x^\prime))}{u_c(c(x^\prime))}\right)}}}.   \label{SDF}%
\end{gather}
The first component $m_A(x,x^\prime)$ represents risk sharing \emph{across\/} two adjacent generations of the young and the second component $m_B(x, x^\prime)$ represents risk sharing \emph{between\/} the young and the old at a given date. In a representative agent model, $m(x,x^\prime)=m_A(x,x^\prime)$ and the variability in the stochastic discount factor is determined by the variability of consumption, which in an endowment economy depends on the variability of the aggregate endowment. In contrast, in an overlapping generations model, if there is variability in the degree of risk sharing between the young and the old, then there is variability in $m_B(x, x^\prime)$, which interacts with the variability in $m_A(x,x^\prime)$ with consequent implications for asset pricing. In the optimal sustainable intergenerational insurance, the variability of $m_B(x,x^\prime)$ is determined by the first-order condition~\eqref{rmu1} and the updating rule~\eqref{eqn:update}. This variability depends on the current endowment share and the outstanding debt. To simplify the discussion, we confine attention to states in the ergodic set.\footnote{%
Limiting the analysis to the ergodic set is justified for two reasons. First, there is convergence to the ergodic set within finite time, as shown in Section~\ref{sec:conv}. Second, absent constraint~\eqref{PK0}, the planner sets the initial debt to $d_{\min}$, which lies in the ergodic set. Proposition~\ref{prop:convergence} shows that the ergodic set is countable. However, for simplicity and because it corresponds to our numerical procedures, we assume additionally that the ergodic set is finite. This is justifiable because it is possible to adapt the arguments to the denumerable case or even more general state spaces \citep[see, for example,][]{Hansen-Scheinkman09, Christensen17}.} We also assume that the bounds on debt do not bind. In this case, the first best exhibits complete insurance with the consumption share independent of the endowment state.\footnote{Although it is restrictive to assume that the bounds on debt are nonbinding, it simplifies the analysis, and we will note how results differ when the bounds are binding. }

Let $Q$ denote the matrix of state prices $q(x,x^{\prime})= \pi(r)m(x,x^{\prime})$ where $x=(s,d)$ and $x^\prime=(r, b_r(d))$, and let $\varrho$ and $\psi$ be the Perron root and corresponding eigenvector of $Q$. The Ross Recovery Theorem \citep{Ross15} shows that the $k$-period stochastic discount factor $m^k(x,x^\prime) = \varrho^k \psi(x)/\psi(x^\prime)$ where $\varrho$ and $\psi(x)$ can be interpreted as the discount factor and inverse marginal utility of a pseudorepresentative agent. Using the first-order condition~\eqref{rmu1} and the updating rule~\eqref{eqn:update}, $f(x^\prime)/(1-f(x^\prime)) = (\delta/\beta)(1+\mu(x^\prime))/(1+\mu(x))$  where $f(x)=s(1-d)$ is the consumption share of the young and $\mu(x)$ is the multiplier on the corresponding participation constraint. To ease notation, let $\nu(x)\colonequals 1+\mu(x)$ and $\nu_{\max}\colonequals\max_x \nu(x)$. Since we show below that $\varrho=\delta$, it follows from equation~\eqref{SDF} that $\psi(x) = f(x)/\nu(x)$.\footnote{The multiplicative decomposition of $\psi(x)$ into the components $f(x)$ and $1/\nu(x)$ is reminiscent of a number of other asset pricing models \citep[see, for example,][]{Bansal-Lehmann_97}.} The unit price of a $k$-period discount bond in state~$x$, $p^k(x)$, is given by the corresponding row sum of $Q^k$, the $k$-fold matrix power of $Q$. The corresponding yield is $y^{k}(x)\!\colonequals\!-(1/k)\log(p^{k}(x))$ and the yield on the long bond is $y^{\infty}(x) \colonequals \lim_{k\to\infty}y^k(x)$.

\citet{Martin-Ross19} shows that $\abs{y^k(x)-y^{\infty}(x)}\leq (1/k)\Upsilon$ for $\Upsilon\!\colonequals\!\log({\psi_{\max}}/{\psi_{\min}})$ where $\psi_{\max}$ and $\psi_{\min}$ are the maximum and minimum values of $\psi$. That is, $\Upsilon$ measures the range of the eigenvector and bounds the deviation of the yield from its long-run value. A low value of $\Upsilon$ means that the yield curve is relatively flat and that yields are not very sensitive to changes in debt.\footnote{The bound $\Upsilon$ provides a measure of the variability of the yields. Two alternative measures used to assess how risk is shared are the insurance coefficient \citep[see, for example,][]{Kaplan-Violante10} and the consumption equivalent welfare change \citep[see, for example,][]{Songetal15}. We discuss these alternatives in Part~C of the Supplemental Material and show that these two measures share similar comparative static properties with the bound $\Upsilon$.}

The matrix $Q$ is the growth-adjusted state price matrix. Let $q^k_{+}(x, x^\prime)$ and $m^k_{+}(x, x^\prime)$ denote the unadjusted state prices and marginal rate of substitution conditional on state~$x$ when $x^\prime$ is the state, and $\gamma$ is the growth factor $k$-periods ahead. Since the consumption shares are independent of the history of shocks to growth rates (Proposition~\ref{prop:iid}) and the shocks to growth rates are i.i.d., it can be checked that $q_+^k(x, x^\prime)=\varsigma(\gamma)\bar{\gamma}^{-k}(\bar{\gamma}/\gamma)q^k(x, x^\prime)$ and $m_+^k(x,x^\prime)=\bar{\gamma}^{-k}(\bar{\gamma}/\gamma)m^k(x,x^\prime)$ where $q^k(x, x^\prime)=\pi^k(x, x^\prime)m^k(x,x^\prime)$.\footnote{With stochastic growth, the Ross Recovery Theorem does not recover the true probability transition matrix. Instead, it recovers a transition matrix where probabilities are weighted by the relative growth factors \citep[see, for example,][for a discussion]{Borovickaetal16}.} Similarly, let $y_+^{k}(x)$ denote the yield on the $k$-period bond in the unadjusted case. Then, we can establish the following proposition.

\begin{proposition}\label{prop:Ross}
For each $x\in E$:
\begin{enumerate*}[label=(\roman*)]
\item $y_+^k(x)=y^k(x)+\log(\bar{\gamma})$ for each $k=1,2\ldots$.
\item In the limit, $y_+^{\infty}(x)=y^{\infty}+\log(\bar{\gamma})$ with $y^{\infty}=-\log(\delta)$.
\item $y^k(x)$ is increasing in~$d$ for each $s$ and~$k$.
\item The long-short spreads satisfy $y^\infty-y^1(1,d^{\ast}(1))>0>y^{\infty}-y^1(I,d^{\ast}(I))$.
\item The Martin-Ross measure $\Upsilon=\log(\nu_{\max})$ where $\nu_{\max}=\nu(I, d^\ast(I))$.
\end{enumerate*}
\end{proposition}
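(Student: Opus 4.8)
The plan is to make everything rest on the Ross-recovery representation, so the first step is to verify that $\psi(x)=f(x)/\nu(x)$ is the (right) Perron eigenvector of $Q$ with eigenvalue $\delta$, as asserted in the main text. Combining the first-order condition~\eqref{rmu1}, the updating rule~\eqref{eqn:update} and the envelope condition~\eqref{env1} (with the boundary multipliers $\xi_r,\eta_r$ set to zero on the ergodic set) gives $1+\lambda(x')=\nu(x)$, hence $f(x')=\delta\nu(x')/(\beta\nu(x)+\delta\nu(x'))$, and therefore $m(x,x')=\beta f(x)/(1-f(x'))=\delta\,\psi(x)/\psi(x')$ with $\psi=f/\nu>0$. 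Since the next-period endowment share is i.i.d., $\sum_{x'}q(x,x')\psi(x')=\delta\psi(x)\sum_r\pi(r)=\delta\psi(x)$; positivity of $\psi$ together with irreducibility of the chain on the ergodic set (Proposition~\ref{prop:convergence}) then identifies $\delta$ as the Perron root $\varrho$ and $\psi$ as its eigenvector, so that $m^k(x,x')=\delta^k\psi(x)/\psi(x')$. This identity is the workhorse for all five parts.

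For part~(i) I would write the unadjusted $k$-period bond price as $p^k_+(x)=\sum_{x'}\sum_\gamma q^k_+(x,x')$ and substitute the stated relation $q^k_+=\varsigma(\gamma)\bar\gamma^{-k}(\bar\gamma/\gamma)q^k$. Because $q^k(x,x')$ is free of $\gamma$, the $\gamma$-sum factors out and collapses through $\sum_\gamma\varsigma(\gamma)/\gamma=1/\bar\gamma$ to give $p^k_+(x)=\bar\gamma^{-k}p^k(x)$; applying $-(1/k)\log(\cdot)$ yields $y^k_+(x)=y^k(x)+\log\bar\gamma$. Part~(ii) then follows by letting $k\to\infty$ in~(i), while the claim $y^\infty=-\log\delta$ is precisely the statement that the long-bond yield equals minus the log of the Perron root of $Q$: since $p^k(x)=\sum_{x'}Q^k(x,x')\sim C\,\delta^k\psi(x)$, one has $y^\infty(x)=-\lim_k(1/k)\log p^k(x)=-\log\delta$, independent of $x$.

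Part~(iii) I would obtain by a monotone induction on $k$ from $p^k(s,d)=\beta s(1-d)\sum_r\pi(r)\,p^{k-1}(r,b_r(d))/(1-r(1-b_r(d)))$ with $p^0\equiv 1$. At $k=1$ the factors $(1-d)$ and $1/(1-r(1-b_r(d)))$ are each nonincreasing in $d$ because $b_r$ is increasing (Corollary~\ref{cor:debtf}), so $p^1(s,d)$ is nonincreasing in $d$; for the inductive step, if $p^{k-1}(r,\cdot)$ is nonincreasing in debt then $p^{k-1}(r,b_r(d))$ is nonincreasing in $d$, and the same two monotonicities force $p^k(s,d)$ to be nonincreasing in $d$. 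Hence $y^k(x)=-(1/k)\log p^k(x)$ is increasing in $d$ for every $s$ and $k$.

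The crux is parts~(iv) and~(v), and both reduce to locating the extrema of $\psi$. Using $1+\lambda(x')=\nu(x)$ in~\eqref{rmu1} gives the compact form $\psi(x)=\delta/(\beta(1+\lambda(x))+\delta\nu(x))$, so $\psi$ is largest exactly where $\lambda=\mu=0$ and smallest where both multipliers are maximal. The reset corner $(1,d^{\ast}(1))$ is the fixed point of $b_1$, reached from itself, where the young's constraint is slack ($\mu=0$, so its self-predecessor gives $\lambda=0$); thus $\psi(1,d^{\ast}(1))=\delta/(\beta+\delta)=\psi_{\max}$. The corner $(I,d^{\ast}(I))$ is the fixed point of $b_I$, likewise reached from itself, where $\nu=\nu_{\max}$ and hence $1+\lambda=\nu_{\max}$, giving $\psi(I,d^{\ast}(I))=\delta/((\beta+\delta)\nu_{\max})=\psi_{\min}$. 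This yields $\Upsilon=\log(\psi_{\max}/\psi_{\min})=\log\nu_{\max}$ with $\nu_{\max}=\nu(I,d^{\ast}(I))$, proving~(v). For~(iv), write $y^\infty-y^1(x)=\log(p^1(x)/\delta)$ and use the eigen-relation to get $p^1(x)-\delta=-\psi(x)^{-1}\sum_r\pi(r)m(x,x')(\psi(x')-\psi(x))$; the sign is then governed by whether $\psi$ falls or rises along transitions. At $(1,d^{\ast}(1))$ the value $\psi$ is maximal, so every successor has $\psi(x')\le\psi(x)$, forcing $p^1>\delta$ and a positive spread; at $(I,d^{\ast}(I))$ it is minimal, so $\psi(x')\ge\psi(x)$ and $p^1<\delta$, a negative spread, with strictness supplied by the nondegeneracy in Lemma~\ref{lem:g} and Assumption~\ref{ass:nofb}. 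The main obstacle is the monotonicity-and-corner argument showing $\mu$ (hence $\nu$) is maximized at $(I,d^{\ast}(I))$ and vanishes at $(1,d^{\ast}(1))$: this requires that $\lambda$ and $\mu$ increase in debt (concavity of $V$ in $\omega$, Lemma~\ref{lemma:V_omega}, with $b_r$ increasing) and across endowment states (a higher $s$ raises the autarky benchmark in~\eqref{eqn:icr}, tightening the young's constraint), together with the fact that $(I,d^{\ast}(I))$ and $(1,d^{\ast}(1))$ are the maximal and minimal states of the ergodic set. Everything else is bookkeeping around the eigenvector identity.
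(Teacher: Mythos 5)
Your proposal is correct and follows essentially the same route as the paper: the telescoping of the growth adjustment for (i)--(ii), the induction on bond prices using the monotonicity of $b_r$ for (iii), and the identification of $\psi(x')=\delta/(\beta\nu(x)+\delta\nu(x'))$ with extrema at the reset corner and at the fixed point $(I,d^{\ast}(I))$ for (iv)--(v). The only cosmetic differences are that you verify $Q\psi=\delta\psi$ directly rather than computing $\mathbb{E}_{\phi}[\log m]=\log\delta$ at the invariant distribution, and you sign the one-period spread via $\psi(x')-\psi(x)$ rather than via the paper's direct comparison of $m(x,x')$ with $\delta$ at the fixed points; these are equivalent restatements of the same identity $m(x,x')=\delta\psi(x)/\psi(x')$.
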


Part~(i) of Proposition~\ref{prop:Ross} shows that the difference between the yields in the growth-adjusted and unadjusted cases is simply the average growth rate as measured by $\log(\bar{\gamma})$, independent of the current state~$x$ or the bond maturity~$k$. This independence follows from Assumption~\ref{ass:iid} that the growth shocks are i.i.d., meaning that each generation faces the same growth risk. A similar result, that market risk premia are unaffected by market incompleteness, is established by \citet{Krueger-Lustig10} in a model with infinitely-lived agents and uninsurable idiosyncratic as well as aggregate risk. Part~(ii) follows from the result of \citet{Martin-Ross19} that the yield on the long bond is $y^{\infty} = -\log(\varrho)$, independent of~$x$, and that $\varrho=\delta$ when the upper bound and nonnegativity constraints do not bind.\footnote{If the upper bound constraint does not bind, then $\varrho\leq \delta$ and if the nonnegativity constraints do not bind, then $\varrho\geq\delta$.} To understand Part~(iii), note that the consumption share of the young is decreasing in~$d$ and that, since $b_r(d)$ is increasing in~$d$ from Corollary~\ref{cor:debtf}, the consumption share of the old next period is increasing in~$d$. Consequently, the stochastic discount factor $m(x, x^\prime)$ decreases in~$d$. Since the transition probabilities do not depend on~$d$, the price of the one-period discount bond is decreasing in~$d$, or equivalently, its yield is increasing in~$d$. Thus, an agent born into a generation with higher debt faces higher one-period yields. Since bond prices are linked recursively, this property holds for bonds of any maturity.

Part~(iv) of Proposition~\ref{prop:Ross} shows that the long-short spread $y^\infty-y^1(x)$ is positive when the young have a low endowment share and debt is low. In this case, it follows from Section~\ref{sec:debt} that debt is expected to rise in the future with a corresponding increase in yields. Conversely, the long-short spread is negative when the young have a high endowment share, and debt is high, in which case, both debt and yields are expected to fall in the future. Part~(v) shows that the bound $\Upsilon$ is determined by the multiplier on the participation constraint,  $\nu_{\max}$, corresponding to the fixed point of the debt policy function for the largest endowment share. That is, the bound on the variability of the yield curve is determined by the tightness of the participation constraint at the largest debt in the ergodic set.

To help understand the results of Proposition~\ref{prop:Ross}, consider the first-best and deterministic benchmarks of Section~\ref{sec:bench}. At the first best, the debt policy functions are constants, and the yield curve is flat with $y_+^k(x)=-\log(\delta)+\log(\bar{\gamma})$ and $\Upsilon=0$. Despite the flat yield curve, the risk premium on the aggregate risk is positive because the return on debt is high when the growth rate is high. Specifically, the expected return on a one-period bond is $\mathbb{E}_\gamma \gamma/\delta$, while the risk-free rate is $\bar{\gamma}/\delta$. Thus, the risk premium is $(\mathbb{E}_\gamma \gamma-\bar{\gamma})/\delta$, which is strictly positive when the growth shocks are nondegenerate. A Lucas tree or any other asset that pays a share of the aggregate endowment will carry this positive risk premium, so that the risk premium on aggregate risk corresponds to the risk premium on debt with complete insurance. In the deterministic case, the risk premium is zero. However, along the transition path, as debt falls, the yield $y^k(d)$ decreases to its long-run value of $y^\infty=-\log(\delta)+\log(\gamma)$, where $\gamma$ is the deterministic growth rate. Thus, $\Upsilon>0$ in the transition, even though there is no risk.\footnote{The ergodic set is degenerate at $d^\ast$  in the deterministic case. Once debt reaches this level, the yield curve is flat.}

\section{Debt Valuation}\label{sec:debtval}

The budget constraint in equation~\eqref{eqn:budget} can be iterated forward to show that current debt equals the present value of all future primary surpluses.\footnote{\citet{Jiangetal23} define \textit{fiscal capacity\/} as the present value of future surpluses. Since, in our model, debt is determined optimally, there is no mispricing or bubble component, and debt and fiscal capacity are equivalent in this sense. Other authors often use the term fiscal capacity more broadly to encompass both the debt limit and fiscal space.} As pointed out by \citet{Bohn95}, this present value depends on the risk premium on debt. In this section, we focus on the multiplicative risk premium on debt because it is the negative of the covariance between the stochastic discount factor and the return on debt and because this covariance is independent of the endowment share. When there is a growth shock~$\gamma$, the unadjusted return on debt is $R_+(x, x^\prime)=rb_r(d)\gamma e/(s\mbox{BR}(d)e)$, where $s\mbox{BR}(d)e$ is the value of bonds issued today. The multiplicative risk premium is $\mbox{MRP}_+(d) = (\bar{R}_+(x)-R_+^f(x))/R_+^f(x)$ where $\bar{R}_+(x)$ is the expected return on debt and $R_+^f(x)$ is the risk-free rate on interest in state~$x$. Denote the corresponding growth-adjusted values by $\mbox{MRP}(d)$, $\bar{R}(x)$ and $R^f(x)$. As shown in Section~\ref{sec:risk}, the risk premium on debt with complete insurance equals the risk premium on aggregate risk and we denote the common multiplicative risk premium by $\mbox{MRP}^{\ast}$. The following proposition shows that the multiplicative risk premium has a linear decomposition that depends on the growth-adjusted multiplicative risk premium and the multiplicative risk premium with complete insurance.

\begin{proposition}\label{prop:mrp}
The multiplicative risk premium $\mbox{MRP}_{+}(d) = \mbox{MRP}(d) + \alpha(d) \mbox{MRP}^\ast$, where $\alpha(d) = \bar{R}(x)/R^f(x)$. The components satisfy:
\begin{enumerate*}[label=(\roman*)]
    \item $\mbox{MRP}^\ast= (\mathbb{E}_\gamma \gamma-\bar{\gamma})/\bar{\gamma}\geq 0$;
    \item $\mbox{MRP}(d) < 0$; and
    \item $0<\alpha(d)<1$.
\end{enumerate*}
\end{proposition}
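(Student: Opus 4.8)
The plan is to split every return and every stochastic discount factor into a pure growth piece and a growth-adjusted (share) piece, so that the claimed decomposition collapses to an algebraic identity, and then to sign the three components separately. Throughout I would exploit that with logarithmic utility the growth-adjusted and physical transition probabilities coincide (Remark~\ref{rem:g} with the coefficient of relative risk aversion equal to one), that the growth and share shocks are independent and i.i.d.\ (Assumption~\ref{ass:iid}), and the two elementary relations $m_+(x,x^\prime)=m(x,x^\prime)/\gamma$ (the $k=1$ case of the footnoted identity) and $R_+(x,x^\prime)=\gamma\, r b_r(d)/(s\,\mbox{BR}(d))=\gamma R(x,x^\prime)$, where $R(x,x^\prime)$ is the growth-adjusted return and all share-dependent quantities depend on $r$ only.

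First I would establish the two factorizations
\begin{gather*}
\bar{R}_+(x)=\mathbb{E}_\gamma\gamma\cdot\bar{R}(x)\qquad\text{and}\qquad R^f_+(x)=\bar{\gamma}\,R^f(x),
\end{gather*}
the first from $\mathbb{E}_{r,\gamma}[\gamma R]=\mathbb{E}_\gamma\gamma\cdot\mathbb{E}_r R$ by independence, and the second from $\mathbb{E}[m_+]=\mathbb{E}_\gamma[\gamma^{-1}]\,\mathbb{E}_r[m]$ together with $\bar{\gamma}=(\mathbb{E}_\gamma[\gamma^{-1}])^{-1}$ and $R^f(x)=1/\mathbb{E}_r[m]$. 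Substituting into the definition of $\mbox{MRP}_+(d)$ gives
\begin{gather*}
\mbox{MRP}_+(d)=\frac{\bar{R}_+(x)-R^f_+(x)}{R^f_+(x)}=\frac{\mathbb{E}_\gamma\gamma}{\bar{\gamma}}\cdot\frac{\bar{R}(x)}{R^f(x)}-1.
\end{gather*}
To close the argument I would first specialize to complete insurance, where the growth-adjusted shares are independent of~$r$, so $R(x,x^\prime)$ is constant in~$r$, $\bar{R}(x)=R^f(x)$, $\alpha(d)=1$, and the display forces the common premium to equal $\mbox{MRP}^\ast=\mathbb{E}_\gamma\gamma/\bar{\gamma}-1=(\mathbb{E}_\gamma\gamma-\bar{\gamma})/\bar{\gamma}$; this is part~(i), and it is nonnegative because the arithmetic mean $\mathbb{E}_\gamma\gamma$ dominates the harmonic mean $\bar{\gamma}$. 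Returning to the general case, $\mathbb{E}_\gamma\gamma/\bar{\gamma}=1+\mbox{MRP}^\ast$ and $\bar{R}(x)/R^f(x)=1+\mbox{MRP}(d)=\alpha(d)$, so the display reads $\mbox{MRP}_+(d)=(1+\mbox{MRP}^\ast)\alpha(d)-1=(\alpha(d)-1)+\alpha(d)\mbox{MRP}^\ast=\mbox{MRP}(d)+\alpha(d)\mbox{MRP}^\ast$, the asserted decomposition.

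It then remains to sign $\mbox{MRP}(d)$ and $\alpha(d)$. For (ii), the definition of $\mbox{BR}(d)$ gives the pricing identity $\mathbb{E}_r[m(x,x^\prime)R(x,x^\prime)]=1$; combined with $R^f(x)=1/\mathbb{E}_r[m]$ this yields $\mbox{MRP}(d)=-\cov(m,R)$, the covariance taken over~$r$ under~$\pi$. I would then argue that both $m(x,x^\prime)=\beta f(x)/\exp(g_r(x))$ and $R(x,x^\prime)=r b_r(d)/(s\,\mbox{BR}(d))$ are increasing in~$r$: the first because the old consumption share next period, $\exp(g_r(x))$, is decreasing in~$r$ by Lemma~\ref{lem:g}(ii), and the second because $r b_r(d)$ is increasing in~$r$ by Corollary~\ref{cor:debtf}(ii), each with at least one strict inequality. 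Two comonotone, nonconstant functions of the single random variable~$r$ have strictly positive covariance, so $\cov(m,R)>0$ and $\mbox{MRP}(d)<0$. Part~(iii) is then immediate: $\alpha(d)=\bar{R}(x)/R^f(x)>0$ since returns are positive, while $\alpha(d)=1+\mbox{MRP}(d)<1$ by~(ii).

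The growth/share factorization is essentially bookkeeping once $m_+=m/\gamma$ and the independence of the shocks are in hand, and the decomposition is then pure algebra. The substantive step is signing $\cov(m,R)$: I must translate Lemma~\ref{lem:g}(ii) into monotonicity of the old consumption share, invoke Corollary~\ref{cor:debtf}(ii) for the debt return, and use that the weights $\pi(r)$ are common to numerator and denominator of $\alpha(d)$ so no growth-weighting distorts the correlation. I expect the main delicacy to be obtaining \emph{strict} positivity of the covariance—hence the strict inequality $\mbox{MRP}(d)<0$—which requires verifying that the discount factor and the debt return genuinely co-vary on a set of positive probability rather than being jointly constant.
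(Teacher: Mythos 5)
Your proposal is correct and follows essentially the same route as the paper's proof: the same factorizations $\bar{R}_+(x)=\mathbb{E}_\gamma\gamma\cdot\bar{R}(x)$ and $R^f_+(x)=\bar{\gamma}R^f(x)$ from independence of the shocks, the same algebraic rearrangement into $\mbox{MRP}(d)+\alpha(d)\mbox{MRP}^\ast$, the same identification $\mbox{MRP}(d)=-\cov(m,R)$ signed by the comonotonicity in~$r$ from Lemma~\ref{lem:g} and Corollary~\ref{cor:debtf}, and the arithmetic-versus-harmonic-mean comparison for part~(i). The strictness of the covariance that you flag as the main delicacy is exactly where the paper invokes Assumption~\ref{ass:nofb} (incomplete risk sharing), which is the same resolution you sketch via the "at least one strict inequality" clauses of those results.
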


The decomposition of $\mbox{MRP}_{+}(d)$ into components depending on $\mbox{MRP}(d)$ and $\mbox{MRP}^\ast$ is analogous to the result of Proposition~\ref{prop:Ross} that the conditional yield is the sum of a growth-adjusted yield and a component corresponding to the average growth rate. In the same way as Proposition~\ref{prop:Ross}, this decomposition follows from Assumption~\ref{ass:iid} that the shocks to growth rates and endowment shares are independent of each other and i.i.d. Part~(i) of Proposition~\ref{prop:mrp} shows that $\mbox{MRP}^\ast$ is nonnegative. As discussed in Section~\ref{sec:risk}, $\mbox{MRP}^\ast$ is strictly positive when growth shocks are nondegenerate. To understand Part~(ii), note that the growth-adjusted return $R(x, x^\prime) = rb_r(d)/(s\mbox{BR}(d))$ is increasing in~$r$, from Part~(ii) of Corollary~\ref{cor:debtf}. Moreover,  the consumption share of the old is decreasing in~$r$, from Lemma~\ref{lem:g}, and hence, the stochastic discount factor $m(x, x^\prime)$ is increasing in~$r$. Consequently, the returns are high when the marginal utility of consumption of the old is high, resulting in a positive covariance term, and correspondingly, a negative growth-adjusted multiplicative risk premium. By comparison, with complete insurance, the stochastic discount factor is constant so that its covariance with the returns is zero, and hence, $\mbox{MRP}(d)=0$. As noted in equation~\eqref{SDF}, the stochastic discount factor comprises two components that measure risk sharing across two adjacent generations of the young and risk sharing between the young and the old. The first component $m_A(x,x^\prime)$ is decreasing in~$r$, whereas the second component $m_B(x,x^\prime)$ is increasing in~$r$. In a representative agent model, only $m_A(x,x^\prime)$ is present, and high debt returns are associated with a low marginal utility of consumption of the young, generating a positive risk premium. In contrast, $m_B(x,x^\prime)$ dominates in the overlapping generations model, making debt a negative beta asset.

Part~(iii) of Proposition~\ref{prop:mrp} shows that $\alpha(d)<1$, and hence, the gap
$\mbox{MRP}^\ast\shortminus\mbox{MRP}_{+}(d)>0$ for each $d$. That is, the multiplicative risk premium 
on debt is lower than the multiplicative risk premium on aggregate risk. Using \ac{US} data, \citet{Jiangetal19} show that the observed value of debt is higher than the present value of future primary surpluses when discounted using the risk premium on aggregate risk, a debt valuation puzzle. Convenience yields, seigniorage and other service flow values have been offered as potential explanations for this puzzle. Our results suggest an additional explanation. In the presence of enforcement frictions, risk sharing is partial and debt serves as a hedge against idiosyncratic risk, lowering the risk premium and raising the value of debt.\footnote{\citet{Jiangetal20} examine how to manufacture risk-free government debt. With the primary surplus disaggregated into tax and expenditure components, the risk premium on debt is a weighted average of the risk premiums on taxes and expenditure. Consequently, the risk premium on debt can be eliminated, but only at the cost of making taxes or expenditures less cyclical. Since we do not distinguish between taxes and expenditure, the risk premium on the primary surplus equals the risk premium on debt, and making debt risk-free may not be feasible or desirable.}

Part~(iii) of Proposition~\ref{prop:mrp} also shows that the gap $\mbox{MRP}^\ast\shortminus\mbox{MRP}_{+}(d)$ depends on~$d$, evolving according to the dynamics of debt outlined in Section~\ref{sec:debt}. For $d\leq d^c$, this gap is independent of~$d$. For $d>d^c$, the effect of debt on the size of the gap is ambiguous. From Proposition~\ref{prop:Ross}, the risk-free interest rate increases with debt. Therefore, the gap rises or falls depending on whether the expected return on debt increases with debt at a faster or slower rate than the risk-free interest rate. Although the overall effect is ambiguous, Section~\ref{sec:example} provides an example in which $\mbox{MRP}^\ast\shortminus\mbox{MRP}_{+}(d)$ decreases with~$d$ for $d>d^c$.

\section{Two-State Example}
\label{sec:example}

Finding the optimal sustainable intergenerational insurance is complex because it involves solving the functional equation of problem~\ref{eqn:p1}. In this section, we present an example with $I=2$ that can be solved using a shooting algorithm.\footnote{Part~E of the Supplemental Material provides details of the shooting algorithm.} For this case, we solve for the invariant distribution and derive a closed-form solution for the Martin-Ross measure.

Suppose there are two possible endowment shares for the young: $s(1)=\kappa-\epsilon(1-\pi)/\pi$ and $s(2)=\kappa+\epsilon$, where $\pi=\pi(1)$, $\kappa\geq\sfrac12$ and $\epsilon>0$. The young are poor in state~1 and rich in state~2. An increase in $\epsilon$ is a mean-preserving spread of the risk. By Assumptions~\ref{ass:LS} and~\ref{ass:nofb}, $d^\ast(2)>d^0(2)>d^c>d^0(1)=d^{\ast}(1)=0$. By Corollary~\ref{cor:debtf}, the debt policy functions satisfy $b_2(d)>b_1(d)$. We make two
additional assumptions.
\begin{assumption}\label{ass:canonic}
\begin{enumerate*}[label=(\roman*)] \item $d^{\ast}(2)<d_{\max}$; and \item $b_1(d^{\ast}(2))<d^c$. \end{enumerate*}
\end{assumption}
Part (i) of Assumption~\ref{ass:canonic} implies that the debt limit never binds. By Part~(ii), debt is below $d^c$ whenever state~1 occurs. In such a case, the history of endowment states is forgotten once state~1 occurs and the dynamics of debt depend only on the number of consecutive state 2s in the most recent history, starting from the resetting level $d^0(s)$. The longer is the sequence of state~2s, the higher is debt. The set of parameter values that satisfy Assumption~\ref{ass:canonic}, as well as Assumptions~\ref{ass:sustain}-\ref{ass:nofb}, is nonempty with the following belonging to this set.
\begin{example} \label{exmpl:canon} $\delta=\beta=\exp(-\sfrac{1}{75})$, $\pi=\sfrac{1}{2}$, $\kappa=\sfrac{3}{5}$, and $\epsilon=\sfrac{1}{10}$.
\end{example}%
To simplify notation, let $d^{(n)}(s)$ be the debt in state~$s=1,2$ after $n$ consecutive state~2s, where  $d^{(0)}(s)=d^0(s)$ are the resetting levels and $\lim_{n\to\infty}d^{(n)}(2)=d^\ast(2)$. Under Assumption~\ref{ass:canonic}, the invariant distribution of debt is a transformation of a geometric distribution and the bound $\Upsilon$ has a closed-form solution.

\begin{proposition}\label{prop:canonic}
Under Assumption~\ref{ass:canonic}:
\begin{enumerate*}[label=(\roman*)]
\item The ergodic set $E=\{(s,d^{(n)}(s))_{n\geq0, s=1,2}\}$ with a probability mass function
$\phi(s,{d}^{(n)}(s))=\phi(s,{d}^0(s))(1-\pi)^{n}$ for $n\geq1$ where $\phi(1,{d}^0(1))=\pi^2$ and $\phi(2,{d}^0(2))=\pi(1-\pi)$.
\item $\Upsilon=\log(\delta/\beta)-\log(\chi^{-1}-1)$ where
\end{enumerate*}
\begin{gather*}
\chi =\textstyle\left(\frac{\delta}{\beta}\right)^{\frac{1-\pi}{\pi}}\left( \frac{\beta +\delta }{\delta }\right) ^{\frac{1+\beta (1-\pi) }{\beta\pi }}\left( \kappa +\epsilon \right) ^{\frac{1}{\beta \pi }}\left(1-\kappa -\epsilon \right) ^{\frac{1-\pi }{\pi }}\left( 1-\kappa +\epsilon \frac{1-\pi }{\pi }\right).
\end{gather*}
\end{proposition}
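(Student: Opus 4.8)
The plan is to treat the two parts separately. For Part~(i) I would first pin down the transition structure of the Markov chain on the ergodic set. By Assumption~\ref{ass:canonic}(ii) together with the monotonicity of $b_1$ from Corollary~\ref{cor:debtf}, every transition into state~1 lands strictly below the threshold $d^c$, so whenever the current state is~1 the debt policy function is in its reset branch $b_r(d)=d^{0}(r)$. Reading off the dynamics, from any $(1,d^{(n)}(1))$ the chain moves to $(1,d^{(0)}(1))$ with probability~$\pi$ and to $(2,d^{(0)}(2))$ with probability~$1-\pi$, while from any $(2,d^{(n)}(2))$ it moves to $(2,d^{(n+1)}(2))$ with probability $1-\pi$ and to $(1,d^{(n+1)}(1))$ with probability $\pi$, using $d^{(n+1)}(s)=b_s(d^{(n)}(2))$. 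Given this structure and the i.i.d.\ property of the endowment shares (Assumption~\ref{ass:iid}), I would read the invariant probabilities directly: the event $\{x=(2,d^{(n)}(2))\}$ is exactly the event that the current state is~2 and is preceded by the run $1\,\underbrace{2\cdots2}_{n}$, of probability $\pi(1-\pi)^{n+1}$, and $\{x=(1,d^{(n)}(1))\}$ corresponds to the pattern $1\,\underbrace{2\cdots2}_{n}\,1$, of probability $\pi^{2}(1-\pi)^{n}$. Setting $n=0$ recovers $\phi(1,d^{0}(1))=\pi^{2}$ and $\phi(2,d^{0}(2))=\pi(1-\pi)$, and the geometric decay $(1-\pi)^{n}$ follows; equivalently these weights solve the balance equations $\phi=\phi P$ and sum to one. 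Since the run length of consecutive state~2s is geometric, the debt distribution is the pushforward of a geometric law through the map $n\mapsto d^{(n)}(s)$.

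For Part~(ii) I would start from Proposition~\ref{prop:Ross}(v), which gives $\Upsilon=\log(\nu_{\max})$ with $\nu_{\max}=\nu(2,d^{\ast}(2))$, so the task reduces to evaluating $\nu=1+\mu$ at the fixed point of $b_2$ through two relations. First, apply the updating identity from Section~\ref{sec:risk}, $f(x')/(1-f(x'))=(\delta/\beta)\,\nu(x')/\nu(x)$, to the transition $x=(2,d^{\ast}(2))\to x'=(1,b_1(d^{\ast}(2)))$. Because $b_1(d^{\ast}(2))<d^c$ by Assumption~\ref{ass:canonic}(ii), state~1 sits in its reset (slack) region, so $\mu(x')=0$ and $\nu(x')=1$; writing $\chi\colonequals 1-f(x')$ for the old's consumption share there gives $\nu_{\max}=(\delta/\beta)\,\chi/(1-\chi)$, hence $\Upsilon=\log(\delta/\beta)-\log(\chi^{-1}-1)$. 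Second, I would compute $\chi=\exp(\omega_1^{\ast})$, where $\omega_1^{\ast}$ is the utility promised to the state-1 old out of the fixed point, from the binding participation constraint~\eqref{eqn:icr} at $(2,d^{\ast}(2))$: with $c=c^{\ast}(2)=\delta/(\beta+\delta)$ and the state-2 promise at its fixed value $\omega^{\ast}(2)=\log(\beta/(\beta+\delta))$, solving the single linear equation for $\omega_1^{\ast}$ and exponentiating yields the displayed product for $\chi$ after collecting the $\beta$, $\delta$, $\beta+\delta$ and endowment-share factors and substituting $s(2)=\kappa+\epsilon$, $s(1)=\kappa-\epsilon(1-\pi)/\pi$.

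The routine work is the balance-equation bookkeeping and the algebra that reassembles $\exp(\omega_1^{\ast})$ into the stated closed form, which I would not grind through here. The main obstacle is justifying the recursion used in Part~(ii): I must verify that at the transition out of the fixed point the only active constraint is the young's participation constraint, so that \eqref{env1} and \eqref{rmu1} collapse to $\lambda(x')=\mu(x)$ and $\nu(x')$ is governed by $\mu(x)$; that the target $\omega_1^{\ast}$ lies strictly inside $(\omega^{0}(1),\omega_{\max}(1))$, so the bound multipliers $\xi,\eta$ vanish (here $b_1(d^{\ast}(2))>0$ since $d^{\ast}(2)>d^c$, and $b_1(d^{\ast}(2))<d^c<d_{\max}$); and that the nonnegativity constraint is slack at $(2,d^{\ast}(2))$ because $c^{\ast}(2)<s(2)$ under Assumption~\ref{ass:nofb}. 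I would also confirm that the constraint genuinely binds at $d^{\ast}(2)>d^c$ (by Corollary~\ref{cor:debtf}), so that $\nu_{\max}>1$ and $\chi\in(0,1)$, ensuring the formula is non-vacuous and $\Upsilon>0$.
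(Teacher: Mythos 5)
Your proposal is correct and follows essentially the same route as the paper: Part~(i) reads the geometric invariant probabilities off the i.i.d.\ resetting structure exactly as the paper does, and Part~(ii) combines $\Upsilon=\log(\nu_{\max})$ from Proposition~\ref{prop:Ross}(v) with the first-order condition at $x^\prime=(1,b_1(d^{\ast}(2)))$ (your updating identity with $\nu(x^\prime)=1$ is just a repackaging of the paper's direct use of~\eqref{rmu1}) and the binding participation constraint at the fixed point, which is precisely the equation the paper solves for $\chi=1-s(1)(1-b_1(d^{\ast}(2)))=\exp(\omega_1^{\ast})$. The verification steps you flag (slack bounds, $\mu=0$ in the reset region, binding constraint at $d^{\ast}(2)$) are exactly the conditions the paper's assumptions are designed to deliver, so nothing is missing.
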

As stated in Part~(i) of Proposition~\ref{prop:canonic}, the invariant distribution has a probability mass of $\phi(1,{d}^0(1))=\pi^2$ and $\phi(2,{d}^0(2))=\pi(1-\pi)$ at the regeneration states and zero probability mass at states $(s,b_s(d^{\ast}(2)))$. Panel~A of Figure~\ref{fig:numerical} plots the invariant distribution for the parameter values of Example~\ref{exmpl:canon}. Low debt levels occur only in state~1, while high levels occur only in state~2. 

Part~(ii) of Proposition~\ref{prop:canonic} provides a closed-form solution for the bound $\Upsilon$. By Proposition~\ref{prop:Ross}, the bound is strictly positive and determined by the tightness of the participation constraint of the young when $x=(2,d^{\ast}(2))$. Using this closed-form solution, it is easily checked that $\Upsilon$ decreases with the discount factors $\beta$ or $\delta$, that is, as either the agent or the planner becomes more patient. Moreover, $\Upsilon$ decreases with the average endowment share to the young, $\kappa$, and increases with risk, $\epsilon$.\footnote{Part~C of the Supplemental Material presents some comparative static properties of $\Upsilon$ for parameter values that violate Assumption~\ref{ass:canonic}.}

\begin{figure}[ht]
\begin{center}
\includegraphics{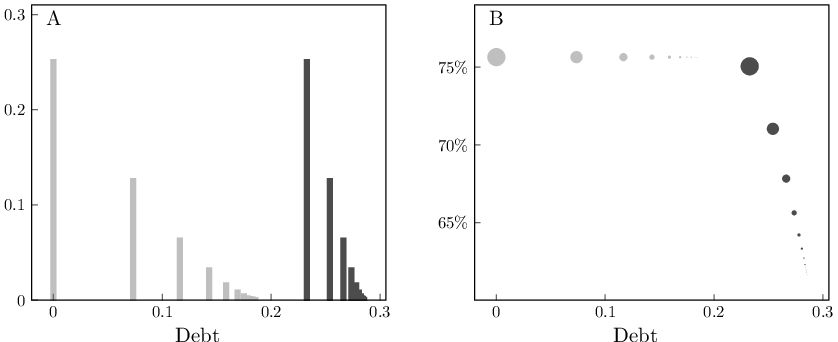}
\end{center}
\caption{Invariant Distribution (Panel~A) \& Multiplicative Risk Premium (Panel~B).} \label{fig:numerical}
\begin{figurenotes}
Both panels use the parameter values of Example~\ref{exmpl:canon}. Panel~A plots the invariant distribution $\phi(s,{d}^{(n)}(s))$ for $s=1$ (light-gray bars) and $s=2$ (dark-gray bars). Panel B plots the distortion of the multiplicative risk premium relative to its value under complete insurance, $\Delta^\ast(\mbox{MRP}(d))=(\mbox{MRP}^{\ast}\shortminus\mbox{MRP}_{+}(d))/\mbox{MRP}^{\ast}$, for the values of~$d$ in the ergodic set. Light-gray dots correspond to $s=1$ and the dark-gray dots to $s=2$. The size of each dot indicates the frequency of occurrence.
\end{figurenotes}
\end{figure}

Panel~B of Figure~\ref{fig:numerical} illustrates the impact of debt on the risk premium in a version of Example~\ref{exmpl:canon} with stochastic growth. In this example, the arithmetic mean growth rate is set to 4\% and the corresponding multiplicative risk premium is approximately 5\%. Proposition~\ref{prop:mrp} shows that $\mbox{MRP}^\ast>\mbox{MRP}_+(d)$ and Panel~B illustrates that the gap is constant when debt is low, but decreases with debt when debt is high. As noted in Section~\ref{sec:debtval}, the multiplicative risk premium may increase or decrease with debt for $d>d^c$, depending on the relative magnitude of the effect of debt on its return and the marginal utility of consumption of the old. In this example, the effect on the return dominates causing the risk premium to rise with debt. Since the risk premium on aggregate risk is independent of debt, a rise in debt narrows the gap between the risk premiums on aggregate risk and debt.

\section{Conclusion}\label{sec:conc}

The paper has developed a theory of intergenerational insurance in a stochastic overlapping generations model with limited enforcement of risk-sharing transfers. Despite the stationarity of the underlying economic environment, the generational risk is spread across future generations in ways that cause transfers to be history dependent. There is periodic resetting, and the history of shocks is forgotten when this occurs. By interpreting intergenerational insurance in terms of debt, we provide a theory of the dynamics of debt that offers a new perspective on the fiscal reaction function and the sustainability and valuation of debt. With complete insurance, the fiscal reaction function is linear, and the risk premium on debt equals the risk premium on aggregate risk. When there are enforcement frictions, intergenerational insurance is incomplete, the fiscal reaction function is nonlinear, and the risk premium on debt is below the risk premium on aggregate risk.

The results suggest several potential directions for future research. First, the qualitative predictions about the dynamics of debt could be compared with historical data for advanced economies, for example, with a specific focus on the baby boom and subsequent generations. Second, the model has no heterogeneity within a generation. Enriching the demographic structure of the model, either by having more than two overlapping generations or allowing for heterogeneity within the same generation, would make it possible to address the interdependence between intergenerational and intragenerational insurance. Third, to study the interplay between self-insurance and intergenerational insurance, a technology that can transform endowments across dates could be added. Finally, incorporating a stochastic demand for public good provision would allow the study of the risk premia associated with the various components of the primary surplus. 

\eject

\section*{Appendix (Proofs of Main Results)}\label{app:A}

This Appendix contains the proofs of the main results. Omitted proofs can be found in Part~B of the Supplemental Material.

\begin{proof}[Proof of Lemma~\ref{lem:g}]
\NoBreakPar
\begin{enumerate}[itemsep=1pt,align=left,leftmargin=0pt,labelsep=0pt,itemindent=2\parindent,labelwidth=2\parindent,listparindent=\parindent,label=(\roman*)]
\item Since the constraint set $\Phi(s,\omega)$ is convex and the objective function is strictly concave, the policy function $g_r(\omega, s)$ is single-valued and continuous in $\omega$. Let $h_s(\omega)\colonequals-(\delta/\beta)V_{\omega}(s, \omega)$ where $h_s\mathpunct{:}\Omega(s)\to[\lambda_{\min}(s), \lambda_{\max}(s)]$ with $\lambda_{\min}(s) =\max\{0,(\delta/\beta)((1-s)/s)-1\}$.  Let $h_s^{-1}\mathpunct{:}[\lambda_{\min}(s), \lambda_{\max}(s)]\to\Omega(s)$ be its inverse. By the concavity of the frontier $V(s,\omega)$ in~$\omega$, $h_s^{-1}(\lambda)$ is strictly increasing in $\lambda$ for $\lambda>\lambda_{\min}(s)$. Suppose first that $\omega\geq\omega^{0}(s)$. Hence, from~\eqref{eqn:pk}, $f(s,\omega)=1-\exp(\omega)$. Since $g_r(s,\omega)=\max\{\omega_{\min}(r), \min\{\omega_{\max}(r), h_r^{-1}(\mu(s,\omega))\}\}$, substituting into~\eqref{eqn:icr}, there is a unique value (possibly zero) of $\mu$ that satisfies the constraint. If $\mu(s,\omega)=0$, then $g_r(s,\omega)=\omega^{0}(r)$ for each~$r$. If $\mu(s,\omega)>0$, then $\mu(s,\omega)$ is strictly increasing in~$\omega$ since $f(s,\omega)$ is strictly decreasing in~$\omega$ and $h_r^{-1}(\mu)$ is increasing in $\mu$. Thus, $g_r(s,\omega)$ is strictly increasing in~$\omega$ for $g_r(s,\omega)\in(\omega^{0}(r), \omega_{\max}(r))$. If $\omega< \omega^{0}(s)$, then $\lambda(s,\omega)=0$ and hence, since $f(s,\omega)$ is independent of~$\omega$, $g_r(s,\omega)$ is also independent of~$\omega$.
\item Consider states $s(i)>s(i\shortminus 1)$, $i=2,\ldots,I$. For brevity, write $g_r(i, \omega)$ for $g_r(s(i), \omega)$ and $g_i(s,\omega)$ for $g_{r(i)}(s,\omega)$ etc. We first show that $\mu(i,\omega)\geq\mu(i\shortminus 1, \omega)$ for $\omega\in[\omega_{\min}(i\shortminus 1), \omega_{\max}(i)]$ with a strict inequality unless $\mu(i,\omega)=\mu(i\shortminus 1, \omega)=0$. Suppose to the contrary that $\mu(i\shortminus 1, \omega)\geq \mu(i,\omega)>0$. It follows from~\eqref{rmu1} that $g_r(i\shortminus 1, \omega)\geq g_r(i,\omega)$. Using~\eqref{eqn:icr} and $\hat{\upsilon}(s)=\log(s)+\beta\sum_r\pi(r)\log(1-r)$, gives
    \begin{gather*}\label{eqn:muchange}
    \log(f(i\shortminus 1, \omega))-\log(f(i,\omega)) = \left(\hat{\upsilon}(i\shortminus 1)-\hat{\upsilon}(i)\right)+  \beta \sum\nolimits_{r} \pi(r)\left(g_r(i,\omega)-g_r(i\shortminus 1, \omega)\right).
    \end{gather*}
    Since $\hat{\upsilon}(i\shortminus 1)-\hat{\upsilon}(i)<0$ and $g_r(i,\omega)-g_r(i\shortminus 1, \omega)\leq0$, $f(i,\omega)>f(i\shortminus 1, \omega)$ and $\log(1-f(i\shortminus 1, \omega))>\log(1-f(i,\omega))\geq\omega$. Hence, $\lambda(i\shortminus 1, \omega)=0\leq \lambda(i,\omega)$. However, since $\lambda(i,\omega)\geq \lambda(i\shortminus 1, \omega)$ and $\mu(i\shortminus 1, \omega)\geq \mu(i,\omega)$, it follows from~\eqref{rmu1} that $f(i\shortminus 1, \omega)\geq f(i, \omega)$, a contradiction. Hence, if $\mu(i\shortminus 1, \omega)=\mu(i,\omega)=0$, then $g_r(i\shortminus 1, \omega)=g_r(i,\omega)=\omega^{0}(r)$ independent of~$s$. If, however, $\mu(i\shortminus 1, \omega)>0$, then it follows from~\eqref{rmu1} that $g_r(i\shortminus 1, \omega)< g_r(i, \omega)$ for $\omega\in [\omega_{\min}(i\shortminus 1), \omega_{\max}(i)]$. By Assumption~\ref{ass:LS}, $\mu(1, \omega^{0}(1))=0$ and by Assumption~\ref{ass:nofb}, $\mu(I, \omega^{0}(I))>0$. Since $\mu(s,\omega)$ is increasing in~$\omega$, $\mu(I, \omega^{0}(I))>0$ and $\mu(I, \omega)>\mu(1, \omega)$ for $\omega\in(\omega^{0}(1), \omega_{\max}(I))$. Hence, from~\eqref{fp1}, $V_\omega(r, g_r(I, \omega))<V_\omega(r, g_r(1, \omega))$ and therefore, from the strict concavity of $V(r, \omega)$ in~$\omega$ for $\omega>\omega^{0}(1)\geq \omega^{0}(r)$, it follows that $g_r(I, \omega)>g_r(1, \omega)$.
    \\
    Next, if $g_{i}(x)\leq\omega^{0}(i\shortminus 1)$ or $g_{i\shortminus 1}(x)\geq\omega_{\max}(i)$, then $g_{i\shortminus 1}(x)\geq g_i(x)$.  Therefore, suppose $g_{i}(x), g_{i\shortminus 1}(x)\in(\omega^{0}(i\shortminus 1), \omega_{\max}(i))$. We first show that $V_{\omega}(i\shortminus 1, \omega) \geq V_\omega(i,\omega)$ for $\omega\in(\omega^{0}(i\shortminus 1), \omega_{\max}(i))$. For $\omega>\omega^{0}(i\shortminus 1)$, it follows that $\lambda(i\shortminus 1,\omega)>0$ and since $\omega^{0}(i\shortminus 1)\geq \omega^{0}(i)$, $\lambda(i, \omega)> 0$. Therefore, $f(i,\omega)=f(i\shortminus 1, \omega)$. In this case, it follows from above that $\mu(i,\omega)\geq\mu(i\shortminus 1, \omega)$ with equality only if $\mu(i,\omega)=\mu(i\shortminus 1, \omega)=0$. Hence, it follows from~\eqref{rmu1} that $\lambda(i\shortminus 1, \omega)\leq \lambda(i, \omega)$ with strict inequality if $\mu(i,\omega)>0$. Using~\eqref{env1}, it follows that $V_{\omega}(i\shortminus 1, \omega) \geq V_\omega(i,\omega)$ with strict inequality if $\mu(i,\omega)>0$. For $g_{i}(x), g_{i\shortminus 1}(x) >\omega^{0}(i\shortminus 1)$, $\eta_{i}(x)=\eta_{i\shortminus 1}(x)=0$ and for $g_{i}(x), g_{i\shortminus 1}(x) < \omega_{\max}(i)$, $\xi_i(x)=\xi_{i\shortminus 1}(x)=0$. Hence, it follows from~\eqref{fp1} that $V_\omega(i,g_{i}(s,\omega))=V_\omega(i\shortminus 1, g_{i\shortminus 1}(s,\omega))$. Since $V_{\omega}(i\shortminus 1, \omega) \geq V_\omega(i,\omega)$, it follows from the concavity of $V(\cdot, \omega)$ in~$\omega$ that $g_{i\shortminus 1}(s,\omega)\geq g_{i}(s,\omega)$. The inequality is strict if $V_{\omega}(i\shortminus 1, \omega) > V_\omega(i,\omega)$ by the strict concavity of $V(\cdot, \omega)$ in~$\omega$. Since $\mu(I, \omega)>\mu(1, \omega)$ for $\omega\in(\omega^{0}(1), \omega_{\max}(I))$, $V_\omega(1,\omega)>V_\omega(I, \omega)$ and hence, $g_{1}(s, \omega)>g_{I}(s,\omega)$.

\item Since $\mu(1, \omega^{0}(1))=0$ and $f(1,\omega^{0}(1))=s(1)$, it follows that $g_r(1, \omega^{0}(1)) =\omega^{0}(r)$ for each~$r$. Since $\omega^{0}(r)>\omega_{\min}(r)$ for at least some~$r$, it follows that~\eqref{eqn:icr} is strictly slack and there is some $\omega^c>\omega^{0}(1)$ such that~\eqref{eqn:icr} is nonbinding with $g_r(1, \omega) =\omega^{0}(r)$ for each~$r$ and $\omega\in[\omega^{0}(1), \omega^c]$.

\item It follows from~\eqref{rmu1} that for $\omega=\omega^{\ast}(s)>\omega_{\min}(s)$, $\mu(s,\omega)=\lambda(s,\omega)$. In this case, $V_\omega(s,\omega^{\ast}(s))=V_\omega(r, g_r(s,\omega^{\ast}(s)))$ for $g_r(s,\omega^{\ast}(s))\in(\omega^{0}(r), \omega_{\max}(r))$, and, in particular, $g_s(s, \omega^{\ast}(s))=\omega^{\ast}(s)$, so that $\omega^{\ast}(s)$ is a fixed point of the mapping $g_s(s,\omega)$. Equally, for $\omega<\omega^{\ast}(s)$, it follows from~\eqref{rmu1} that $\mu(s,\omega)>\lambda(s,\omega)$, so that from the concavity of the frontier, $g_s(s,\omega)>\omega^{\ast}(s)$. Likewise, for $\omega>\omega^{\ast}(s)$, it follows from~\eqref{rmu1} that $\mu(s,\omega)<\lambda(s,\omega)$, so that from the concavity of the frontier, $g_s(s,\omega)<\omega^{\ast}(s)$. If $\omega^{\ast}(s)=\omega_{\min}(s)$, then $f(s,\omega)=s$ and $\mu(s,\omega^{\ast}(s))=0$ by Assumption~\ref{ass:sustain}. Hence, $g_s(s,\omega^{\ast}(s))=\omega^{\ast}(s)$. Since $\omega^{0}(s)$ is decreasing in~$s$, it follows by Assumption~\ref{ass:nofb} that $\omega^{0}(I)<\omega^f(I)\leq\omega^{\ast}$.
    \qedhere
\end{enumerate}
\end{proof}

\begin{proof}[Proof of Lemma~\ref{lem:c}]\NoBreakPar
\begin{enumerate}[itemsep=1pt,align=left,leftmargin=0pt,labelsep=0pt,itemindent=2\parindent,labelwidth=2\parindent,listparindent=\parindent,label=(\roman*)]
\item For $\omega>\omega^{0}(s)$, $\lambda(s,\omega)>0$ and therefore, it follows from~\eqref{eqn:pk} that $f(s, \omega)=1-\exp(w)$. For $\omega=\omega^{0}(s)$, either $\lambda(s,\omega^{0}(s))>0$ or $\lambda(s,\omega^{0}(s))=0$. In either case, it follows from~\eqref{eqn:pk} or the definition of $\omega^{0}(s)$ that $f(s, \omega^{0}(s))=1-\exp(w^{0}(s))$. For $\omega<\omega^{0}(s)$, it follows that $\lambda(s,\omega)=0$. From~\eqref{rmu1}, let $\phi(s,\mu)=\min\{\delta(1+\mu)/(\beta+\delta(1+\mu)),s\}$  where $\phi(s,\mu)$ is increasing in $\mu$ with $\phi(s, 0)=c^{\ast}(s)$. Recall that $h_r^{-1}(\mu)$, defined in the proof of Lemma~\ref{lem:g}, satisfies $V_{\omega}(r,h_r^{-1}(\mu))=-(\beta/\delta)\mu$ where $g_r(s,\omega)=\max\{\omega_{\min}(r), \min\{\omega_{\max}(r), h_r^{-1}(\mu(s,\omega))\}\}$. Since $h_r^{-1}(\mu)$ is increasing in~$\mu$, it follows from~\eqref{eqn:icr} that when $f(s, \omega^{0}(s))=1-\exp(w^{0}(s))$, there is a unique value of $\mu$, say $\mu^{0}(s)$, that solves the constraint. Furthermore, $\omega^{0}(s)=\log(1-\phi(s, \mu^{0}(s)))$.

\item Since $\hat{\upsilon}(i)>\hat{\upsilon}(i\shortminus 1)$, it follows from Part~(i) that $\mu^{0}(i)\geq \mu^{0}(i\shortminus 1)$ with strict inequality if $\mu^{0}(i)>0$. Therefore, since $\phi(s,\mu)$ is strictly increasing in $\mu$ and independent of~$s$ for $\mu>0$, $c^{0}(i)\geq c^{0}(i\shortminus 1)$ with strict inequality if $\mu^{0}(i)>0$. By Assumption~\ref{ass:nofb}, $\mu^{0}(I)>0$ and by Assumption~\ref{ass:LS}, $\mu^{0}(1)=0$. Hence, $c^{0}(I)>c^{0}(1)$.
\item Lemma~\ref{lem:g} establishes that at the fixed point, $\omega^f(s)=\min\{\omega_{\max}(s), \omega^{\ast}(s)\}$. Hence, $f(s,\omega^f(s)) \leq c^{\ast}(s)$ with equality for $\omega^f(s)< \omega_{\max}(s)$. \qedhere
\end{enumerate}
\end{proof}

\begin{proof}[Proof of Proposition~\ref{prop:convergence}]
Using the properties of $g_r(x)$ from Lemma~\ref{lem:g} and the argument in the text, it follows that there is an $k\geq1$ and $\epsilon>0$ such that $P^k(x, \{x_{0}\})>\epsilon$ for each $x\in\mathcal{X}$ and any $x_0$. Hence, Condition~{\bf M} of \citet[][page 348]{Stokey-Lucas89} is satisfied. Therefore, Theorem~11.12 of~\citet{Stokey-Lucas89} applies and there is strong convergence. Nondegeneracy with $\abs{E}>I$ follows from Assumption~\ref{ass:nofb}. The finiteness of the return times follows from Lemma~\ref{lem:g}(iii) and the finiteness of $\mathcal{I}$. The relationship between the probability mass and the expected return times and the pointwise convergence is standard \citep[see, for example, Theorems~10.2.3 and~13.1.2,][]{Meyn-Tweedie09}.
\end{proof}

\begin{proof}[Proof of Proposition~\ref{prop:Ross}]\NoBreakPar
\begin{enumerate}[itemsep=1pt,align=left,leftmargin=0pt,labelsep=0pt,itemindent=2\parindent,labelwidth=2\parindent,listparindent=\parindent,label=(\roman*)]
\item Since $q_+^k(x, x^\prime)=\varsigma(\gamma)\bar{\gamma}^{-k}(\bar{\gamma}/\gamma)q^k(x, x^\prime)$, summing over $x^\prime$ and $\gamma$, the unadjusted bond prices are $p^k_{+}(x)=\bar{\gamma}^{-k}p^k(x)$. Hence, the yields satisfy $y^k_{+}(x) = y^k(x)+\log(\bar{\gamma})$.
\item It is a standard result \citep[see, for example,][]{Martin-Ross19} that $\lim_{k\to\infty}y^k(x)=\mathbb{E}_\phi[\log(m(x,x^\prime))]= \log(\varrho)$, where $\mathbb{E}_\phi$ is the expectation taken over the invariant distribution of $x$ and $\varrho$ is the Perron root of the matrix $Q$. Taking logs in equation~\eqref{SDF}, $\log(m(x,x^\prime))= \log(\beta) +\log(c(x)) -\log(1-c(x^\prime))$. Using equations~\eqref{rmu1} and~\eqref{eqn:update}, assuming the nonnegativity constraints and the upper bound constraint do not bind, gives $\log(c(x^\prime))-\log(1-c(x^\prime)) = -\log(\beta/\delta) +\log(\nu(x^\prime))-\log(\nu(x))$, where $\nu(x)=1+\mu(x)$. Therefore, $\log(m(x,x^\prime))=\log(\delta) + \log(c(x)) - \log(c(x^\prime))+\log(\nu(x^\prime))-\log(\nu(x))$. Taking expectations at the invariant distribution, $\mathbb{E}_\phi[\log(m(x,x^\prime))]= \log(\delta)$. Hence, $\varrho=\delta$ and $\lim_{k\to\infty}y^k_+(x)=\log(\delta)+\log(\bar{\gamma})$.

\item Recall that $m(x,x^\prime)= m((s,d), (r, b_r(d)))=\beta s(1-d)/(1-r(1-b_r(d)))$. Since $b_r(d)$ is increasing in~$d$ by Corollary~\ref{cor:debtf}, it follows that $m(x,x^\prime)$ is decreasing in~$d$. The price of a one-period discount bond in state $(s,d)$ is $p^1(s,d) = \sum_{r}\pi(r) m((s,d), (r, b_r(d)))$, which is also decreasing in $d$. Making the induction hypothesis that the price of a $k$-period discount bond is decreasing in $d$, $p^{k+1}(s,d) = \sum_{r}\pi(r)m((s,d), (r, b_r(d)))p^{k}(r,b_r(d))$. Since $p^k(s,d)$ and $m((s,d), (r,b_r(d)))$ are positive and decreasing in~$d$, and $b_r(d)$ is increasing in~$d$, it follows that $p^{k+1}(s, d)$ is decreasing~$d$. Hence, the conditional yield $y^k(s,d)=-(1/k)\log(p^k(s,d))$ is increasing in~$d$ for each $s$ and~$k$.

\item From Corollary~\ref{cor:debtf}, the fixed points of the mappings of $b_r(d)$ are $d^\ast(r)$ when the upper bound constraint does not bind, and the consumption share is at the first best at these fixed points. Hence, $m((s, d^\ast(s)), (s, d^\ast(s))) = \delta$. By Lemma~\ref{lem:g}, the consumption share of the old decreases with~$r$. Hence, $m((1, d^\ast(1)), (r, b_r(d^\ast(1)))) \geq \delta$ with a strict inequality for some~$r$. Taking expectations, the bond price $p^1(1, d^\ast(1)) > \delta$ and consequently, the yield $y^1(1, d^\ast(1))<-\log(\delta)$. Since $y^\infty=-\log(\delta)$, $y^\infty-y^1(1,d^\ast(1))>0$. Likewise, it can be checked that $m((I, d^\ast(I)), (r, b_r(d^\ast(I)))) \leq \delta$ with a strict inequality for some~$r$, which shows that $y^\infty-y^1(I,d^\ast(I))<0$.

\item By definition $\Upsilon=\log(\psi_{\max}/\psi_{\min})$ where $\psi_{\max}$ and $\psi_{\min}$ are the maximum and minimum values of the eigenvector of the matrix $Q$. Using~\eqref{rmu1} and~\eqref{eqn:update} and assuming the nonnegativity and upper bound constraints do not bind, $m_B(x,x^\prime) = \nu(x^\prime)/\nu(x)$. Since $m_A(x, x^\prime)=\delta f(x)/f(x^\prime)$, the eigenvector $\psi(x)=f(x)/\nu(x)$. Since $f(x^\prime)=\delta\nu(x)/(\beta\nu(x)+\delta\nu(x^\prime))$, it follows that $\psi(x^\prime) = \delta/(\beta\nu(x)+\delta\nu(x^\prime))$. The maximum value of $\psi(x^\prime)$ occurs when $\nu(x)=\nu(x^\prime)=1$, in which case $\psi_{\max}=\delta/(\beta+\delta)$. The minimum value occurs when $\nu(x)=\nu(x^\prime)=\nu_{\max}$, in which case $\psi_{\min}=\delta/((\beta+\delta)\nu_{\max})$. Hence, $\Upsilon=\log(\psi_{\max}/\psi_{\min})=\log(\nu_{\max})$. It is easily checked that $\nu(s,d)$ is increasing in~$d$ with $\nu(s, d^{0}(s))$ increasing in~$s$, so that for $(s,d)\in E$, $\nu_{\max}=\nu(I, d^\ast(I))$. \qedhere
\end{enumerate}
\end{proof}

\begin{proof}[Proof of Proposition~\ref{prop:mrp}]
Since $R(x, x^\prime)= rb_r(d)/(s\mbox{BR}(d))$, the expected return $\bar{R}(x)= \sum_r\pi(r)rb_r(d)/(s\mbox{BR}(d))$. The risk-free rate is $R^f(x) = (\sum_r q(x,x^\prime))^{-1}$ where $q(x, x^\prime)=\pi(r) \beta s(1-d)/(1-r(1-b_r(d))$. Therefore, $\bar{R}(x)/R^f(x)$ is independent of~$s$. Since the risk-adjusted return on any asset is equal to the risk-free return, it follows that $\mbox{MRP}(d)=-\cov(m(x,x^\prime), R(x,x^\prime))$ where $m(x,x^\prime)=q(x,x^\prime)/\pi(r)$. From Corollary~\ref{cor:debtf}, $b_r(d)$ is increasing in~$r$ and hence, $R(x,x^\prime)$ is increasing with~$r$. From Lemma~\ref{lem:g}, old consumption ($1-r(1-b_r(d))$) falls with~$r$ and hence, $m(x,x^\prime)$ is increasing with~$r$. By Assumption~\ref{ass:nofb}, risk sharing is incomplete, and hence, the covariance term is positive and $\mbox{MRP}(d)<0$. That is, $\bar{R}(x)/R^f(x) <1$. With growth shocks, $R_+(x,x^\prime)=R(x, x^\prime)\gamma$ and $q_+(x,x^\prime)=\varsigma(\gamma)q(x,x^\prime)/\gamma$. Hence, $\bar{R}_+(x)=\bar{R}(x)(\mathbb{E}_\gamma \gamma)$, $R^f_+(x)=R^f(x)\bar{\gamma}$, and
\begin{gather*}\label{eqn:mrp}
\mbox{MRP}_+(d) = \frac{\bar{R}_+(x)-R^f_+(x)}{R^f_+(x)}= \left(\frac{\bar{R}(x)}{R^f(x)}-1\right)+\left(\frac{\bar{R}(x)}{R^f(x)}\right)\left(\frac{\mathbb{E}_\gamma \gamma}{\bar{\gamma}}-1\right).
\end{gather*}
Let $R^\ast_+(x, x^\prime)$ denote the returns with complete insurance. It is easy to check that
\begin{gather*}
R^\ast_+(x,x^\prime) =
\frac{\left(r - \frac{\delta}{\beta+\delta}\right)\gamma}{\delta\left(\sum_r \pi_r r - \frac{\delta}{\beta+\delta}\right)}.
\end{gather*}
The corresponding expected return is $\bar{R}^\ast_+(x)= (\mathbb{E}_\gamma \gamma)/\delta$. Likewise, the state price is $q^\ast_+(x,x^\prime)=\delta\varsigma(\gamma)\pi(r)/\gamma$, so that the risk-free return is $R^{f\ast}_+=\bar{\gamma}/\delta$.
Hence, the corresponding multiplicative risk premium is $\mbox{MRP}^\ast= (\mathbb{E}_\gamma \gamma-\bar{\gamma})/\bar{\gamma}$.
Since the arithmetic mean is larger than the harmonic mean, $\mbox{MRP}^\ast>0$. Substituting into the equation above gives
$\mbox{MRP}_{+}(d) = \mbox{MRP}(d) + \alpha(d) \mbox{MRP}^\ast$, where $\alpha(d) = \bar{R}(x)/R^f(x)$, as required.
\end{proof}

\begin{proof}[Proof of Proposition \ref{prop:canonic}]\NoBreakPar
\begin{enumerate}[itemsep=1pt,align=left,leftmargin=0pt,labelsep=0pt,itemindent=2\parindent,labelwidth=2\parindent,listparindent=\parindent,label=(\roman*)]
\item Since the probability of endowment state~1 is $\pi$ and debt is reset to the regeneration levels $d^0(s)$ after endowment state~1 has occurred, the probability that the state $(s,d^0(s))$ occurs is $\phi(s, d^0(s))=\pi(s)\pi$, irrespective of the date or history. Therefore, $T$ periods after such a resetting, the distribution function is:
\begin{gather*}
\phi_{T}(s,{d}^{(n)}(s))=\phi(s,{d}^0(s))(1-\pi)^{n}\quad
\text{for}\quad n=0,1,2,\ldots,T-1,
\end{gather*}
with $\phi_{T}(s,{d}^{(T)}(s))=\pi(s)(1-\pi)^T$. Taking the limit as $T\to\infty$ gives the invariant distribution $\phi$ described in the text.

\item By Proposition~\ref{prop:Ross}, $\Upsilon=\log(\nu_{\max})$. The value of $\nu_{\max}$ can be found from the fixed point of the mapping $b_2(d)$, which occurs at $d=d^\ast(2)$. From the first-order condition~\eqref{rmu1}, $\log(\nu_{\max})=\log(\delta/\beta)+\log((s(1)(1-b_1(d^\ast(2))))^{-1}-1)$. Since the participation constraint is binding when $d=d^\ast(2)$ and $b_2(d^\ast(2))=d^\ast(2)$, $b_1(d^\ast(2))$ can be found by solving:
\begin{gather*}\label{eq:PCbind}
\log (1-d^{\ast }(2))+\beta \left( \pi \log \left( 1-s(1)+s(1)b_1(d^{\ast
}(2))\right) +(1-\pi) \log \left(1-s(2)+s(2)d^{\ast }(2))\right)\right) \\
\qquad =\beta \left(\pi \log \left(1-s(1)\right) +(1-\pi)\log \left(1-s(2)\right)\right).
\end{gather*}
Since $s(1)=\kappa-\epsilon(1-\pi)/\pi$ and $s(2)=\kappa+\epsilon$, setting $\chi=1-s(1)(1-b_1(d^\ast(2)))$ and using $d^\ast(2)=1-\delta/(s(2)(\beta+\delta))$ gives the result in the text. \qedhere
\end{enumerate}
\end{proof}

\section*{Data Availability and Replication Files}

The code for replicating the figures in this article and the Supplemental Material, together with information about the Luxembourg Income Study Database, can be found in \citet{LRW24} in the Harvard Dataverse, \url{doi.org/10.7910/DVN/XDBGVY}.

\eject


\ifx\undefined\bysame
\newcommand{\bysame}{\hskip.3em \leavevmode\rule[.5ex]{3em}{.3pt}\hskip0.5em}
\fi

\clearpage
\newgeometry{margin=1in,top=1.35in,bottom=1.2in,headheight=0.85in, heightrounded}
\raggedbottom
\pagenumbering{arabic}
\renewcommand*{\thepage}{S\arabic{page}}
\urlstyle{tt}
\renewcommand{\thefootnote}{\fnsymbol{footnote}}

\section*{Supplemental Material\footnote[1]{Replication files are available at \url{http://dx.doi.org/10.7910/DVN/XDBGVY} \citep{LRW24sm}.}}\label{app:B}

\renewcommand{\thefootnote}{\arabic{footnote}}
\setcounter{footnote}{0} 

This supplemental material consists of six parts. Part~\ref{app:SF} provides evidence on the relative income of the young and the old, referred to in footnote~\ref{fn:LIS} in the Introduction, for six \ac{OECD} countries. Part~\ref{app:SA} provides proofs of Propositions~\ref{prop:st} and~\ref{prop:determ} from Sections~\ref{sec:model} and~\ref{sec:bench} together with the proofs of Lemma~\ref{lemma:V_omega} from Section~\ref{sec:opt} and Corollary~\ref{cor:debtf} from Section~\ref{sec:debt}. Parts~\ref{app:SE}-\ref{app:SC} relate to the two-state example of Section~\ref{sec:example}. Part~\ref{app:SE} examines two alternative welfare measures, the insurance coefficient, and the consumption-equivalent welfare change, and compares some of their comparative static properties with those of the Martin-Ross measure. Part~\ref{app:SDemo} considers the impact of a one-period unexpected increase in population size and shows how the effect of this demographic shock is both amplified and prolonged compared to the complete insurance benchmark. Part~\ref{app:SC} presents the shooting algorithm that can be used to solve the two-state example under Assumption~\ref{ass:canonic}. Part~\ref{app:SD} describes the pseudocode for the numerical algorithms used in the paper.

\appendix
\counterwithin{figure}{section}
\counterwithin{equation}{section}


\section{Change in Relative Income of Young and Old}\label{app:SF}

Significant changes have occurred in the relative average disposal income of the young and the old over recent decades for several \ac{OECD} countries. In this part of the supplemental material, we document these changes for Germany, Italy, Norway, Spain, the United Kingdom and the United States using data from the \citet{LIS24}. The Luxembourg Income Study Database provides harmonized microdata collected from more than 50 countries. Data is grouped into eleven waves spanning 1976-2020. There is also pre-wave data available for some countries for the period up to 1975. Data is available on individual and household incomes. 

The six countries we consider have at least one dataset per wave. Apart from Norway and Spain, the other countries have data from the pre-1975 period. Some counties have annual datasets in each wave, and others have just one dataset per wave. To examine the relative income of the young generation, we compute the average net (of taxes and transfers) equivalized disposable income of households where the head of the household is aged 25-34 relative to the average disposal income for all age groups. Likewise, the relative income of the old generation is measured by the ratio of the average equivalized disposable income of households where the head of the household is aged 65-74 relative to the average disposable income for all age groups. With more than one dataset per wave, we create country-wave data by averaging the yearly averages. 

\begin{figure}[tb]
\begin{center}
\includegraphics{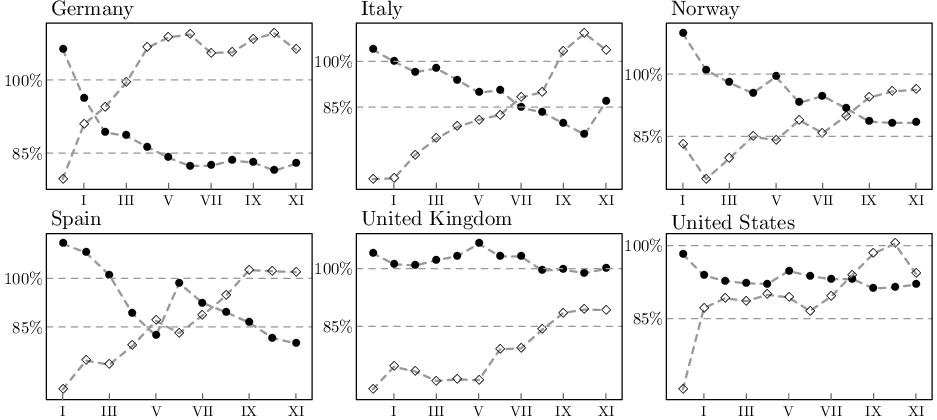}
\end{center}
\caption{Relative Income of Young and Old for six \ac{OECD} Countries across Waves of \ac{LIS} Data.}
\label{fig:LIS}
\begin{figurenotes}The relative income of the young is indicated by the $--\bullet --$ line and the $--\diamond --$ line is the relative income of the old. Income is the net average equivalized disposable income. The young correspond to households where the head of the household is aged 25-34 and the old correspond to households where the head of the household is aged 65-74. Income measures are relative to the equivalent average for all age groups. The 100\% and 85\% values are indicated for the purpose of comparison. The waves \Romannum{1}-\Romannum{11} correspond to the date ranges 1976-80, 1981-85, 1986-90, 1991-95, 1996-2000, 2001-04, 2005-07, 2008-10, 2011-13, 2014-16 and 2017-20. The historical data from pre-1975 is plotted as the first point for Germany, Italy, the United Kingdom and the United States. 
\end{figurenotes}
\end{figure}

Figure~\ref{fig:LIS} plots the relative income measures for the young and old generations for each of the six countries across the data waves~\Romannum{1}-\Romannum{11}. The pre-wave historical data is also plotted, where it is available. A value above 100\% indicates that the average of the income group (young or old) is above the average for the whole population. For example, in the United Kingdom, the young have a higher average than the average for the whole population, whereas in Germany, the old have been relatively better off for most of the period. In each country, there has been an improvement in
the average disposable income of the old compared to the average disposable income of the young over
the sample period. For example, the average disposable income
of the young in the United States has fallen from just below the national average at the start of the period to
just above 90\% of the national average at the end. Over the same period, the old have fared much better, with their average disposable income rising from approximately 70\% of the national average to just under 95\% of the national average. Moreover, the old overtook the young for the first time around the financial crisis of 2008. A similar pattern can be seen in Italy, Norway and Spain. A narrowing of the gap between the young and the old can also be observed in the United Kingdom. Germany is somewhat different, with the old overtaking the young as early as the 1980s. 

\section{Omitted Proofs}\label{app:SA}

\begin{proof}[Proof of Proposition~\ref{prop:st}] Here, we consider the case without growth shocks, so that $\bar{\gamma}=1$. The lifetime endowment utility of an agent born in a state with endowment share~$s$ is:%
\begin{gather*}
\hat{\upsilon}(s)\colonequals \log(s)+\beta\sum\nolimits_{r}\pi(r)\log(1-r)\text{.}%
\end{gather*}
Consider a small transfer $\mathrm{d}\tau(s)$ in state~$s$ from the young
to the old. The problem of existence of a sustainable allocation can be
answered by finding a vector of positive transfer $\mathrm{d}\tau$ such that there is a weak improvement over the lifetime endowment utility in all states and a strict improvement in at least one state. The change in the lifetime endowment utility induced by a vector $\mathrm{d}\tau$ is
nonnegative if%
\begin{gather}
-s^{-1}\mathrm{d}\tau(s)+\beta\sum\nolimits_{r}\pi(r)(1-r)^{-1}\mathrm{d}\tau(r)\geq0\text{.} \label{pc}%
\end{gather}
Rearranging~\eqref{pc} in terms of the marginal rates of substitution $\hat{m}(s,r)$, we have:%
\begin{gather*}
-\mathrm{d}\tau(s)+\sum\nolimits_{r}\pi(r)\hat{m}(s,r)\mathrm{d}\tau(r)\geq0\text{.}%
\end{gather*}
The
problem of existence can then be addressed by finding a vector $\mathrm{d}\tau>0$ that
solves:%
\begin{gather}
\left(  \hat{Q}-I\right)  \mathrm{d}\tau\geq0 \label{M},%
\end{gather}
where $I$ is the identity matrix and $\hat{Q}$ is the matrix of $\hat{q}(s,r)=\pi(r)\hat{m}(s,r)$. Equation~\eqref{M} has a well-known solution. Using the
Perron-Frobenius theorem, there exists a strictly positive solution for
$\mathrm{d}\tau$, provided that the Perron root, that is, the largest eigenvalue of
$\hat{Q}$, is greater than one. This is satisfied by Assumption~\ref{ass:sustain}, which guarantees the existence of positive transfers from the young to the old that improve the utility of each generation.
\end{proof}

\begin{proof}[Proof of Proposition~\ref{prop:determ}]
Let $\omega^c_0 = \omega^\ast$ and define the critical utility $\omega^c_1$ by:
\begin{gather*}
\log(1-\exp(\omega_1^c)) + \beta \omega^{\ast} = \hat{\upsilon}\colonequals \log(s)+\beta \log(1-s).
\end{gather*}
Define $\omega^c_n$ recursively by:
\begin{gather*}
\log(1-\exp(\omega_n^c))) + \beta \omega^c_{n-1} = \hat{\upsilon} \quad \mbox{for $n=2,3,\ldots,\infty$.}
\end{gather*}
From the strict concavity of the logarithmic utility function, $\omega^c_n>\omega^c_{n-1}$ and $\lim_{n\to\infty}\omega^c_n=\omega_{\max}=\log(s)$. Let $v^\ast =\log(1-\exp(\omega^\ast))+({\beta}/{\delta})\omega^\ast$. With some abuse of notation, let $V_n(\omega)$ denote the value function when $\omega\in(\omega^c_{n-1}, \omega^c_n]$. Hence,
\begin{gather*}
V_n(\omega) = \log(1-\exp(\omega)) + \frac{\beta}{\delta}\omega + \delta V_{n-1}\left(\frac{1}{\beta}\left(\hat{\upsilon}-\log(1-\exp(\omega))\right)\right).
\end{gather*}
For $\omega\leq\omega^\ast$, $\omega^\prime=\omega^\ast$. Therefore, $V(\omega)=v^\ast/(1-\delta)$ for $\omega\in[\omega_{\min}, \omega^\ast]$. For $\omega\in(\omega^\ast, \omega^c_1]$,
\begin{gather*}
V_1(\omega) = \log(1-\exp(\omega)) + \frac{\beta}{\delta}\omega + \frac{\delta}{1-\delta}v^\ast.
\end{gather*}
Differentiating the function $V_1(\omega)$ gives:
\begin{gather*}
\frac{\mathrm{d} V_1(\omega)}{\mathrm{d} \omega} = \frac{\beta}{\delta} - \frac{\exp(\omega)}{1-\exp(\omega)}.
\end{gather*}
Let $h(\omega) \colonequals \exp(\omega)/(1-\exp(\omega))$. Since $\omega>\omega^\ast$, $h(\omega) > {\beta}/{\delta}$ and ${\mathrm{d} V_1(\omega)}/{\mathrm{d} \omega}<0$. Note that $h(\omega^\ast)={\beta}/{\delta}$ and therefore, in the limit as $\omega\to\omega^\ast$, ${\mathrm{d} V_1(\omega)}/{\mathrm{d} \omega}=0$. Furthermore,
the function $V_1(\omega)$ is strictly concave because $h(\omega)$ is increasing in~$\omega$. Using this result, we can proceed by induction and assume $V_{n-1}(\omega)$ is
decreasing and strictly concave. Then, it is straightforward to establish that $V_n(\omega)$ is decreasing and strictly concave. Continuity follows since $\lim_{\omega\to \omega^{c}_{n}}V_{n+1}(\omega)=V_{n}(\omega^c_n)$. To establish
differentiability, we need to demonstrate that:
\begin{gather*}
\lim_{\omega\to \omega^{c}_{n}}\frac{\mathrm{d} V_{n+1}(\omega)}{\mathrm{d} \omega}=\frac{\mathrm{d} V_n(\omega^c_n)}{\mathrm{d} \omega}.
\end{gather*}
To show this, note that for $\omega \in (\omega^c_{n}, \omega^c_{n+1})$:
\begin{gather*}
\frac{\mathrm{d} V_{n+1}(\omega)}{\mathrm{d} \omega}  =  \frac{\beta}{\delta} - h(\omega)\left(1-\frac{\delta}{\beta}\frac{\mathrm{d} V_n(\omega^\prime)}{\mathrm{d} \omega}\right).
\end{gather*}
Starting with $n=1$, we have:
\begin{gather*}
\lim_{\omega\to \omega^{c}_{1}}\frac{\mathrm{d} V_{2}(\omega)}{\mathrm{d} \omega} =  \frac{\beta}{\delta} - h(\omega^c_1)\left(1-\frac{\delta}{\beta}\lim_{\omega\to \omega^{c}_{0}}\frac{\mathrm{d} V_{1}(\omega)}{\mathrm{d} \omega}\right).
\end{gather*}
Since $\lim_{\omega\to \omega^{c}_{0}}{\mathrm{d} V_{1}(\omega)}/{\mathrm{d} \omega}=0$, we have:
\begin{gather*}
\lim_{\omega\to \omega^{c}_{1}}\frac{\mathrm{d} V_{2}(\omega)}{\mathrm{d} \omega} = \frac{\beta}{\delta}-h(\omega^c_1)=\frac{\mathrm{d} V_{1}(\omega^c_1)}{\mathrm{d} \omega}.
\end{gather*}
Therefore, make the recursive assumption that $\lim_{\omega\to \omega^{c}_{n-1}}{\mathrm{d} V_{n}(\omega)}/{\mathrm{d} \omega} =
{\mathrm{d} V_{n-1}(\omega^c_{n-1})}/{\mathrm{d} \omega}$. In general, we have:
\begin{align*}
\lim_{\omega\to \omega^{c}_{n}}\frac{\mathrm{d} V_{n+1}(\omega)}{\mathrm{d} \omega} &=  \frac{\beta}{\delta} - h(\omega^c_n)\left(1-\frac{\delta}{\beta}
\lim_{\omega\to \omega^{c}_{n-1}}\frac{\mathrm{d} V_{n}(\omega)}{\mathrm{d} \omega}\right)\\
\frac{\mathrm{d} V_{n}(\omega^c_n)}{\mathrm{d} \omega} &= \frac{\beta}{\delta} - h(\omega^c_n)\left(1-\frac{\delta}{\beta} \frac{\mathrm{d} V_{n-1}(\omega^c_{n-1})}{\mathrm{d} \omega}\right).
\end{align*}
By the recursive assumption, these two equations are equal. Hence, we conclude that $V(\omega)$ is differentiable. In particular, repeated substitution gives:
\begin{gather*}
\frac{\mathrm{d} V_{n}(\omega^c_n)}{\mathrm{d} \omega}= \frac{\beta}{\delta}-\left(\frac{\delta}{\beta}\right)^{n-1}\prod\nolimits_{j=1}^{n}h(\omega^c_j).
\end{gather*}
Since $h(\omega^c_j)\in[\beta/\delta, s/(1-s))$, taking the limit as $n\to\infty$, or equivalently, $\omega\to{\omega_{\max}}$, gives $\lim_{\omega\to{\omega}_{\max}}{\mathrm{d} V(\omega)}/{\mathrm{d} \omega} = - \infty$. Equation~\eqref{eq:detpol} follows from above given that $\omega^\prime=\omega^\ast$ or satisfies $\log(1-\exp(\omega))+\beta \omega^\prime=\hat{\upsilon}$ if $\log(1-\exp(\omega))+\beta \omega^\ast<\hat{\upsilon}$.
\end{proof}

\begin{proof}[Proof of Lemma~\ref{lemma:V_omega}]\NoBreakPar
\begin{enumerate}[itemsep=1pt,align=left,leftmargin=0pt,labelsep=0pt,itemindent=2\parindent,labelwidth=2\parindent,listparindent=\parindent,label=(\roman*)]
\item Given the participation constraint of the old, $\omega\geq\omega_{\min}(s)=\log(1-s)$. The largest feasible $\omega$, $\omega_{\max}$, can be found by solving the set of equations $\log(1-\exp(\omega_{\max}(s))) +\beta \sum_r\pi(r)\omega_{\max}(r)=\hat{\upsilon}(s)$. There is a trivial solution where $\omega_{\max}(s)=\omega_{\min}(s)$ but by Proposition~\ref{prop:st} there is also a nontrivial solution with $\omega_{\max}(s)>\omega_{\min}(s)$ for each~$s$. Since the utility function is concave, this nontrivial solution is unique. Let $\Delta\varpi\colonequals \sum_r\pi(r)(\omega_{\max}(r)-\omega_{\min}(r))$. Then, $\omega_{\max}(s)$ can be found by solving the set of equations $\log(1-\exp(\omega_{\max}(s))) - \log(1-\exp(\omega_{\min}(s)))+\beta \Delta\varpi=0$.  Since $\Delta\varpi>0$ by Proposition~\ref{prop:st}, it follows that $\omega_{\max}(s)>\omega_{\min}(s)$ for all~$s$. A reduction in~$\omega$, enlarges the constraint set $\Phi(s,\omega)$ of Problem~\ref{eqn:p1} and hence, $V(s,\omega)$ is nonincreasing in~$\omega$. To show that $V(s,\omega)$ is concave in~$\omega$, consider the mapping $T$ where%
\begin{gather*}
(TJ)(s,\omega)=  \max_{\{c, \left(\omega_{r}\right)_{r\in \mathcal{I}}\}\in\Phi(s,\omega)} \enspace \frac{\beta}{\delta} \log(1-c) + \log(c) + \delta \sum\nolimits_{r} \pi(r) J(r,\omega_{r})  \text{.}%
\end{gather*}
Consider $J=V^{\ast}$, the first-best frontier. Proposition~\ref{prop:fbstm} established that $V^\ast(s,\omega)$ is concave in~$\omega$. It follows from the definitions of $T$ and $V^\ast$ that $TV^{\ast}(s,\omega)\leq V^{\ast}(s,\omega)$ because
$V^{\ast}(s,\omega)\leq v^{\ast}(s) + \delta \bar{V}^{\ast}$ and the mapping $T$ adds the participation constraints~\eqref{eqn:icr}. That is, $T^{n}V^{\ast}(s,\omega)\leq T^{n-1}V^{\ast}(s,\omega)$ for $n=1$. Now, make the induction
hypothesis that $T^{n}V^{\ast}(s,\omega)\leq T^{n-1}V^{\ast}(s,\omega)$ for
$n\geq2$ and apply the mapping $T$ to the two functions $T^{n}V^{\ast}(s,\omega)$ and $T^{n-1}V^{\ast
}(s,\omega)$. It is straightforward to show that $T^{n+1}V^{\ast}(s,\omega)\leq T^{n}V^{\ast}(s,\omega)$, because the constraint set is the same in both cases but, by the induction hypothesis, the objective is no
greater in the former case. Hence, the sequence
$T^{n}V^{\ast}(s,\omega)$ is nonincreasing and converges. Let $V^{\infty}(s,\omega)\colonequals\lim_{n\to\infty}T^nV^{\ast}(s,\omega)$ be the pointwise limit of the mapping $T$. We have that $V^{\infty}$ and $V$ are both fixed points of $T$. Since the mapping is monotonic, $T^n(V^\ast)\geq T^n(V)=V$. Hence, $V^\infty\geq V$ but, since $V$ is the maximum, we have that $V^\infty=V$. Starting from $V^\ast$, the objective function in the mapping $T$ is concave because $V^\ast$ and the utility function are concave. The constraint set $\Phi(s,\omega)$ is convex. Hence, $TV^\ast(s,\omega)$ is concave. By induction, $T^{n}V^{\ast}(s,\omega)$ and the limit function $V$ are also concave. Differentiability follows because the linear independence constraint qualification is satisfied on the interior of the domain. There are $I+1$ choice variables and $2I+3$ constraints. Since $V$ is concave, it is differentiable if the multipliers associated with the constraints are unique. The multipliers are unique if the linear independence constraint qualification is satisfied, that is, if the gradients of the binding constraints are linearly independent at the solution. Since $\omega_{\max}(s)>\omega_{\min}(s)$, the corresponding upper and lower bound constraints in~\eqref{eqn:ub} and~\eqref{eqn:lb} are not both active. Not all lower bound constraints in~\eqref{eqn:lb} are active because this would involve no transfers next period, which we know from Proposition~\ref{prop:st} is not optimal. For $\omega<\omega_{\max}(s)$, not all upper bound constraints in~\eqref{eqn:ub} are active because this would imply that the participation constraint of the young~\eqref{eqn:icr} does not bind, in which case some $\omega_r$ can be lowered below $\omega_{\max}(r)$ to raise the planner's payoff. If $\omega>\omega_{\min}(s)$, then the nonnegativity constraint~\eqref{eqn:nn} is not active. If $\omega=\omega_{\min}(s)$ and constraint~\eqref{eqn:nn} binds, then the young participation constraint is not active. Hence, in either case, there are at most $I+1$ active constraints. Since the marginal utility is strictly increasing,  $\beta>0$, and $\pi(s)>0$ for each~$s$, it can be checked that the matrix of active constraints has full rank. Hence, the multipliers are unique and $V(s,\omega)$ is differentiable in~$\omega$ on $(\omega_{\min}(s), \omega_{\max}(s))$. Since $V(s,\omega)$ is concave and differentiable in~$\omega$, it is also continuously differentiable in~$\omega$.

\item If constraint~\eqref{eqn:pk} binds, then the frontier is strictly downward sloping. Strict concavity of $V$ when $V$ is strictly downward sloping follows since $TV$ is strictly concave when the frontier is strictly downward sloping because of the strict concavity of the logarithmic utility function and the concavity of $V$. If $\omega^{0}(s)>\omega^{\ast}(s)$, then it would be possible to lower $\omega^{0}(s)$, increase consumption of the current young keeping all future promises the same without violating any constraints and increase the planner's utility. Assumption~\ref{ass:nofb} guarantees that $\omega^{0}(s)<\omega^{\ast}(s)$ for at least one state~$s$. If $\omega_{\min}(s)=\omega^{\ast}(s)$, then $\omega^{0}(s)=\omega^{\ast}(s)$ and hence, constraint~\eqref{eqn:icr} does not bind. Therefore, $\mu(s,\omega_{\min}(s))=0$. In this case, from equation~\eqref{fp1}, either $V_\omega(r, g_r(s,\omega))=0$ for $\omega=\omega_{\min}(s)$ or one of the constraints~\eqref{eqn:ub} or~\eqref{eqn:lb} binds and $g_r(s, \omega)$ is independent of~$\omega$. Therefore, in either case, $V_{\omega}(s,{\omega}^{0}(s))=({\beta}/{\delta}) -(\exp(\omega_{\min})/(1-\exp(\omega_{\min})))=({\beta}/{\delta}) -((1-s)/s)\leq 0$. If $\omega_{\min}(s)<\omega^{\ast}(s)$, then it cannot be that $\omega^{0}(s)=\omega_{\min}(s)$ because this implies $c(s,\omega^{0}(s))=s$, which, in turn, implies $\omega^{0}(s)\geq \omega^{\ast}(s)$ from equation~\eqref{eqn:pk}, a contradiction. The multiplier $\lambda(s,\omega)\geq0$ and since $V(s,\omega)$ is concave in~$\omega$, $\lambda(s,\omega)$ is increasing in~$\omega$. Let $\lambda_{\max}(s)\colonequals\lim_{\omega\to\omega_{\max}}\lambda(s,\omega)$, then   $\lim_{\omega\to{\omega}_{\max}}V_{\omega}(s,\omega) = -({\beta}/{\delta})\lambda_{\max}(s)$, where $\lambda_{\max}(s)\in\mathbb{R}_+ \cup \{\infty\}$.
\item Since $\omega_{\min}(s)=\log(1-s)$, it follows that  $\omega_{\min}(s(i))<\omega_{\min}(s(i\shortminus 1))$ for $i=2,\ldots,I$. It follows from the proof of Part~(i) that $\log(1-\exp(\omega_{\max}(s))) - \log(1-\exp(\omega_{\min}(s)))$ is independent of~$s$. Therefore, $\omega_{\max}(s(i\shortminus 1))>\omega_{\max}(s(i))$. Equally, $\Delta\varpi$ is bounded above since the endowment share is less than one in each state. Hence, $\omega_{\max}(s)<0$, otherwise the constraint of the young cannot be satisfied. Since $\omega^{0}(s)=\log(1-c^{0}(s))$, it follows from Lemma~\ref{lem:c}(ii) that $\omega^{0}(s(i))\leq\omega^{0}(s(i\shortminus 1))$  with strict inequality for at least one~$i$, and hence, $\omega^{0}(I)<\omega^{0}(1)$. \qedhere
\end{enumerate}
\end{proof}

\begin{proof}[Proof of Corollary~\ref{cor:debtf}]\NoBreakPar
\begin{enumerate}[itemsep=1pt,align=left,leftmargin=0pt,labelsep=0pt,itemindent=2\parindent,labelwidth=2\parindent,listparindent=\parindent,label=(\roman*)]
\item Since consumption $c=s(1-d)$ for a fixed debt $d$, the participation constraint~\eqref{eqn:icr} can be rewritten as $\log(1-d)+ \beta \sum\nolimits_{r} \pi(r) (\omega_{r}-\log(1-r)) \geq 0$. Hence, the multiplier on this constraint depends only on~$d$. If the constraint does not bind, then $g_r(s, \omega)=\omega^0(r)$ from Lemma~\ref{lem:g} and consequently, $b_r(d)=d(r, \omega^0(r))=d^0(r)$. There is a critical value of debt, $d^c$, below which the constraint does not bind. Since $b_r(d)=d^0(r)$ for $d\leq d^c$, it follows that $
d^c = 1-\exp(-\beta\sum_{r}\pi(r)(\log(1-r+r d^{0}(r)) - \log(1-r))) \in(d_{\min}, d_{\max})
$.
By Assumption~\ref{ass:LS}, $d^0(1)=0$ and hence, $d_{\min}=0$. Since $d_{\max}$ is the largest nontrivial solution of $\log(1-d_{\max}) +\beta \sum_{r}\pi(r)(\log(1-r+r d_{\max})-\log(1-r))=0$, it follows that $d_{\max}<1$. Let $\tilde{h}_s(d)\colonequals-(\delta/\beta)V_{\omega}(s, \log(1-s+sd))$. Then, for $d\in[d^c, d_{\max})$, $b_r(d)=\tilde{h}_r^{-1}(\mu(d))$. Since $\mu(d)$ is strictly increasing in~$d$ and $\tilde{h}_r^{-1}(\mu)$ is strictly increasing in~$\mu$, it follows that $b_r(d)$ is strictly increasing in~$d$ for $d>d^c$.    
\item Consider the value function
\begin{gather*}
\hat{V}(s,d)=\max_{(b_r)_{r\in \mathcal{I}}} (\beta/\delta)\log(1-s+sd)+\log(s(1-d)) + \delta \sum\nolimits_{r}\pi(r)\hat{V}(r,b_r)
\end{gather*}
subject to the constraint
\begin{gather*}
\log(1-d) + \beta\sum\nolimits_{r}\pi(r)\left(\log(1-r+rb_r)-\log(1-r)\right) \geq 0.
\end{gather*}
It follows from this maximization that $\hat{V}_d(s^\prime, d)>\hat{V}_d(s,d)$ for $s^\prime>s$. We want to show $b_{r^{\prime}}(d)\geq b_r(d)$ for $r^\prime>r$. Suppose to the contrary that $b_{r^{\prime}}(d)< b_r(d)$. It follows from the first-order condition that $\hat{V}_d(r^\prime, b_{r^{\prime}}(d)) \leq \hat{V}_d(r, b_r(d))$ with equality only if the multiplier $\mu(d)$ is not binding. But since $\hat{V}_d(r, b_{r^{\prime}}(d)) < \hat{V}_d(r^\prime, b_{r^{\prime}}(d))$ it follows that $\hat{V}_d(r, b_{r^{\prime}}(d))<\hat{V}_d(r, b_r(d))$, which from the concavity of $\hat{V}(r, d)$ in~$d$, implies $b_{r^{\prime}}(d)>b_r(d)$, a contradiction. 
\item Lemma~\ref{lem:g}(iv) and Assumption~\ref{ass:nofb} guarantee that $d^f(r(I))>d^c$. \qedhere
\end{enumerate}
\end{proof}

\section{Alternative Measures and Comparative Statics}\label{app:SE}

Recall that the Martin-Ross bound $\Upsilon$ is the logarithm of the ratio of the maximum and the minimum values of the leading eigenvector of the state-price matrix. The paper uses the Martin-Ross bound as a measure of risk because of its relationship to the variability in the yield curve. In this part of the supplemental material, we use the two-state case of Section~\ref{sec:example} to examine the comparative statics of the Martin-Ross measure and compare it to two alternative measures of risk sharing used in the literature. For the purposes of comparison with the alternative measures, we consider state prices as a function of the endowment state and the promised utility, that is, $x=(s,\omega)$, as in Sections~\ref{sec:opt} and~\ref{sec:conv} of the paper. The comparative statics considered are related to the properties of the endowment process and the discount factors. In this comparative static exercise, we change the value of the parameter of interest holding all other parameters at the values given in Example~\ref{exmpl:canon} ($\delta=\beta=\exp(-\sfrac{1}{75})$, $\pi=\sfrac{1}{2}$, $\kappa=\sfrac{3}{5}$, and $\epsilon=\sfrac{1}{10}$).\footnote{%
For the comparative statics with respect to $\kappa$ and $\epsilon$, the invariant distribution remains geometric as the parameter changes. However, the invariant distribution is not geometric when the discount factors are low. In this case, we can no longer  use the shooting algorithm described in Section~\ref{sec:example} and instead implement
an algorithm based on a value function iteration method (see, Part~\ref{app:SD} of this supplemental material for a description).
}

\begin{figure}[tb]
\begin{center}
\includegraphics{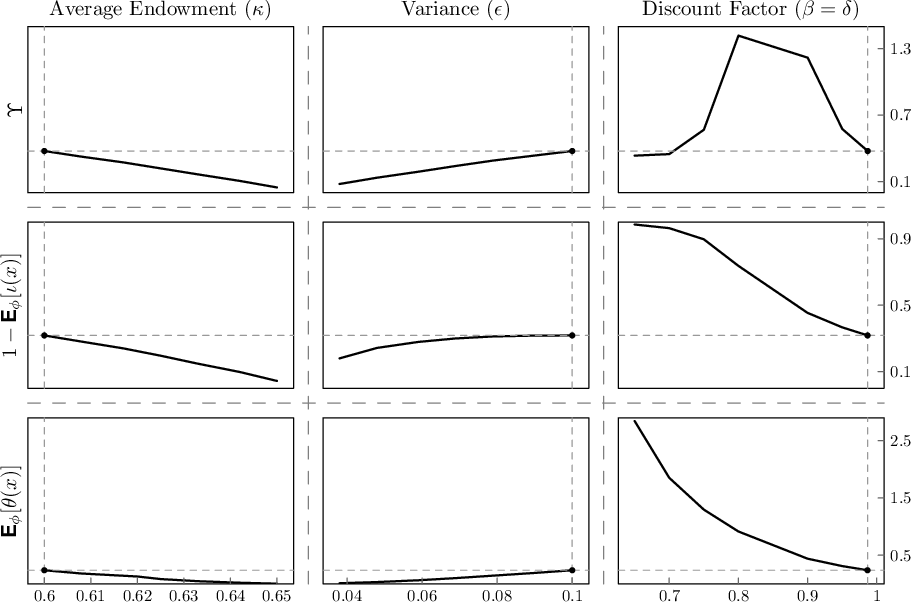}
\end{center}
\caption{Comparative Statics on the Martin-Ross Bound $\Upsilon$, the Expected Insurance Coefficient and the Expected Consumption Equivalent Welfare Change.}
\label{fig:comp_welfare}
\begin{figurenotes}
The top row illustrates the bound $\Upsilon$. The middle row illustrates one minus the average insurance coefficient $1-\mathbb{E}_{\phi}[\iota(x)]$. The bottom row illustrates the  average consumption equivalent welfare change $\mathbb{E}_{\phi}[\theta(x)]$. The columns correspond to changes in the average endowment share of the young $\kappa$, the variance of the endowment share of the young $\epsilon$ and the common discount factor $\beta=\delta$. The dot in each panel indicates the parameter values given in Example~\ref{exmpl:canon}. 
\end{figurenotes}
\end{figure}

We consider changes in the values of three parameters. The average endowment share of the young ($\kappa$), the variance of the endowment share of the young ($\epsilon$), and the discount factor $\delta$, holding $\beta=\delta$. The top row of Figure~\ref{fig:comp_welfare} plots the bound $\Upsilon$ against $\kappa$, $\epsilon$ and $\delta$. A larger $\kappa$ corresponds to a larger average endowment share of the young, while a smaller $\epsilon$ corresponds to reduced endowment risk. Increasing $\kappa$, or reducing $\epsilon$, raises risk sharing as measured by a reduction in $\Upsilon$. For $\kappa$ above a critical value, or $\epsilon$ below a critical value, the first best is sustainable at the invariant distribution, in which case $\Upsilon=0$.\footnote{%
The critical values are $\kappa\approx0.6565$ and $\sigma\approx0.0243$.} For the range of parameters considered, the measure $\Upsilon$ is relatively insensitive to changes in $\kappa$ and $\epsilon$. The effect of changes in the discount factor on $\Upsilon$ is nonmonotonic when we consider sufficiently low values of $\delta$ for which Assumption \ref{ass:canonic} does not hold. When the discount factor is high, the invariant distribution has geometric probabilities as described in Part~(i) of Proposition~\ref{prop:canonic}. As the discount factor falls, either the current transfer falls, or the future promise increases to satisfy the participation constraint of the young in state~2. This change reduces $\omega^{0}(2)$. With $\omega^{\ast}(2)$ fixed, it effectively enlarges the ergodic set, increasing risk, as reflected by the rise of $\Upsilon$. As the discount factor falls further, the upper bound constraint becomes binding and it is no longer true that the future promise is reset to $\omega^0(1)$ as soon as state~1 occurs. Reversion to $\omega^0(1)$ occurs less frequently, and the invariant distribution has a positive probability mass at $\omega_{\max}(2)$. Although $\omega_0(2)$ falls with the discount factor, the range ${\omega}_{\max}(2)-\omega_0(2)$ decreases, implying that the bound $\Upsilon$ falls.

Two alternative measures used in the literature to assess how risk is shared are the \textit{insurance coefficient\/} \citep[see, for example,][]{Kaplan-Violante10sm} and the \textit{consumption equivalent welfare change\/} \citep[see, for example,][]{Songetal15sm}. Both measures are conditional on the given state~$x=(s,\omega)$. The insurance coefficient $\iota(x)$ is the fraction of the variance of the endowment shock that does not translate into a corresponding change in consumption. With i.i.d.\ shocks,
\begin{gather*}
1-\iota(x)=\frac{\cov\left(\log\left(f(r,g_r(x))\right), \log\left(r\right)\right)}
{\var\left(\log\left(r\right)\right)},
\end{gather*}
where $r$ is the endowment shock next period, $g_r(x)$ is the next period promise and $f(r,g_r(x))$ is the next-period consumption. The consumption equivalent welfare change relative to the first best is measured by solving the following equation in terms of~$\theta$:
\begin{gather*}
\begin{split}
V(s,\omega)=& \log\left((1-\theta)c^\ast(s)\right)+\frac{\beta}{\delta}\log\left((1-\theta)(1-c^\ast(s))\right)
\\ 
 & \enspace +\frac{\delta}{1-\delta}\sum\nolimits_{r}\pi(r)\left(\log\left((1-\theta)c^\ast(r)\right)
+\frac{\beta}{\delta}\log\left((1-\theta)(1-c^\ast(r))\right)\right),
\end{split}
\end{gather*}
where $c^\ast(s)$ is the first-best consumption share in the stationary state (after the initial period). The solution $\theta(x)$ measures the proportion by which the first-best consumption share needs to be reduced to match the optimal solution for each~$x$. To compare the measures $1-\iota(x)$ and $\theta(x)$ with $\Upsilon$, we compute their expected value at the stationary state using the invariant distribution $\phi(x)$. 

The average insurance coefficient measure $1-\mathbb{E}_{\phi}[\iota(x)]$ is plotted in the middle row of Figure~\ref{fig:comp_welfare} against $\kappa$, $\epsilon$ and $\delta$. The consumption equivalent welfare change measure $\mathbb{E}_{\phi}[\theta(x)]$ is plotted in the bottom row of Figure~\ref{fig:comp_welfare} against $\kappa$, $\epsilon$ and $\delta$. The comparative static properties of these measures are similar to those of the bound $\Upsilon$ for changes in $\kappa$ and $\epsilon$. There is a difference for low discount factors. For lower values of the discount factor, the Martin-Ross measure $\Upsilon$ falls with the discount factor because, in this case, the range of feasible promises narrows. 

\section{Demographic Shock}\label{app:SDemo} 

In this part of the supplemental material, we consider a one-period unexpected increase in population size, an \ac{MIT} shock, and examine the impulse response function. We consider the two-state case of Section~\ref{sec:example} and use the parameter values of Example~\ref{exmpl:canon}. We show that compared to the complete insurance benchmark, the effect of this demographic shock on consumption is amplified and prolonged. 

Consider a one-off, unexpected but permanent increase in population size at date~$T+1$. Suppose the cohort size is $N_t=1$ for $t\leq T$ and $N_t=1+\varepsilon>1$ for $t> T$. Assume that the endowment is proportional to the population size, so the total endowment is $e_t=e^y_t N_t+e^o_t N_{t-1}$. Furthermore, assume there is no growth in the total endowment, and that the planner weighs the utility of the young and the old by adjusting for the cohort size. Normalize the endowment so that $e_t=1$ for $t\leq T$, $e_{T+1}=e^y_{T+1} (1+\varepsilon) + e^o_{T+1}=1+\varepsilon e^y_{T+1}$, and $e_t=1+\varepsilon$ for $t\geq T+2$. The consumption share of the young is $c_t=N_tC_t/e_t$ where $C_t$ is the corresponding consumption level. At $t\not=T+1$, $c_t=C_t$, while $c_{T+1}=(1+\varepsilon)C_{T+1}/(1+\varepsilon e^y_{T+1})$. Likewise, the endowment share of the young is $s_t=N_te^y_t/e_t$, where $s_t=e^y_t$ for $t\not=T+1$ and $s_{T+1}=(1+\varepsilon)e^y_{T+1}/(1+\varepsilon e^y_{T+1})$.

First, consider the complete insurance benchmark where the nonnegativity constraint on transfers does not bind. In this case, analogous to equation~\eqref{tau_fbm} of the paper, the consumption share of the young is $c^{\ast}_t = \delta N_{t}/(\beta N_{t-1}+\delta N_t)$. Hence, $c^{\ast}=\delta/(\beta+\delta)$ for all $t\not=T+1$ and, at the time of the shock, $c^{\ast}_{T+1}=\delta(1+\varepsilon)/(\beta +\delta(1+\varepsilon))>\delta/(\beta+\delta)$. We need to check that the promise-keeping constraint is satisfied. Since the shock was not anticipated, the old at~$T+1$ were promised a consumption level of $\beta/(\beta+\delta)$. At~$T+1$, after the shock is realized, the consumption level of the old is $(1-c^{\ast}_{T+1})e_{T+1}$. Since the nonnegativity constraint does not bind, $e^y_t\geq \delta/(\beta+\delta)$, and hence, $e_{T+1}=1+\varepsilon e^y_{T+1} \geq (\beta+\delta(1+\varepsilon))/(\beta+\delta)$. Therefore, $(1-c^{\ast}_{T+1})e_{T+1}\geq \beta/(\beta+\delta)$ and the promise-keeping constraint is satisfied. 

Next consider the outcome when the planner respects the participation constraints of each generation. Let the state at~$T$ be $x_T=(s_T, \omega_T)$, where $s_T$ is the endowment share of the young and $\omega_T$ is the inherited promise. The promise to the old at~$T+1$, determined at~$T$, is $g_{s_{T+1}}(x_T)$, conditional on the endowment share at~$T+1$. The planner's problem at all~$t\not=T+1$ is the same as without the shock, and the optimal policy functions are $g_r(x)$ and $f(x)$, as described in Lemmas~\ref{lem:g} and~\ref{lem:c} of the paper. The planner's problem at~$T+1$ is slightly different. In particular, the planner chooses the consumption share of the young, $c$, and the future promises, $\omega_r$, to maximize
\begin{gather}
(1+\varepsilon) \log(c) + \left(\frac{\beta}{\delta}\right)\log(1-c) +\delta (1+\varepsilon) \sum\nolimits_r\pi(r)V(r, \omega_r), \tag{PD} \label{eqn:pd}
\end{gather}
subject to the participation constraint of the young:
\begin{gather}
\log(c)+ \beta  \sum\nolimits_r\pi(r) \omega_r \geq \log(s)+  \beta\sum\nolimits_r\pi(r)\log(1-r), \label{eqn:PCD}
\end{gather}
and the promise-keeping constraint:
\begin{gather}
\log\left((1-c)e\right) \geq g_s(x), \label{eqn:PKD}
\end{gather}
where $e=1+\varepsilon e^y_{T+1}$ is the aggregate income, and $s=(1+\varepsilon) e^y_{T+1}/e$ is the endowment share of the young. Note that the continuation value for the planner $V(r, \omega_r)$ is the value function derived in Lemma~\ref{lemma:V_omega} of the paper because, from~$T+2$ onward, the optimization problem has a constant population. Denote the solution to the maximization problem at~$T+1$ by the policy functions $\tilde{f}(s, g_s(x))$ and $\tilde{g}_r(s, g_s(x))$. When $\varepsilon=0$, these two policy functions coincide with the policy functions $f(s,g_s(x))$ and $g_r(s, g_s(x))$. For $\varepsilon>0$, more resources are available, and the objective in the optimization problem~\ref{eqn:pd} puts more weight on both the consumption share of the young and future generations. With more resources available, the promise-keeping constraint~\eqref{eqn:PKD} is relaxed with each young agent able to retain a slightly higher consumption level while maintaining the promise to the old. Hence, the participation constraint~\eqref{eqn:PCD} is relaxed, enabling a reduction in the promise to the old next period. That is $\tilde{f}(s, g_s(x)) \geq f(s, g_s(x))$ and $\tilde{g}_r(s,g_s(x))\leq g_r(s,g_s(x))$ with strict inequalities in some states. 

To construct the impulse response function, consider all possible sample paths following the shock. A sample path from~$T$ to~$T+t$ is a sequence of endowment states from~$T$ to~$T+t$. Denote this path by $s^{t\mid T}=(s_T, s_{T+1}, \ldots, s_{T+t})$. Following the shock at~$T+1$, the consumption share of the young is constructed iteratively from the policy functions $f$, $\tilde{f}$, $g$, and $\tilde{g}$. In particular, $c_{T+t}=\tilde{f}^t(s^{t\mid T}, x_T)\colonequals f(s_{T+t}, \tilde{g}^t(s^{t\mid T}, x_T)$, where $\tilde{g}^t(s^{t\mid T}, x_T)\colonequals g_{s_{T+t}}(s_{T+t-1}, \tilde{g}^{t-1}(s^{t-1\mid T}, x_T)$, and at~$T+1$, $\tilde{f}^1(s^{1\mid T}, x_T)=\tilde{f}(s_{T+1}, g_{s_{T+1}}(x_T))$. The probability of the sample path $s^{t\mid T}$ is $\pi(s^{t\mid T})=\prod_{\tau=1}^{t}\pi(s_{T+\tau})$. Hence, the expected consumption share of the young at~$T+t$ conditional on~$x_T$ is $\bar{c}_{T\!+\!t}(x_T)\colonequals\sum_{\mathcal{I}^t} \pi(s^{t\mid T})\tilde{f}^t(s^{t\mid T}, x_T)$. To compute the unconditional expected impulse response, we take the expectation over $x_T$ at the invariant distribution~$\phi$. Since each of the policy functions is monotone in~$\omega$, $\tilde{f}^t(s^{t\mid T}, x_T)\geq f^t(s^{t\mid T}, x_T)$ for each~$t$. That is, along each sample path the consumption following the shock is no lower than without the shock.  

Figure~\ref{fig:demo} uses the parameter values of Example~\ref{exmpl:canon} with a 1\% increase in the population size at~$T+1$ ($\varepsilon=0.01$). It plots the change in the expected consumption share of the young relative to the first-best outcome. At~$T$, the plot shows that the average consumption share is around 0.04\% higher than the first-best consumption share. This positive impact on the consumption share occurs because the participation constraint in state~2 binds at the optimum. As explained above, at~$T+1$, $c^\ast_{T+1}$ responds positively to the shock. With $\beta=\delta$ and $\varepsilon=0.01$, $c^{\ast}_{T+1}\approx 0.5025$. A 1\% shock to the population size increases the consumption share of the young by approximately 0.4975\%, but the increase occurs only in the period of the shock. The share reverts to the long-run value in the period following the shock. Figure~\ref{fig:demo} shows that the average consumption increases relative to the first-best outcome at~$T+1$. That is, the effect of the shock is amplified. 

Moreover, the effect of the shock is prolonged over several periods. This occurs because a positive shock to the population size relaxes the current promise-keeping constraint, leading to increased consumption share for the current young and reduced future promises. As a result, the consumption share of the young is increased in the next period, thereby propagating the impact of the temporary demographic shock across several periods. For sample paths where there is resetting, the average consumption share returns to its long-run value. However, there are other sample paths, occurring with a low probability, where there is no resetting and the average consumption remains above its average at the invariant distribution. Since such sample paths occur with vanishingly small probability, the effect of the shock goes to zero with probability one in the long run.

\begin{figure}[tb]
\begin{center}
\includegraphics{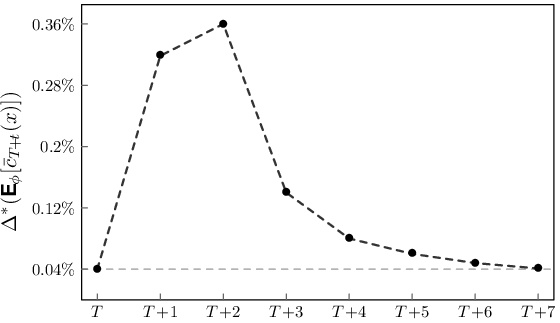}
\caption{Impulse Response Functions for a Demographic Shock that Increases Population Size.} \label{fig:demo}
\end{center}
\begin{figurenotes}
The figure plots the average consumption shares of the young in response to a demographic shock relative to the first-best consumption share: $\Delta^{\ast} (\mathbb{E}_{\phi}[\bar{c}_{T+t}(x)])$ where $\Delta^{\ast} (z_{t})=(z_{t}-c^{\ast}_{t})/c^{\ast}_{t}$. The figure uses the parameter values of Example~\ref{exmpl:canon} and a shock that increases the population size by 1\% at~$T+1$. The positive value at~$T+1$ indicates that the shock is amplified relative to the first best. The effect of the shock is prolonged but eventually the deviation of the average consumption from its long-run average returns to zero. In the first-best case, the effect of the shock lasts for only one period. 
\end{figurenotes}
\end{figure} 


\section{Shooting Algorithm}\label{app:SC}

In the two-state economy in Section~\ref{sec:example}, under Assumption~\ref{ass:canonic}, the optimal consumption share depends only on the number of previous state~2s. Let $c^{(n)}(s)$ denote the consumption share after $n$ consecutive state 2s and let $\mu^{(n)}$ denote the corresponding multiplier on the state~2 participation constraint. It follows from the first-order condition~\eqref{rmu1} and the updating rule~\eqref{eqn:update} that:
\begin{gather*}
c^{(n)}(1) = \frac{\delta}{\beta \nu^{(n-1)}+\delta}\quad\mbox{and}\quad c^{(n)}(2)=\frac{\delta \nu^{(n)}}{\beta \nu^{(n-1)}+\delta\nu^{(n)}}\quad \text{for $n=0,1,2\ldots,\infty$,}
\end{gather*}
where $\nu^{(n)}=1+\mu^{(n)}$ and $\nu^{(-1)}=1$. Let $\nu^{(\infty)}=\lim_{n\to\infty}\nu^{(n)}$. The participation constraint of the young binds in state~2. Hence,
\begin{gather}\label{eqn:D1}
\log\left(\tfrac{\delta\nu^{(n)}}{\beta \nu^{(n-1)}+\delta\nu^{(n)}}\right) +\beta\left(\pi\log\left(\tfrac{\beta\nu^{(n)}}{\beta\nu^{(n)}+\delta}\right)+(1-\pi)\log\left(\tfrac{\beta\nu^{(n)}}{\beta\nu^{(n)}+\delta\nu^{(n+1)}}\right)\right) = \upsilon(2).
\end{gather}
Since equation~\eqref{eqn:D1} holds in the limit,
\begin{gather}\label{eqn:D2}
\log\left(\tfrac{\delta}{\beta+\delta}\right) +\beta\left(\pi\log\left(\tfrac{\beta\nu^{(\infty)}}{\beta\nu^{(\infty)}+\delta}\right)+(1-\pi)\log\left(\tfrac{\beta}{\beta+\delta}\right)\right)
= \upsilon(2).
\end{gather}
Since $\upsilon(2) = \log(s(2))+\beta(\pi \log(1-s(1))+(1-\pi)\log(1-s(2)))$, equation~\eqref{eqn:D2} can be solved to give:
\begin{gather}
\nu ^{(\infty )}=\frac{\delta }{\beta }\left(\!-\!1\!+\!\left(\left( \tfrac{%
\delta }{\beta }\right)^{\frac{1\!-\!\pi}{\pi }} \left( \tfrac{%
\beta\!+\!\delta }{\delta }\right)^{\tfrac{1\!+\!\beta (1\!-\!\pi) }{\beta \pi }%
}\left( {s(2)}\right)^{\tfrac{1}{\beta \pi }}\left(
{1\!-\!s(2)}\right)^{\tfrac{1\!-\!\pi }{\pi }}{(1\!-\!s(1))}%
\right)^{\!-\!1}\right)^{\!-\!1}\text{.}
\label{nuinf}
\end{gather}
Using equations~\eqref{eqn:D1} and~\eqref{eqn:D2}, gives a second-order difference equation for $\nu^{(n)}$:
\begin{gather}\label{eqn:2orddiffv}
\textstyle
\nu^{(n\!+\!1)} \!=\!\textstyle  \tfrac{\beta}{\delta}\nu^{(n)}\!\left(\!-\!1 \!+\! \left(\tfrac{\beta\nu^{(n)}}{\beta\nu^{(n)}\!+\!\delta}\right)^{\!\frac{\pi}{1\shortminus\pi}}\!
\left(\tfrac{\beta\nu^{(\infty)}\!+\!\delta}{\beta\nu^{(\infty)}}\right)^{\!\frac{\pi}{1\shortminus\pi}}\!
\left(\tfrac{\beta\!+\!\delta}{\delta}\right)^{\!\tfrac{1}{\beta(1\shortminus\pi)}}\!\left(\tfrac{\beta\!+\!\delta}{\beta}\right)\!
\left(1\!+\!\tfrac{\beta}{\delta}\tfrac{\nu^{(n\!-\!1)}}{\nu^{(n)}}\right)^{\!-\!\tfrac{1}{\beta(1\shortminus\pi)}}\!\right)\!\text{.}
\end{gather}
It can be shown that the second-order difference equation in~\eqref{eqn:2orddiffv} has a unique saddle path solution. Since $\nu^{(-1)}=1$, the solution can be found by a \emph{forward shooting\/} algorithm to search for $\nu^{(0)}$ such that the absolute difference between $\nu^{(\infty)}$ (given in~\eqref{nuinf}) and $\nu^{(N+1)}$ (given in~\eqref{eqn:2orddiffv}) is sufficiently close to zero for $N$ sufficiently large.

\section{Pseudocode for Numerical Algorithms}\label{app:SD}

Algorithms are implemented in \ac{MATLAB}\textsuperscript{\textscale{0.75}{\textregistered}}. At each iteration, the optimization uses the nonlinear programming solver command {\tt fsolve} in Algorithm 1 and command {\tt fmincon} in Algorithm 2. Value function interpolation uses the spline method of the {\tt interp1} command. In a typical example, the value function converges within 15 iterations. Algorithm~3 shows how the invariant distribution is computed using the results of Proposition~\ref{prop:convergence} and Section~\ref{sec:conv}. Algorithms are expressed in terms of the promise~$\omega$. The policy functions in terms of debt $d$ are obtained by using the transform $\omega=\log(1-s+sd)$.    

\begin{algorithm*}[!htp]
\renewcommand{\thealgorithm}{1:}
\caption{Shooting Algorithm}
\begin{algorithmic}
\Procedure{}{}\Comment{Find $\nu^{(0)}=1+\mu^{(0)}$ in two state economy (Section~\ref{sec:example})}
\State $\mbox{target} \gets \nu^{(\infty)}$ \Comment{Use equation~\eqref{nuinf} in Part~\ref{app:SC}}
    \State $\mbox{tolerance} \gets \epsilon>0$\Comment{$\epsilon=10^{-10}$}
\Repeat
\State $\mbox{initialization} \gets \nu_{0}^{(0)}>0$
\State Compute $\nu_{0}^{(N)}$ for $N=20$ \Comment{Use equation~\eqref{eqn:2orddiffv} in Part~\ref{app:SC}}
\State $d\gets d(\nu_{0}^{(N)},\nu^{(\infty)})$ \Comment{$d(\nu_{0}^{(N)},\nu^{(\infty)})=\abs{\nu_{0}^{(N)}-\nu^{(\infty)}}$}
\Until{$d<\epsilon$}
\State $\nu^{(0)} \gets \nu_{0}^{(0)}$
\EndProcedure
\end{algorithmic}
\end{algorithm*}

\begin{algorithm*}[!htp]
\renewcommand{\thealgorithm}{2:}
\caption{Find Value and Policy Functions}
\begin{algorithmic}
\Procedure{}{}\Comment{Find solution to functional equation~\ref{eqn:p1}}
    \State $\Omega \gets [\omega_{\min}, \omega_{\max}]$ \Comment{$\omega_{\min}$ and $\omega_{\max}$ computed}
    \State $\mbox{gridpoints}\gets gp$ \Comment{Discretize $\Omega$: $gp=200$ Chebyshev interpolation points}
    \State $\mbox{tolerance} \gets \epsilon>0$\Comment{$\epsilon=10^{-6}$}
    \State $J\gets V^\ast$ \Comment{$V^\ast$ is first best}
\Repeat
\State Compute $TJ$ from $J$ \Comment{Use equation~\ref{eqn:p1} and interpolate}
\State $d\gets d(TJ,J)$ \Comment{$d(TJ,J)=\max_\omega\abs{TJ(\omega)-J(\omega)}$}
\State $J\gets TJ$
\Until{$d<\epsilon$}
\State $V \gets J$
\State Compute $g_r(s,\omega)$ and $f(s,\omega)$ \Comment{Using the function $V$ just computed.}
\EndProcedure
\end{algorithmic}
\end{algorithm*}

\begin{algorithm*}[!htp]
\renewcommand{\thealgorithm}{3:}
\caption{Computing the Invariant Distribution}
\begin{algorithmic}
\Procedure{}{}\Comment{Find invariant distribution for $x=(s,\omega)\in \mathcal{X}$}
\State $\mbox{bins} \gets N$ \Comment{Discretize $\Omega$: $N=10^{3}$ bins}
    \State $\mbox{initialization} \gets a_{0}$\Comment{$a_{0}$ is an arbitrary $I\times N$ probability vector.}
\State Compute $a=P a_{0}$ \Comment{Use the $I\times N$ transition matrix $P=P(x,x^\prime)$}
 \State $\mbox{tolerance} \gets \epsilon>0$\Comment{$\epsilon=10^{-8}$}
\Repeat
\State Compute $a=P a$ \Comment{$P((s,n),(r, n^\prime))=\pi(r)\mathbb{1}_{n^{\prime}}g_r(x)$ for $\omega$ in bin $n$.}
\State \Comment{$\mathbb{1}_{n^{\prime}}g_r(x)=1$ $\mathbin{\mathit{iff}}$ $g_r(x)$ is in bin $n^\prime$}
\State $d\gets d(P a,a)$ \Comment{$d(P a,a)=\max_{x}\abs{P(x,x^\prime)a(x)-a(x)}$}
\State $a\gets P a$
\Until{$d<\epsilon$}
\State $\phi \gets a$ \Comment{$\phi=P\phi$ is the invariant distribution}
\EndProcedure
\end{algorithmic}
\end{algorithm*} 

\newpage
\ifx\undefined\bysame
\newcommand{\bysame}{\hskip.3em \leavevmode\rule[.5ex]{3em}{.3pt}\hskip0.5em}
\fi

\end{document}